\numberwithin{equation}{section}
\newtheorem{theorem}{Theorem}[section]
\newtheorem{lemma}[theorem]{Lemma}
\newtheorem{proposition}[theorem]{Proposition}
\newtheorem*{remark}{Remark}
\newcommand{\mc}[1]{\mathcal{#1}}
\newcommand{\R}{\mathbb{R}}
\renewcommand{\C}{\mathbb{C}}
\newcommand{\Z}{\mathbb{Z}}
\newcommand{\D}{\mathbb{D}}
\renewcommand{\Im}{\mathop{\mathrm{Im}}}
\renewcommand{\epsilon}{\varepsilon}
\newcommand{\SigmaLE}[1][]{\Sigma_{\Lambda_{#1}}^E}
\newcommand{\pr}[1]{\mathbb{P}\left(#1\right)}
\newcommand{\ex}[2][]{\mathbb{E}_{#1}\left(#2\right)}
\newcommand{\cex}[2]{\mathbb{E}\left(#1 \vert #2\right)}
\DeclareMathOperator{\var}{Var}
\DeclareMathOperator{\diag}{diag}
\DeclareMathOperator*{\essinf}{ess\,inf}
\DeclareMathOperator{\mes}{mes}
\DeclareMathOperator{\sgn}{sgn}
\newcommand{\norm}[1]{\lVert#1\rVert}
\DeclareRobustCommand{\SkipTocEntry}[5]{}
\begin{document}

\title[On fluctuations and localization length for the Anderson model]{On fluctuations and
localization length for the Anderson model on a strip}

\author{Ilia Binder}
\address{Dept. of Mathematics, University of Toronto, Toronto, ON, M5S 2E4, Canada}
\email{ilia@math.utoronto.ca}

\author{Michael Goldstein}
\address{Dept. of Mathematics, University of Toronto, Toronto, ON, M5S 2E4, Canada}
\email{gold@math.utoronto.ca}

\author{Mircea Voda}
\address{Dept. of Mathematics, University of Toronto, Toronto, ON, M5S 2E4, Canada}
\email{mvoda@math.utoronto.ca}

\date{}

\begin{abstract}
    We consider the Anderson model on a strip. Assuming that
    potentials have bounded density with considerable tails we get a lower bound for
    the fluctuations of the logarithm of the  Green's function in a finite box.
    This  implies an effective estimate by $ \exp(CW^2) $ for the localization length
    of the Anderson model on the strip of width $ W $. The results are obtained, actually, for
    a more general model with a non-local operator in the vertical direction.
\end{abstract}

\maketitle

\tableofcontents

% % % % % % % % % % % % % % % % % % % % % % % % % % % % % % % % % % % %
\section{Introduction}

% % % % % % % % % % % % % % % % % % % % % % % % % % % % % % % % % % % %
We consider random operators on the strip $ \Z_W=\Z\times\{1,\ldots,W\} $ defined by 
\begin{equation*}
(H\psi)_n = -\psi_{n-1}-\psi_{n+1}+S_n\psi_n, 
\end{equation*}
where $ \psi\in l^2(\Z,\C^W)\equiv l^2(\Z_W) $, $ S_n=S+\diag(V_{(n,1)},\ldots,V_{(n,W)}) $, with $ 
S $ a Hermitian matrix and $ V_i $, $ i\in\Z_W $, i.i.d. random variables. We assume that $ V_i $ 
have bounded density function $ v $ and we let 
\begin{equation}
\label{eq:intro-densityub} A_0 := \sup_{x} v(x) < + \infty. 
\end{equation}
Furthermore we assume that 
\begin{equation}
\label{eq:intro-densitydecay} \pr{|V_i|\ge T} \le A_1/T, 
\end{equation}
for $ T\ge 1 $.

The problem of estimating the localization length for this model and for the random band matrix 
model is well-known. In the latter case a polynomial bound was established by Schenker 
\cite{Sch-09-Eigenvector}. Very recently, Bourgain \cite{Bou-13-lower} established a bound by $ 
\exp(CW(\log W)^4) $ for the Anderson model, provided that the potentials $ V_i $ have bounded 
density. We will obtain an explicit estimate for the localization length by a method different from 
\cite{Bou-13-lower}. Our approach is via explicit lower bounds for the fluctuations of the Green's 
function. This idea has been previously used by Schenker \cite{Sch-09-Eigenvector}, but our 
implementation is different.

We introduce some notation needed to state our results. Let $\Lambda\subset \Z_W $. For $ 
\Lambda_0\subset\Lambda $ we let $ \Lambda_0'=\Lambda\setminus \Lambda_0 $ and we use $ 
\partial_{\Lambda} \Lambda_0 $ to denote the boundary of $ \Lambda_0 $ relative to $ \Lambda $, 
which is the set of pairs $ (i,i') $ such that $ i\in\Lambda_0 $, $ i'\in \Lambda_0' $, and $ 
|i-i'|=1 $, where $ |j|=\max(|j_1|,|j_2|) $. If $ \Lambda=\Z_W $ we will just write $ 
\partial \Lambda_0 $. If $ (i,i')\in 
\partial_\Lambda \Lambda_0 $ we may also write $ i\in 
\partial_\Lambda \Lambda_0 $ and $ i'\in 
\partial_\Lambda \Lambda_0 $. By $P_\Lambda$ we denote the orthogonal projection onto the subspace 
of all vectors in $ \C^{\Lambda} $ vanishing off $\Lambda$. The restriction of $ H $ to $\Lambda$ 
with Dirichlet boundary conditions is the operator $H_\Lambda: \C^{\Lambda} \to \C^{\Lambda}$, 
defined by $ H_\Lambda:=P_\Lambda H P_\Lambda $. For $ E\subset\Z $ we use $ E_W $ do denote $ 
E\times\{1,\ldots,W\} $. We will use $ \Lambda_L(a) $ to denote $ [a-L,a+L]_W $. Finally, let 
\begin{equation*}
\SigmaLE:=\sum_{i,j\in 
\partial \Lambda,i_1<j_1}|G_{\Lambda}^E(i,j)|^2, 
\end{equation*}
where $ G_\Lambda^E=(H_\Lambda-E)^{-1} $. Note that for $ \Lambda=[a,b]_W$ the above sum is 
over $ i\in\{a\}_W $ and $ j\in\{b\}_W $.

Our estimate on the fluctuations of the resolvent, which will be proved in section 
\ref{sec:analysis}, is as follows.
\begin{theorem}
\label{thm:intro-var-log-Green} There exist  constants $C_0,  C_1=C_1(A_0,|E|,\norm{S}) $ such that
for any $ \Lambda=[a,b]_W $ we have 
\begin{equation*}
	\var(\log \SigmaLE) \ge (b-a-1) (\inf\nolimits_I v)^W, 
\end{equation*}
where $ I=[\pm \exp(CK),\pm \exp((C+C_0)K)] $ , with $ C\ge C_1 $. 
\end{theorem}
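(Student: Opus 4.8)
The plan is to attribute the fluctuations of $\log\SigmaLE$ to the $b-a-1$ interior columns $\{m\}_W$, $a<m<b$, revealed one at a time. Fix such an $m$, write $\xi_m=(V_{(m,1)},\dots,V_{(m,W)})$, and split $\Lambda$ into the left block $L=[a,m-1]_W$, the column $\{m\}_W$, and the right block $R=[m+1,b]_W$. The Schur complement (Feshbach) formula relative to $\{m\}_W$ gives
\[
G_\Lambda^E(a,b)=A_m\,\Gamma_m\,B_m,\qquad \Gamma_m:=G_\Lambda^E(m,m)=\bigl(\diag(\xi_m)-M_m\bigr)^{-1},
\]
with $A_m=G_L^E(a,m-1)$, $B_m=G_R^E(m+1,b)$ and $M_m=E-S+G_L^E(m-1,m-1)+G_R^E(m+1,m+1)$, all three depending only on the potentials off $\{m\}_W$; hence, conditionally on those, $\SigmaLE=\operatorname{tr}\bigl(B_m^*\Gamma_m^*A_m^*A_m\Gamma_m B_m\bigr)$ is an explicit function of the $W$ i.i.d.\ variables in column $m$. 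Revealing the columns in some order (say $\xi_a,\xi_b$ first, then the interior columns one by one, producing a filtration $(\mathcal F_j)_j$) and using $\var(\log\SigmaLE)=\sum_j\ex{\bigl(\cex{\log\SigmaLE}{\mathcal F_j}-\cex{\log\SigmaLE}{\mathcal F_{j-1}}\bigr)^2}$, the theorem reduces to: each interior column contributes at least $(\inf_I v)^W$ to this sum.

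For a single interior column the mechanism is that a column all of whose on-site energies are of size $e^{CK}$ acts as a barrier suppressing $G_\Lambda^E(a,b)$. Let $\mathcal B_m$ be the event $\{V_{(m,l)}\in I\text{ for all }l\}$; since the $V_{(m,l)}$ are i.i.d.\ with density $v$, $|I|\ge1$, and $\mathcal B_m$ is independent of $\mathcal F_{j-1}$, one has $\pr{\mathcal B_m\mid\mathcal F_{j-1}}=\bigl(\int_I v\bigr)^W\ge(\inf_I v)^W$. On $\mathcal B_m$ the diagonal dominates, which is where the hypothesis $C\ge C_1$ enters: $C_1=C_1(A_0,|E|,\norm S)$ is chosen so that, off a bad event, $\norm{M_m}\le e^{C_1K}$, hence $\norm{\diag(\xi_m)^{-1}M_m}\le e^{-(C-C_1)K}\le\tfrac12$ and $\norm{\Gamma_m}=\norm{\diag(\xi_m)^{-1}\bigl(I-\diag(\xi_m)^{-1}M_m\bigr)^{-1}}\le 2e^{-CK}$, so $\SigmaLE\le 4W\norm{A_m}^2\norm{B_m}^2e^{-2CK}$. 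Thus on $\mathcal B_m$ the value of $\log\SigmaLE$ lies at least of order $CK$ below $\cex{\log\SigmaLE}{\mathcal F_{j-1}}$ (which, $\mathcal B_m$ being rare, is essentially its value off $\mathcal B_m$), so revealing $\mathbf 1_{\mathcal B_m}$ shifts the conditional mean of $\log\SigmaLE$ by $\gtrsim CK\ge1$; the conditional law of total variance then yields $\ex{\bigl(\cex{\log\SigmaLE}{\mathcal F_j}-\cex{\log\SigmaLE}{\mathcal F_{j-1}}\bigr)^2}\ge\pr{\mathcal B_m}\pr{\mathcal B_m^{\,c}}(CK)^2\ge(\inf_I v)^W$ after absorbing the absolute constant into $C_0$.

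The main obstacle is that two steps must hold \emph{uniformly} over the potentials off $\{m\}_W$, which is delicate because those potentials have only polynomially decaying tails. First, ``$\diag(\xi_m)$ dominates $M_m$'' needs $\norm{M_m}$ controlled, but $M_m$ contains the interface self-energies $G_L^E(m-1,m-1)$, $G_R^E(m+1,m+1)$, which blow up whenever $E$ is near a Dirichlet eigenvalue of $L$ or $R$ concentrated at the interface; one works on the good event that these are $\le e^{C_1K}$ and discards the complement, which is legitimate since a conditional variance can only decrease when restricted to a sub-event (the increments being replaced by a smaller nonnegative quantity) and the complement has tiny probability by the Wegner-type bound $\pr{\norm{G_L^E(m-1,m-1)}\ge t}\lesssim A_0W/t$, a consequence of the bounded density \eqref{eq:intro-densityub}; care is needed so that the good event is measurable with respect to the $\sigma$-algebra available when column $m$ is revealed, which constrains the order of revelation and forces one to also handle the still-unrevealed side by averaging. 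Second --- the heart of the matter --- one must show $\cex{\log\SigmaLE}{\mathcal F_{j-1}}$ genuinely exceeds the barrier value $-2CK+\log(4W\norm{A_m}^2\norm{B_m}^2)$ by at least a constant; equivalently, that a \emph{typical} $\Gamma_m$ placed between $A_m$ and $B_m$ does not shrink the Hilbert--Schmidt norm by more than a factor $e^{O(C_1K+W)}$, for \emph{every} admissible $A_m,B_m$. This needs an honest anti-concentration estimate for the resolvent element (not merely the operator-norm inequalities used above) together with an a priori bound on $\ex{|\log\SigmaLE|}$ --- again from $A_0<\infty$, via a polynomial bound on $\norm{(H_\Lambda-E)^{-1}}$ --- to dispose of atypical configurations of the unrevealed side; arranging all of this so that the per-column loss is only a bounded factor is where the real work lies. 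Finally, the degenerate cases in which $A_m$ or $B_m$ is not of full rank are excluded on a further good event supplied by the same Wegner input.
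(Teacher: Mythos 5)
Your plan takes a genuinely different route from the paper's, so let me first compare. The paper proves this theorem by directly invoking Theorem~\ref{thm:intro-var-log-rational} with $P=\sum|g_\Lambda^E(i,j)|^2$, $Q=|f_\Lambda^E|^2$, $J_k=\{k\}_W$ for $k\in(a,b)$; the hypotheses on $P,Q$ are supplied by Propositions~\ref{prop:analysis-minors} and~\ref{prop:analysis-determinant}. Inside that theorem, the reduction to one interior column is via the Bessel inequality \eqref{eq:var-var>Scvar} for \emph{pairwise-independent} $\sigma$-subalgebras (not a filtration), and the single-column contribution is converted in hyperspherical coordinates into a difference $D_k u_k - K v_k$ of two one-dimensional logarithmic potentials, analyzed via the scale-by-scale estimates of Proposition~\ref{prop:var-estimates-for-log-potential}. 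Your Schur-complement barrier mechanism encodes the same phenomenon ($\deg_{J_k} P < \deg_{J_k} Q = K$), packaged in a martingale decomposition along a filtration of revealed columns.

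The genuine gap is the one you yourself name but do not close. To make the law-of-total-variance step produce the factor $(CK)^2$ you need that, conditionally on $\mathcal F_{j-1}$ and after averaging the unrevealed columns, $\log\SigmaLE$ is \emph{not} already of order $-CK$ off the event $\mathcal B_m$, essentially uniformly in the revealed data. The bound $\norm{\Gamma_m}\le 2e^{-CK}$ gives only the upper bound on $\mathcal B_m$; it says nothing about how negative $\log\SigmaLE$ is without the barrier, which is exactly what is needed to separate the two conditional means by $\gtrsim 1$. The paper supplies this missing anti-concentration via the pigeonhole-over-scales estimate in Proposition~\ref{prop:var-estimates-for-log-potential}(v), which forces a scale at which the variance of the numerator's log-potential $u_k$ is at most $1+O(1/K)$, while part~(iii), using the self-adjoint structure of the denominator from Proposition~\ref{prop:analysis-determinant}, gives variance at least $1-O(1/K)$ for $v_k$ at the same scale; \eqref{eq:var-triangle} then yields $\var(D_k u_k - K v_k)\ge 1/4$. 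A rigorous version of your filtration argument would need an analogue of this scale selection, and your phrase "arranging all of this so that the per-column loss is only a bounded factor is where the real work lies" is exactly right: that work is absent. Two smaller points: the assertion that restricting the conditional variance to a sub-event "can only decrease" it is false as stated --- what is true, and what the paper uses (Lemma~\ref{lem:var-elementary-estimates}(iii)), is $\var_\mu(X)\ge c\var_{\mu_0}(X)$ when $\mu\ge c\mu_0$, costing the probability of the good event as a multiplicative factor; and $\SigmaLE$ is a sum of $W^2$ boundary Green's function entries, not the single scalar $G_\Lambda^E(a,b)$, so the Schur identity should be written in block form.
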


The above estimate would work with $ G_\Lambda^E(i,j) $, $ i\in \{ a \}_W $, $ j\in \{ b \}_W $, 
instead of $ \SigmaLE $, but we need the result as is to be able to deduce exponential decay. 
Indeed, employing standard multi-scale analysis, as in \cite{vDK-89-new}, we show in 
Theorem~\ref{thm:decay} that if $ \var(\SigmaLE)\ge (b-a+1)\delta_0 $, $ \delta_0=\delta_0(W) $,
then the localization length is roughly $ \delta_0^{-C} $. Thus, in principle, estimating the 
fluctuations of $ \SigmaLE $ can lead to polynomial bounds on the localization length. In this 
paper we only manage to obtain exponential bounds on the localization length. Concretely, 
Theorem~\ref{thm:intro-var-log-Green} and Theorem~\ref{thm:decay} imply the following estimate on 
the off-diagonal decay of Green's function.
\begin{theorem}
\label{thm:intro-concrete-decay} Fix $ B>0 $ and $ \beta\ge 1 $. There exists  a constant 
\begin{equation*}
	C_0=C_0(A_0,A_1,B,\beta,|E|,\norm{S})
\end{equation*}
 such that if $ 
\inf\nolimits_I v\ge \exp(-BW) $ for some $ I $ as in Theorem~\ref{thm:intro-var-log-Green} then 
\begin{equation*}
	\pr{\log|G_{\Lambda_L(a)}^E(i,j)|\le-\exp(-C_0 W^2)L,\,i\in\{a\}_W, j\in
	\partial \Lambda_L(a)} \ge 1-L^{-\beta}, 
\end{equation*}
for any $ L\ge \exp(2C_0 W^2) $ and $ a\in\Z $. 
\end{theorem}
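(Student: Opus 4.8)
The plan is to obtain Theorem~\ref{thm:intro-concrete-decay} by feeding the fluctuation lower bound of Theorem~\ref{thm:intro-var-log-Green} into the multi-scale statement of Theorem~\ref{thm:decay}, so that the work reduces to bookkeeping of the resulting constants. First I would fix the interval $I$ furnished by the hypothesis, so that $I$ is admissible in Theorem~\ref{thm:intro-var-log-Green} (it comes from some choice $C\ge C_1$ of the parameters there) and $\inf\nolimits_I v\ge\exp(-BW)$. Applying Theorem~\ref{thm:intro-var-log-Green} with this $I$ to an arbitrary box $\Lambda=[a,b]_W$ gives
\begin{equation*}
	\var(\log\SigmaLE)\ \ge\ (b-a-1)\,(\inf\nolimits_I v)^W\ \ge\ (b-a-1)\exp(-BW^2),
\end{equation*}
uniformly in $a$ and $b$. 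Hence the hypothesis of Theorem~\ref{thm:decay} is met with $\delta_0:=\exp(-BW^2)=\delta_0(W)$.

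Next I would invoke Theorem~\ref{thm:decay} with this $\delta_0$. It supplies an initial length scale $L_0$, of order $\delta_0^{-C'}=\exp(C'BW^2)$, above which the von Dreifus--Klein induction of \cite{vDK-89-new} runs, and it yields exponential off-diagonal decay of the box Green's function at a rate $\gamma$ of order $\delta_0^{C}=\exp(-CBW^2)$: that is, for every $a\in\Z$ and every $L\ge L_0$,
\begin{equation*}
	\pr{\log|G_{\Lambda_L(a)}^E(i,j)|\le-\gamma L,\ i\in\{a\}_W,\ j\in\partial\Lambda_L(a)}\ \ge\ 1-L^{-\beta}.
\end{equation*}
The exponents $C,C'$, the $W$-independent thresholds entering $L_0$, and the decay exponent $\beta$ achievable by the induction are controlled by $A_0,|E|,\norm{S}$ through the Wegner estimate and the deterministic resolvent bounds, and by $A_1$ through the control, via the tail hypothesis~\eqref{eq:intro-densitydecay}, of atypically large potentials; raising the probability exponent to a prescribed value of $\beta$ costs only a larger, still $W$-free, multiple of $L_0$ and of $1/\gamma$. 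All of this is absorbed into a single constant $C_0=C_0(A_0,A_1,B,\beta,|E|,\norm{S})$.

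Finally I would choose $C_0$ large enough that $\exp(-C_0W^2)\le\gamma$ and $\exp(2C_0W^2)\ge L_0$ hold together; since $\gamma\ge\exp(-CBW^2)$ and $L_0\le\exp(C'BW^2)$ this is the case once $C_0\ge\max(CB,\tfrac12 C'B)$ and $C_0$ exceeds the remaining $W$-independent constants from the previous step. For such $C_0$, any $L\ge\exp(2C_0W^2)$, and any $a\in\Z$, the event in the statement contains the event displayed above, so it has probability at least $1-L^{-\beta}$, which is the claim; translation invariance of the i.i.d.\ potentials handles the dependence on $a$.

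The genuine difficulty is entirely inside Theorems~\ref{thm:intro-var-log-Green} and~\ref{thm:decay}; at the level of this deduction the one point to watch is that the lower bound on $\var(\log\SigmaLE)$ is precisely the form of initial-scale input the multi-scale scheme consumes. Turning ``$\var(\log\SigmaLE)\ge(b-a-1)\delta_0$'' into ``$|G|$ is small across a box with overwhelming probability at the initial scale'' is an anti-concentration step --- using the bounded density $A_0$ to regularize $\log\SigmaLE$ and the tail bound~\eqref{eq:intro-densitydecay} to tame its lower tail --- and I would expect that, rather than the final substitution of $\delta_0=\exp(-BW^2)$, to be the true obstacle; but it is exactly what Theorem~\ref{thm:decay} has been set up to package.
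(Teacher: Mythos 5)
Your proposal is correct and follows exactly the route the paper intends: apply Theorem~\ref{thm:intro-var-log-Green} with the given interval $I$, substitute $\inf_I v\ge\exp(-BW)$ to obtain $\delta_0\approx\exp(-BW^2)$, and then feed this into Theorem~\ref{thm:decay}, with the resulting polynomial-in-$\delta_0$ thresholds and rates absorbed into $\exp(\pm CW^2)$. The only point worth a sentence in a clean write-up is that Theorem~\ref{thm:intro-var-log-Green} gives the factor $b-a-1=L-2$ rather than $L$, so one should take $\delta_0$ slightly smaller (say $\tfrac12\exp(-BW^2)$, or $\exp(-(B+1)W^2)$, which also guarantees $\delta_0\le W^{-1}$ for all $B>0$ and $W\ge 1$); this is harmless and changes only the constant $C_0$.
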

\begin{remark}
It is well-known, and otherwise straightforward to deduce, that the above estimate implies 
exponential decay of the extended eigenvectors of $ H $, and a lower bound on the non-negative 
Lyapunov exponents. Namely, we have that if $ \gamma_W^E $ is the lowest non-negative Lyapunov 
exponent then $ \gamma_W^E\ge \exp(-CW^2) $, and if $ \psi $ is an extended eigenvector of $ H $ 
then
\begin{equation*}
	\limsup_{|i|\to\infty}(\log|\psi(i)|)/|i|\le -\exp(-CW^2).
\end{equation*} 
\end{remark}

Let us discuss some of the ideas behind the proof of Theorem~\ref{thm:intro-var-log-Green}. The 
strategy is to take advantage of the fact that $ G_\Lambda^E(i,j) $ is the ratio of two polynomials 
of different degrees in $ (V_i)_{i\in\Lambda} $. We illustrate this idea in a simpler setting. If $ 
P(x),Q(x)  $ are two monic polynomials of one 
variable then $ \log|P(x)/Q(x)|\simeq (\deg P-\deg Q )\log |x| $, provided $ |x| $ is large enough. 
If $ \deg P\neq \deg Q $ and large values of $ |x| $ are taken with non-zero probability then the 
previous remark should be enough to capture some of the fluctuations of $ \log|P(x)/Q(x)| $.

The above idea is not sufficient to generate the crucial $ (b-a-1) $ factor in the lower bound on 
variance. Let $ \{\Lambda_k\} $ be a partition of $ \Lambda $ and let 
\begin{equation*}
	h_k(V)=\ex{\log|G_\Lambda^E(i,j)(V,\cdot)|},~V\in \R^{\Lambda_k}
\end{equation*}
(we keep the potentials on $ \Lambda_k $ fixed and we average the rest). Then we have the following 
Bessel type inequality (see 
Lemma~\ref{lem:var-elementary-estimates} (ii)):
\begin{equation*}
	\var(\log |G_\Lambda^E(i,j)|)\ge \sum_k \var(h_k).
\end{equation*}
So, the problem is reduced to estimating the fluctuations of $ h_k $. We get the $ (b-a-1) $ factor 
by just choosing a fine enough partition. Ideally we would choose $ \Lambda_k=\{ k \} $, but this 
turns out to be incompatible with our first idea. Using hyper-spherical coordinates we can write 
$ G_\Lambda^E(i,j)(V,V')=G_\Lambda^E(i,j)(r,\xi,V') $, $ V\in \R^{\Lambda_k} $, $ V'\in 
\R^{\Lambda_k'} $,
$ r\in \R $, $ \xi\in S^{|\Lambda_k|-1} $. Let $ d_1,d_2 $ be the degrees of the numerator and 
denominator of $ G_\Lambda^E(r,\xi,V') $ as polynomials in $ r $. It is then not hard to see that 
the problem of finding a lower bound for $ \var(h_k) $ can be reduced to the problem of estimating 
the variance of a function of the form
\begin{equation*}
	d_1\int_{\C}\log|r-\zeta|\,d\mu_1(\zeta)-d_2\int_{\C}\log|r-\zeta|\,d\mu_2(\zeta),
\end{equation*}
where $ \mu_1,\mu_2 $ are probability measures. Note that if we would have $ \mu_i(|\zeta|\ge R)=0 
$, $ i=1,2 $ then the above function is approximately $ (d_1-d_2)\log r $, for $ r\gg R $, which 
leads us back to our first idea. Clearly, we want $ d_1\neq d_2 $. This is false for 
$ \Lambda_k=\{ k \}, k\in \Lambda $, but it turns out to be true for $ \Lambda_k=\{ k 
\}_W,k\in(a,b) $. The conditions $ \mu_i(|\zeta|\ge R)=0  $, $ i=1,2 $ turn out be roughly 
equivalent to the polynomials on the top and bottom of $ G_\Lambda^E(i,j)(V,V') $ not vanishing for 
$ V 
$ outside the ball of radius $ R $ in $ \C^{\Lambda_k} $ and all $ V'\in \R^{\Lambda_k'} $. 
Unfortunately we can establish such a property only for the denominator of  $ G_\Lambda^E(i,j) $ 
(see Proposition~\ref{prop:analysis-determinant}). This is because the denominator is the 
determinant of a self-adjoint matrix, but the numerator is the determinant of a non-self-adjoint 
matrix. We circumvent this problem at the cost of a worse lower bound on variance. At a technical 
level this is a accounted for by the difference between statements (iii) and (v) of 
Proposition~\ref{prop:var-estimates-for-log-potential}.

Finally, the  ideas discussed above are synthesized in the following theorem, which will be proved 
in 
section \ref{sec:var}. If $ P $ is a polynomial of $ N $ variables and $ J\subset\{1,\ldots,N\} $ 
then $ \deg_J P $ denotes the cumulative degree of $ P $ with respect to the variables indexed by $ 
J $. We will use $ J' $ to denote $ \{1,\ldots,N\}\setminus J $. By $ (x,x') $, $ x\in\R^{J} $, $ 
x'\in\R^{J'} $ we denote the vector in $ \R^{J\cup J'} $ with the components indexed by $ J $ given 
by $ x $ and the components indexed by $ J' $ given by $ x' $.
\begin{theorem}
\label{thm:intro-var-log-rational} Let $ P $ and $ Q $ be two polynomials of $ N $ variables. 
Assume that the following conditions hold: 
\begin{enumerate}
	[(a)] 
	\item There exist $J_k\subset \{1,\dots,N\}$, $ k=1,\dots, N' $, $J_k \cap J_{k'}=\emptyset$ for 
	$k\neq k'$, $ |J_k|=K $ such that 
	\begin{equation*}
		0\le\deg_{J_k}P < \deg_{J_k}Q=K. 
	\end{equation*}
	
	\item For each $k$ and each $T\gg 1$ there exists $\mc B(k,T)\subset \R^{J'_k}$ with 
	$\pr{\mc B(k,T)}\le B_0 K^2 T^{-1}$, such that for any $x'\in \R^{J'_k}\setminus\mc B(k,T)$ and 
	any 
	$x\in\C^{J_k}$ with $\min_i |x_i|\ge T$ we have $ Q(x,x')\neq 0 $. 
\end{enumerate}
Then there exist $ C_0 $, $ C_1=C_1(D) $ such that 
\begin{equation*}
	\var(\log(|P|/|Q|))\ge N' (\inf\nolimits_I v)^K, 
\end{equation*}
for any $ I=[\pm\exp(CK),\pm\exp((C+C_0)K)] $, with $ C\ge C_1 $. 
\end{theorem}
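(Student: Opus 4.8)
The plan is to split the variance into independent one‑variable contributions by a Bessel‑type inequality, and then to extract each contribution from the polynomial structure of $P/Q$ in hyperspherical coordinates.

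First I would set $f=\log(|P|/|Q|)$, let $\mathcal F_k$ be the $\sigma$-algebra generated by the coordinates indexed by $J_k$, and put $h_k=\mathbb E(f\mid\mathcal F_k)$, so that $h_k(x)=\mathbb E_{x'}\log(|P(x,x')|/|Q(x,x')|)$ for $x\in\R^{J_k}$. Because the $J_k$ are pairwise disjoint and the coordinates are independent, the functions $h_k-\mathbb E f$ are pairwise orthogonal in $L^2$ and $\langle f-\mathbb E f,\,h_k-\mathbb E f\rangle=\var(h_k)$; hence Lemma~\ref{lem:var-elementary-estimates}(ii) gives $\var(f)\ge\sum_{k=1}^{N'}\var(h_k)$. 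It therefore suffices to prove $\var(h_k)\ge(\inf\nolimits_I v)^K$ for each fixed $k$, so I fix such a $k$ and write $J=J_k$, $J'=J'_k$, $D_1=\deg_J P<K$; note $\mathbb E h_k=\mathbb E f$ is a fixed finite constant (finite since $P,Q\not\equiv 0$ and $v$ has finite logarithmic moments, see below).

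Next I would pass to hyperspherical coordinates $x=r\xi$ on $\R^{J}$, with $r=|x|$ and $\xi\in S^{K-1}$. For fixed $\xi$ and $x'$ the numerator and denominator of $P/Q$ are polynomials in $r$ of degrees $d_1(\xi,x')\le D_1$ and $d_2(\xi,x')$; since $\deg_J Q=K$, one has $d_2(\xi,x')=K$ for a.e.\ $(\xi,x')$. Factoring both polynomials, $f(r\xi,x')$ becomes, up to an additive term involving the leading coefficients, a difference $\sum_j\log|r-\zeta_j|-\sum_j\log|r-\eta_j|$ of logarithmic potentials of orders $d_1<d_2=K$; this is exactly the shape treated by Proposition~\ref{prop:var-estimates-for-log-potential}, and since I will only be able to confine the zeros of the denominator, the relevant instance is part~(v). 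Condition~(b) supplies the confinement: if $\eta$ is a zero of $r\mapsto Q(r\xi,x')$ and $x'\notin\mathcal B(k,T)$, then $Q(\eta\xi,x')=0$ forces $\min_i|\eta\xi_i|<T$, so $|\eta|<T/\min_i|\xi_i|$; restricting $\xi$ to a small cone about $(1/\sqrt K,\dots,1/\sqrt K)$ — which is automatic once $x$ is taken in a thin box contained in $I^{K}$ — gives $|\eta|\lesssim T\sqrt K$ off an $x'$-set of probability $\le B_0K^2T^{-1}$. Choosing $T$ a fixed power of $K$ times the scale $\exp(CK)$ of $I$ makes that probability negligible.

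The numerator's zeros $\zeta_j$ are not confined in this way, and this is the main obstacle; it is circumvented asymmetrically, at the cost of the bound carrying $(\inf_I v)^K$ rather than $(\inf_I v)$. The tail hypothesis \eqref{eq:intro-densitydecay} gives $v$ finite logarithmic moments, which is just enough for the crude one‑sided bound $\mathbb E_{x'}\log|P(r\xi,x')|\le D_1\log r+C$, valid for all $x'$; paired with the two‑sided bound $\mathbb E_{x'}\log|Q(r\xi,x')|\ge K\log r-C$ — obtained from the zero confinement together with a quantitative lower bound on $\mathbb E_{x'}\log|q_K(\xi,x')|$ for the denominator's leading coefficient via integrability of the logarithm of a polynomial against a bounded density — this yields $h_k(x)\le-(K-D_1)\log|x|+C\le-\log|x|+C$ on the good box. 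Since the box sits at radius $\asymp\exp(CK)$ and $\mathbb E h_k$ is a fixed constant, taking $C\ge C_1(D)$ forces $|h_k-\mathbb E h_k|\gtrsim CK$ there, while the box has probability $\ge(\inf\nolimits_I v)^K$; the elementary estimate $\var(h_k)\ge\mathbb P(\text{box})\cdot(CK/4)^2\ge(\inf\nolimits_I v)^K$ then closes the argument. The delicate point will be keeping $q_K(\xi,\cdot)$ away from the degenerate directions so that $\mathbb E_{x'}\log|q_K(\xi,x')|$ is bounded below uniformly in $\xi$; this forces a mild further shrinking of the box (harmless, since it still has probability $\ge(\inf_I v)^K$), and it is precisely the asymmetry between the one‑sided handling of $P$ and the two‑sided handling of $Q$ that is packaged in Proposition~\ref{prop:var-estimates-for-log-potential}(v).
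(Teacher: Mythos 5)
Your proposal tracks the paper closely through the first two-thirds of the argument: the Bessel decomposition $\var(h)\ge\sum_k\var(h_k)$ via Lemma~\ref{lem:var-elementary-estimates}(ii), the passage to hyperspherical coordinates, and the role of condition~(b) in confining the zeros of $Q(r\xi,\cdot)$ to $|r|\lesssim T\sqrt K$ on the cone $\Theta$ — all of this matches. Where you diverge is the final and decisive step, and there the proposal has a real gap.

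The paper lower-bounds the \emph{local} variance $\var_{[M_0,M]}(\tilde h_k(\cdot,\xi))$ under the uniform distribution on the radial interval. This quantity is invariant under adding constants to $\tilde h_k$, so the authors can freely normalize $\hat P$ and $\hat Q$ to be monic in $r$ (which shifts $h_k(\cdot,\xi)$ by an $r$-independent constant and does not change $\var_{[M_0,M]}$). They then split $\tilde h_k = D_ku_k - Kv_k$ into two logarithmic potentials, apply Proposition~\ref{prop:var-estimates-for-log-potential}(iii) to get a lower bound $\var_{[M_0,M_\xi]}(v_k)\ge 1-(4K)^{-1}$ for the \emph{confined} $Q$-potential, apply part~(v) to choose $M_\xi$ so that $\var_{[M_0,M_\xi]}(u_k)\le 1+(4K)^{-1}$ for the \emph{unconfined} $P$-potential, and conclude by the $\var^{1/2}$ triangle inequality. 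Your phrasing that the asymmetry is ``packaged in (v)'' is a little off: the asymmetry is (iii)-for-$Q$ together with (v)-for-$P$.

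Your proposal instead compares $h_k$ on the box to the \emph{global} mean $\mathbb{E}h_k$ and concludes $\var(h_k)\ge\mathbb{P}(\text{box})\cdot(CK/4)^2$. This is where the gap is. You assert that $\mathbb{E}h_k$ ``is a fixed constant'' and that the constant $C'$ in $h_k(x)\le-(K-D_1)\log|x|+C'$ is controllable by choosing $C\ge C_1(D)$. Neither claim is justified. Both $\mathbb{E}h_k$ and $C'$ depend on the coefficient sizes of $P$ and $Q$ — for instance on $\mathbb{E}_{x'}\log\max_i|a_i(\xi,x')|$ and $\mathbb{E}_{x'}\log|b_K(\xi,x')|$ — and not only on $D$. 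A multiplicative rescaling $P\to\lambda P$ shifts both $h_k$ and $\mathbb{E}h_k$ by $\log\lambda$ and so leaves $h_k-\mathbb{E}h_k$ unchanged, which suggests a cancellation is available; but you never isolate this cancellation, and the raw one-sided estimate you write down does not exhibit it. In effect you need a $\xi$-uniform, $P,Q$-independent bound on a difference of logarithmic moments of leading coefficients, and this is precisely the bookkeeping the paper avoids by localizing the variance to $[M_0,M]$ before subtracting a mean. Without that step (or an explicit proof of the cancellation with a constant depending only on $D$), the proposed final inequality $\var(h_k)\ge\mathbb{P}(\text{box})(CK/4)^2$ does not follow from the one-sided bound.

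A secondary point: even granting your one-sided bound $h_k\le-CK+C'$, you also need $\mathbb{E}h_k$ to be bounded \emph{below} on the same scale; if $\mathbb{E}h_k$ were itself $\approx -CK$, the deviation vanishes. So the argument requires a two-sided control of $\mathbb{E}h_k$ relative to the box value, which is the same unproved cancellation. The paper's local-variance formulation sidesteps this entirely, since $\var_{[M_0,M]}$ subtracts its own local average and the lower bound is driven solely by the variation $(K-D_k)\log r$ of $\tilde h_k$ across the shell.
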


\addtocontents{toc}{\SkipTocEntry}
\subsection*{Acknowledgements}
The authors are grateful to the anonymous referee for his helpful comments. 

% % % % % % % % % % % % % % % % % % % % % % % % % % % % % % % % % % % %
\section{Lower bound for the variance of the logarithm of a rational function of several 
variables}\label{sec:var}

% % % % % % % % % % % % % % % % % % % % % % % % % % % % % % % % % % % %
In this section we will prove Theorem~\ref{thm:intro-var-log-rational}. The main idea for the proof 
is to reduce the analysis of the variance to the case of a one dimensional logarithmic potential 
for which we have the estimates from Proposition~\ref{prop:var-estimates-for-log-potential}. But 
first we collect some elementary facts concerning the variance. We leave the proofs as an exercise 
for the reader.
\begin{lemma}
\label{lem:var-elementary-estimates} Let $ (\Omega,\mc F,\mu) $ be a probability space. 
\begin{enumerate}
	[(i)] 
	\item If $ X $, $ Y $ are square summable random variables then 
	\begin{equation}
		\label{eq:var-triangle} |\var^{1/2}(X)-\var^{1/2}(Y)|\le \var^{1/2}(X\pm Y) 
	\end{equation}
	and 
	\begin{equation}
		|\var(X)-\var(Y)|\le \ex{(X-Y)^2}^{1/2} \left(\ex{X^2}^{1/2}+\ex{Y^2}^{1/2}\right). 
	\end{equation}
	
	\item If $ X $ is a square summable random variable and $ \mc F_i $, $ i=1,\ldots,n $ are 
	pairwise independent $\sigma $-subalgebras of $ \mc F $ then 
	\begin{equation}
		\label{eq:var-var>Scvar} \var(X)\ge \sum_{i=1}^{n}\var(\cex{X}{\mc F_i}). 
	\end{equation}
	
	\item If $ X $ is a square summable random variable and $ \mu_0 $ is a probability measure such 
	that $ \mu \ge c\mu_0 $, with $ c\in(0,1) $, then 
	\begin{equation}
		\label{eq:var-var>cvar} \var(X)\ge c \var_{\mu_0}(X). 
	\end{equation}
	
	\item If $ \mu_i $, $ i=1,\ldots,n $ are probability measures and $ X_j $, $ j=1,\ldots,m $ are 
	square summable random variables then 
	\begin{equation}
		\label{eq:var-sumvar} \sum_i \var_{\mu_i}\bigg(\sum_j \beta_j X_j\bigg) \le \bigg(\sum_j 
		|\beta_j|\bigg)^2 \max_j \sum_i \var_{\mu_i}(X_j). 
	\end{equation}
	
	\item If $ (\Omega',\mc F', \mu') $ is a probability space and $ X $ is a square summable random 
	variable on $ \Omega\times \Omega' $ then 
	\begin{equation}
		\label{eq:var-infvar} \var_{\mu\times\mu'}(X) \ge \essinf_{\omega'\in\Omega'} 
		\var_{\mu}(X(\cdot,\omega')). 
	\end{equation}
\end{enumerate}
\end{lemma}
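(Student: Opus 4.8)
The plan is to treat $\var$ throughout as a squared Hilbert-space norm. For a square summable $X$ write $\tilde X=X-\ex[\mu]{X}$, so that $\var^{1/2}(X)=\norm{\tilde X}$ in $L^2(\mu)$ and, equivalently, $\var_\mu(X)=\min_{a\in\R}\ex[\mu]{(X-a)^2}$, the minimum being attained at $a=\ex[\mu]{X}$. In this language each item is essentially a one-line manipulation. For (i): since $\widetilde{X\pm Y}=\tilde X\pm\tilde Y$, the inequality \eqref{eq:var-triangle} is exactly the reverse triangle inequality $\big|\,\norm{u}-\norm{v}\,\big|\le\norm{u\pm v}$ in $L^2(\mu)$ applied to $u=\tilde X$, $v=\tilde Y$. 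The second inequality of (i) follows by writing $\var(X)-\var(Y)=\big(\var^{1/2}(X)-\var^{1/2}(Y)\big)\big(\var^{1/2}(X)+\var^{1/2}(Y)\big)$, bounding the first factor by $\var^{1/2}(X-Y)$ via \eqref{eq:var-triangle}, and bounding each of $\var^{1/2}(X-Y)$, $\var^{1/2}(X)$, $\var^{1/2}(Y)$ by the corresponding second moment.

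For (ii), put $Y_i=\cex{X}{\mc F_i}$; subtracting $\ex{X}$ (which changes neither side, by translation invariance of variance) we may assume $\ex{X}=0$, hence $\ex{Y_i}=0$ and $\var(Y_i)=\ex{Y_i^2}$. Pulling the $\mc F_i$-measurable factor $Y_i$ out of a conditional expectation gives $\ex{XY_i}=\ex{Y_i^2}$, while pairwise independence of the $\mc F_i$ makes $Y_i,Y_j$ independent mean-zero variables for $i\neq j$, so $\ex{Y_iY_j}=0$. Expanding $0\le\ex{\big(X-\sum_iY_i\big)^2}$ and inserting these relations yields $\ex{X^2}\ge\sum_i\ex{Y_i^2}$, which is \eqref{eq:var-var>Scvar}. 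For (iii): since $\mu\ge c\mu_0$ with $c\in(0,1)$ and both are probability measures, $\nu:=(\mu-c\mu_0)/(1-c)$ is a probability measure, $\mu=c\mu_0+(1-c)\nu$, and $X$ is square summable for $\mu_0$ as well. Taking $a^\ast=\ex[\mu]{X}$, $\var_\mu(X)=\ex[\mu]{(X-a^\ast)^2}=c\,\ex[\mu_0]{(X-a^\ast)^2}+(1-c)\,\ex[\nu]{(X-a^\ast)^2}\ge c\,\ex[\mu_0]{(X-a^\ast)^2}\ge c\,\var_{\mu_0}(X)$, the last step because $\var_{\mu_0}$ is the minimum over constants; this is \eqref{eq:var-var>cvar}.

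For (iv), write $\sigma_{ij}=\var_{\mu_i}^{1/2}(X_j)$. The triangle inequality in $L^2(\mu_i)$ gives $\var_{\mu_i}^{1/2}\big(\sum_j\beta_jX_j\big)\le\sum_j|\beta_j|\sigma_{ij}$, and Cauchy--Schwarz gives $\big(\sum_j|\beta_j|\sigma_{ij}\big)^2\le\big(\sum_j|\beta_j|\big)\sum_j|\beta_j|\sigma_{ij}^2$; summing over $i$ and extracting $\max_j\sum_i\sigma_{ij}^2$ produces \eqref{eq:var-sumvar}. For (v), with $a=\ex[\mu\times\mu']{X}$, Fubini gives $\var_{\mu\times\mu'}(X)=\int\!\big(\!\int(X(\omega,\omega')-a)^2\,d\mu(\omega)\big)d\mu'(\omega')$; the inner integral is at least $\var_\mu(X(\cdot,\omega'))$ (minimum over constants), and the $\mu'$-average of $\omega'\mapsto\var_\mu(X(\cdot,\omega'))$ dominates its essential infimum, giving \eqref{eq:var-infvar}.

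No step is a genuine obstacle; the only two points requiring a moment's thought rather than pure bookkeeping are the observation in (ii) that pairwise independence of the $\sigma$-algebras upgrades to orthogonality of the conditional expectations, and the convex decomposition $\mu=c\mu_0+(1-c)\nu$ in (iii). Routine measurability remarks, such as the joint measurability of $\omega'\mapsto\var_\mu(X(\cdot,\omega'))$ needed in (v), are standard and will be omitted.
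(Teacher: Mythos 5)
Your proof is correct and complete; all five parts are established by standard arguments (reverse triangle inequality for the centered $L^2$-norm, the tower property plus pairwise independence giving orthogonality of the $\cex{X}{\mc F_i}$, the convex decomposition $\mu = c\mu_0 + (1-c)\nu$, triangle inequality followed by Cauchy--Schwarz, and Fubini combined with the variational characterization of variance). The paper explicitly leaves these proofs as an exercise for the reader, so there is no paper proof to compare against; your argument is the natural one.
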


From now on we will reserve $ d\nu $ for the joint probability distribution of $ 
(V_i)_{i\in\Lambda} $, where $ \Lambda $ will be clear from the context. We use $ dm_\Omega $ for 
the uniform distribution on $ \Omega\subset \R^d $ (with $ d $ clear from the context) and $ 
\var_\Omega(\cdot) $, $ \ex[\Omega]{\cdot} $ will be computed with respect to $ dm_\Omega $. The 
statement of the next result exposes the main steps of its proof. We note that the statements 
relevant for the proof of Theorem~\ref{thm:intro-var-log-rational} are (iii) and (v).

% % % % % % % % % % % % % % % % % % % % % % % % % % % % % % % % % %
\begin{proposition}
\label{prop:var-estimates-for-log-potential} Let $\mu$ be a Borel probability measure on $\C$ and 
set 
\begin{equation*}
	u_\mu(x):=\int_{\C}\log |x-\zeta|d\mu(\zeta). 
\end{equation*}
We assume that $ \mu $ is such that $ u_\mu $ is locally square summable.
\begin{enumerate}
	[(i)]
	
	\item If $\mu ( \{|\zeta|\ge R\})=0$ for some $R>0$, then for any $ M>0 $ one has 
	\begin{equation*}
		\ex[{[0,M]}]{u_{\mu}^2}\le \frac{4\min(1,M)(\log(\min(1,M))-1)^2+M\log^2(M+R)}{M}. 
	\end{equation*}
	
	\item $ \var_{[M_0,M_1]}(u_{\mu})=\var_{[M_0M_1^{-1},1]}(u_{\mu^{(M_1)}})$, for any $M_1>M_0\ge 
	0$, where $\mu^{(M_1)}(\cdot):=\mu(M_1\cdot)$.
	
	\item If $\mu ( \{|\zeta|\ge R\})=0$ for some $R>0$, then for any $M_1\ge 2M_0\ge 4R$ one has 
	\begin{equation*}
		\left|\var_{[M_0,M_1]}(u_\mu) -1\right| \le 10^4 
		\left((RM_1^{-1})^{1/5}+(M_0M_1^{-1})^{1/2}\right). 
	\end{equation*}
	
	\item If $\mu (\{|\zeta|\le R\})=0$ for some $R>0$, then for any $ 0\le 2M_0\le M_1\le R/2$ one 
	has
	\[ \var_{[M_0,M_1]}(u_\mu)\le 8(M_1R^{-1})^2. \]
	
	\item For any $ M_0\ge0 $ we have 
	\begin{equation*}
		\sum_{k=1}^{m}\var_{[M_0,M_k]}(u_{\mu})<m+10^5, 
	\end{equation*}
	with $ M_k=2^kA_0 $, $ A_0>0 $, $ A_0\ge M_0 $. In particular, for any $ m\ge 1 $, there exists $ 
	M\in[2A_0,2^m A_0] $ such that $ \var_{[M_0,M]}(u_{\mu})<1+10^5 m^{-1} $. 
\end{enumerate}
\end{proposition}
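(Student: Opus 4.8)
Items (i)--(iv) are computational; I record two auxiliary facts, each proved exactly as in (i). First, rescaling $x=a+(b-a)s$ turns $\log|x-\zeta|$ into $\log(b-a)+\log|s-\tilde\zeta|$ with $\tilde\zeta=(\zeta-a)/(b-a)$, so $\var_{[a,b]}(\log|\cdot-\zeta|)=\var_{[0,1]}(\log|\cdot-\tilde\zeta|)\le C_\ast$, where $C_\ast:=\sup_{w\in\C}\var_{[0,1]}(\log|\cdot-w|)<\infty$ (finite since $w\mapsto\log|\cdot-w|\in L^2([0,1])$ is continuous and the variance tends to $0$ as $|w|\to\infty$). Second, by symmetry of the kernel $\ex[{[a,b]}]{\log|x-\zeta|}$ is the logarithmic potential at $\zeta$ of the uniform measure on $[a,b]$, which equals $\log(b-a)+O(1)$ with an \emph{absolute} $O(1)$ as long as $\zeta$ stays within a bounded multiple of $\max(b-a,\operatorname{dist}(\zeta,[a,b]))$ of the scale $b-a$. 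Now: for (i) pull the square into the $\mu$-integral by Cauchy--Schwarz, apply Fubini, and for $|\zeta|\le R$ split $\int_0^M(\log|x-\zeta|)^2\,dx$ into the part where $|x-\zeta|<1$ (bounded by $4\min(1,M)(\log\min(1,M)-1)^2$ using $|x-\zeta|\ge|x-\Re\zeta|$) and the part where $|x-\zeta|\ge1$ (bounded by $M\log^2(M+R)$). Item (ii) is the substitution $x=M_1y$. For (iii), when $\operatorname{supp}\mu\subseteq\{|\zeta|\le R\}$ and $x\ge M_0\ge2R$ write $u_\mu=\log|x|+g$ with $g(x)=\int\log|1-\zeta/x|\,d\mu$, $|g(x)|\le2R/x$, so $\var_{[M_0,M_1]}^{1/2}(g)\le2R/\sqrt{M_0M_1}$; since (ii) and a direct computation give $\var_{[t,1]}(\log y)=1-t(\log t)^2/(1-t)^2$, \eqref{eq:var-triangle} yields the claim. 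For (iv), when $\operatorname{supp}\mu\subseteq\{|\zeta|\ge R\}$ and $x\le M_1\le R/2$ write $u_\mu=c+h$ with $c=\int\log|\zeta|\,d\mu$ constant and $|h(x)|\le2x/R$, so $\var_{[M_0,M_1]}(u_\mu)=\var_{[M_0,M_1]}(h)\le4(M_1/R)^2$.

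For (v), decompose $\mu$ into dyadic shells $\mu_j:=\mu|_{\{2^{j-1}A_0\le|\zeta|<2^jA_0\}}$, $q_j:=\mu_j(\C)$ (so $\sum_jq_j\le1$), and write $u_\mu=\sum_ju_{\mu_j}$, so that for each scale $M_k=2^kA_0$ iterating \eqref{eq:var-triangle} gives $\var_{[M_0,M_k]}^{1/2}(u_\mu)\le\sum_j\var_{[M_0,M_k]}^{1/2}(u_{\mu_j})$. I would bound each shell according to the position of $2^jA_0$ relative to $M_k$. If $j\le k-2$ (shell well inside the scale), split $[M_0,M_k]=[M_0,M_0']\cup[M_0',M_k]$ with $M_0':=\max(M_0,2^{j+1}A_0)$: on the right piece (iii) applies to the normalized shell and gives $1+O(2^{(j-k)/5})$; on the left piece, whose relative length is $p_1\le2^{j+1-k}$, the continuous analogue of \eqref{eq:var-sumvar} gives the absolute bound $C_\ast$; and the law of total variance glues them, the difference of conditional means being $\Delta_j=\log(2^jA_0)-\log M_k+O(1)=(j-k)\log2+O(1)$ by the second auxiliary fact, so $|\Delta_j|\le C(k-j)$. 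Altogether $\var_{[M_0,M_k]}(u_{\mu_j})\le q_j^2(1+\theta_{j,k})$ with $\theta_{j,k}\le C\,2^{(j-k)/5}$. If $j\in\{k-1,k,k+1\}$ (the ``resonant'' shells, whose support meets $[M_0,M_k]$ in an essential way) use only $\var_{[M_0,M_k]}(u_{\mu_j})\le C_\ast q_j^2$. If $j\ge k+2$ (shell well outside) apply (iv) to get $\var_{[M_0,M_k]}(u_{\mu_j})\le32\,q_j^2\,4^{k-j}$.

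Writing $P_k:=\sum_{j\le k-2}q_j\le1$, these estimates give $\var_{[M_0,M_k]}^{1/2}(u_\mu)\le P_k+\Xi_k$ with $\Xi_k:=\sum_{j\le k-2}q_j\theta_{j,k}^{1/2}+\sqrt{C_\ast}\sum_{j=k-1}^{k+1}q_j+\sqrt{32}\sum_{j\ge k+2}q_j2^{k-j}$, hence $\var_{[M_0,M_k]}(u_\mu)\le P_k^2+2\Xi_k+\Xi_k^2$. Summing over $k=1,\dots,m$, the main term obeys $\sum_kP_k^2\le m$, while interchanging the $j$- and $k$-sums and using $\sum_jq_j\le1$ makes each piece of $\sum_k\Xi_k$ (and of $\sum_k\Xi_k^2$) an absolutely convergent series: the inner-shell piece reduces to $\sum_jq_j\sum_{l\ge2}2^{-l/10}$, the outer-shell piece to $\sum_jq_j\sum_{l\ge2}2^{-l}$, and --- this is the crucial point --- the resonant piece satisfies $\sum_k\sum_{j=k-1}^{k+1}q_j\le3$ because a fixed shell is resonant for only three scales. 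Thus $\sum_{k=1}^m\var_{[M_0,M_k]}(u_\mu)\le m+(\text{universal constant})$, and keeping track of the constants ($10^4$ from (iii), $C_\ast$, the geometric sums) one verifies this is $<m+10^5$; the ``in particular'' statement then follows by pigeonhole. The main obstacle is exactly this last bookkeeping: one must arrange that the shells of radius comparable to $M_k$ --- where neither (iii) nor (iv) is available --- contribute a \emph{bounded} total over all $m$ scales rather than a total of order $m$, which works only because ``$|\zeta|$ comparable to $M_k$'' occurs for $O(1)$ values of $k$, and because the crude substitute for (iii)/(iv) on those shells must be the scale-independent variance bound $C_\ast$ (not the second-moment bound of (i), which would grow like $\log^2M_k$); a related subtlety is that the conditional-mean difference $\Delta_j$ has to be controlled with an \emph{absolute} $O(1)$ so that the $\log A_0$ terms cancel.
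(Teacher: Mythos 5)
Your proofs of (i), (ii), (iv) coincide with the paper's; your (iii) is a genuine (and arguably cleaner) simplification: writing $u_\mu=\log|x|+g$ with $|g(x)|\le 2R/x$ on $[M_0,M_1]$ and computing $\var_{[t,1]}(\log)$ in closed form yields a sharper exponent than the paper's $L^2$-comparison argument, and a fortiori the stated bound. The real divergence is in (v). The paper's key device is Lemma~\ref{lem:var-elementary-estimates}(iv) (inequality \eqref{eq:var-sumvar}): after decomposing $u_\mu=\sum_l\mu(D_l)u_{\mu_{D_l}}$ into the dyadic shells $D_l$, that lemma reduces the whole problem to proving $\sum_{k}\var_{[M_0,M_k]}(u_{\mu_{D_l}})\le m+\mathrm{const}$ \emph{for each fixed shell} $l$, with the weights $\mu(D_l)$ absorbed into a harmless $(\sum_l\mu(D_l))^2=1$ prefactor. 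Each shell-sum is then split into $k\le l-1$ (part (iv)), $k\ge l+3$ (part (iii)), and $k\in\{l,l+1,l+2\}$ (part (i), applied \emph{after} the rescaling of part (ii), so the $R$ appearing in (i) is $O(1)$ — your aside that (i) ``would grow like $\log^2 M_k$'' overlooks this rescaling). You instead work scale-by-scale, bound $\var^{1/2}_{[M_0,M_k]}(u_\mu)\le\sum_jq_j\var^{1/2}(u_{\tilde\mu_j})$, and then sum over $k$; this is sound but forces two extra ingredients that the paper avoids: the uniform constant $C_\ast=\sup_w\var_{[0,1]}(\log|\cdot-w|)$ for the resonant shells, and a law-of-total-variance split $[M_0,M_0']\cup[M_0',M_k]$ together with an absolute $O(1)$ bound on the conditional-mean gap $\Delta_j$ for the inner shells. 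The split is needed in your scheme because your proof of (iii) genuinely requires $M_0\ge 2R$ (so that $\sup_{[M_0,M_1]}|g|\le 1$); the paper's proof of (iii), despite the stated hypothesis $2M_0\ge 4R$, only actually uses $M_1\ge 4R$ and $M_1\ge 2M_0$, and it is this implicit generalization that lets the paper apply (iii) directly on $[M_0,M_k]$ in the proof of (v) with $R=M_{l+1}$ and no interval-splitting. Both routes reach the conclusion; \eqref{eq:var-sumvar} is what lets the paper's argument stay linear, whereas yours has to carry the $q_j$-bookkeeping and a conditional-mean estimate by hand.
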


% % % % % % % % % % % % % % % % % % % % % % % % % % % % % % % % % %
% % % % % % % % % % % % % % % % % % % % % % % % % % % % % % % % % %
\begin{proof}
Recall that for $ A>0 $ we have 
\begin{equation*}
	\int_{0}^{A}\log x \,dx=A(\log A -1),\quad \int_{0}^{A}\log^2 x\,dx=A[(\log A -1)^2+1]. 
\end{equation*}
(i) 
\begin{multline*}
	\ex[{[0,M]}]{u_{\mu}^2} 
	\le \frac{1}{M}\int_{0}^{M}\int_{|\zeta|\le R}(\log|x-\zeta|)^2\,d\mu(\zeta)\,dx\\
	=\frac{1}{M}\int_{|\zeta|\le R}\Bigg( \int_{x\in[0,M],|x-\zeta|<1}(\log|x-\zeta|)^2\,dx\\ 
	+\int_{x\in[0,M],|x-\zeta|\ge 1}(\log|x-\zeta|)^2\,dx\Bigg)\,d\mu(\zeta)\\
	\le\frac{1}{M}\int_{|\zeta|\le R}\left( 2 \int_{0}^{\min(1,M)}(\log y)^2\,dy 
	+M(\log(M+R))^2\right)\,d\mu(\zeta)\\
	\le\frac{4\min(1,M)(\log(\min(1,M))-1)^2+M\log^2(M+R)}{M}. 
\end{multline*}

(ii) By a change of variables we have 
\begin{equation*}
	\var_{[M_0,M_1]}(u_{\mu})=\var_{[M_0M_1^{-1},1]}(u_{\mu}(M_1\cdot)).
\end{equation*}
 Now the conclusion follows 
from the fact that 
\begin{equation*}
	u_\mu(M_1x)=u_{\mu^{(M_1)}}(x)+ \log M_1.
\end{equation*}

(iii) First note that 
\begin{equation}
	\label{eq:log-mvt} |\log|x-\zeta|-\log|x||\le 2 |x|^{-1}|\zeta|,~|x|^{-1}|\zeta|\le1/2, 
\end{equation}
and consequently 
\begin{equation*}
	|u_{\mu^{(M_1)}}(x)-\log x|\le 2 \sqrt{RM_1^{-1}},~x\in[\sqrt{RM_1^{-1}},1]. 
\end{equation*}
By what we already established we have 
\begin{multline*}
	|\var_{[M_0,M_1]}(u_{\mu})-\var_{[M_0,M_1]}(\log)| \\
	=|\var_{[M_0M_1^{-1},1]}(u_{\mu^{(M_1)}})-\var_{[M_0M_1^{-1},1]}(\log)|\\
	\le \norm{u_{\mu^{(M_1)}}-\log}_{L^2_{[M_0M_1^{-1},1]}} 
	\left(\norm{u_{\mu^{(M_1)}}}_{L^2_{[M_0M_1^{-1},1]}} +\norm{\log}_{L^2_{[M_0M_1^{-1},1]}}\right)\\
	\le 2\norm{u_{\mu^{(M_1)}}-\log}_{L^2_{[0,1]}} \underbrace{ 
	\left(\norm{u_{\mu^{(M_1)}}}_{L^2_{[0,1]}} +\norm{\log}_{L^2_{[0,1]}}\right) }_{< 5}\\
	\le 10\left(4RM_1^{-1} +\int_{0}^{\sqrt{RM_1^{-1}}}2(u^2_{\mu^{(M_1)}}(x)+\log^2 
	x)\,dx\right)^{1/2}\\
	\le 10\left(4RM_1^{-1}+350\sqrt{RM_1^{-1}}\log^2\sqrt{RM_1^{-1}}\right)^{1/2}\\
	\le 100 (RM_1^{-1})^{1/4}\log(M_1R^{-1}) \le2000 (RM_1^{-1})^{1/5}. 
\end{multline*}
Now we just have to estimate $ \var_{[M_0,M_1]}(\log)=\var_{[m,1]}(\log) $, where we let $ 
m=M_0M_1^{-1} $. 
\begin{multline}
	|\var_{[m,1]}(\log)-1|
	=\Bigg|\ex[{[m,1]}]{\log^2}-\frac{1}{1-m}\ex[{[0,1]}]{\log^2} 
	\\-\left(\ex[{[m,1]}]{\log}\right)^2 
	+\frac{1}{(1-m)^2}\left(\ex[{[0,1]}]{\log}\right)^2 -\frac{m^2}{(1-m)^2}\Bigg|\\
	\le \frac{1}{1-m}\left|\int_{0}^{m}(\log x)^2\,dx\right| +\frac{1}{(1-m)^2}\left|\int_{0}^{m}\log 
	x\,dx\right| \left|\int_{0}^{1}\log x\,dx+\int_{m}^{1}\log x\,dx\right| \\+\frac{m^2}{(1-m)^2}
	\le \frac{5m(1-\log m)^2}{(1-m)^2}\le 500 m \log^2 m \le 10^4 m^{1/2}. 
\end{multline}

(iv) Note that based on \eqref{eq:log-mvt} we have 
\begin{equation*}
	|u_{\mu^{(M_1)}}(x)-u_{\mu^{(M_1)}}(0)|\le 2 M_1R^{-1},~x\in[M_0 M_1^{-1},1], 
\end{equation*}
and hence 
\begin{multline*}
	\var_{[M_0,M_1]}(u_{\mu}) =\var_{[M_0 M_1^{-1},1]}(u_{\mu^{(M_1)}}) \le 
	\norm{u_{\mu^{(M_1)}}-u_{\mu^{(M_1)}}(0)}^2_{L^2_{[M_0 M_1^{-1},1]}}\\
	\le \frac{4(M_1R^{-1})^2}{1-M_0M_1^{-1}}\le 8(M_1R^{-1})^2. 
\end{multline*}

(v) Let $ D_l=\{M_l\le |\zeta| < M_{l+1}\} $, $ l=1,\ldots,m-1 $, $ D_{0}=\{|\zeta|<M_1\} $, and $ 
D_{m}=\{|\zeta|\ge M_m\} $. We have $ u_{\mu}=\sum_{l=0}^{m}\mu(D_l)u_{\mu_{D_l}} $, where $ 
\mu_{D}=\mu(D)^{-1}\mu\vert_{D} $ (we set $ \mu_D=0 $ if $ \mu(D)=0 $). We will verify the estimate 
in (v) for each measure $ \mu_{D_l} $. The estimate for $ \mu $ will follow by 
\eqref{eq:var-sumvar}. So, fix arbitrary $l\in\{0,\ldots,m\}$. One has due to part (iv) that 
\begin{equation*}
	\sum_{k=1}^{l-1}\var_{[M_0,M_k]} (u_{\mu_{D_l}}) \le \sum_{k=1}^{l-1}8(M_kM_{l}^{-1})^2 
	=8\sum_{k=1}^{l-1}4^{k-l} \le 8. 
\end{equation*}
On the other hand due to part (iii) one has 
\begin{multline*}
	\sum_{k=l+3}^{m}\var_{[M_0,M_k]} (u_{\mu_{D_l}}) \le \sum_{k=l+3}^{m}\left[1 
	+10^4\left((M_0M_k^{-1})^{1/2} +(M_{l+1}M_k^{-1})^{1/5}\right) \right]\\
	\le m+10^4\left(\sum_{k=1}^{\infty}2^{-k/2}+\sum_{k=1}^{\infty}2^{-k/5}\right) \le m+5\cdot 10^4. 
\end{multline*}
Now we just have to evaluate the variance for $ l\le k\le l+2 $. For $ l<m $ we use (i) to get 
\begin{multline*}
	\sum_{k=l}^{l+2}\var_{[M_0,M_k]}(u_{\mu_{D_l}}) 
	=\sum_{k=l}^{l+2}\var_{[M_0M_k^{-1},1]}(u_{\mu_{D_l}}^{(M_k)}) \\ \le\sum_{k=l}^{l+2} 
	\frac{1}{1-M_0M_k^{-1}}\norm{u_{\mu_{D_l}}^{(M_k)}}_{L^2_{[0,1]}}^2
	\le 2\sum_{k=l}^{l+2}\left(4+\log^2(1+M_{l+1}M_k^{-1})\right) \le 40. 
\end{multline*}
When $ l=m $ we just need to evaluate $ \var_{[M_0,M_m]}(u_{\mu_{D_m}}) $. Let $ 
D_m^1=\{M_m\le|\zeta|< 2M_m\} $ and $ D_m^2=\{|\zeta|\ge 2M_m\} $. Using \eqref{eq:var-sumvar}, (i) 
(for $ u_{\mu_{D_m^1}} $, as above), and (iv) (for $ u_{\mu_{D_m^2}} $) we get 
\begin{multline*}
	\var_{[M_0,M_m]}(u_{\mu_{D_m}}) \le \max \left( 
	\var_{[M_0,M_m]}(u_{\mu_{D_m^1}}),\var_{[M_0,M_m]}(u_{\mu_{D_m^2}}) \right)\\
	\le \max\left( 4+\log^2(1+2M_m/M_m),8(M_m/(2M_m))^2 \right)\le 10. 
\end{multline*}
This concludes the proof.
\end{proof}

Before we proceed with the proof of Theorem~\ref{thm:intro-var-log-rational} we need the two 
following auxiliary results.
\begin{lemma}
\label{lem:var-logP-2nd-moment} If $ P(x)=\sum_{|\alpha|\le D} a_\alpha x^\alpha $ is a polynomial 
of $ N $ variables such that $ \max_{|\alpha|\le D} |a_\alpha|=1$, and $ \Omega\subset \{x\in 
\R^N:~\norm{x}\le R_0 \}$, $ R_0\ge e $, is such that $ \mes(\Omega)>1 $, then 
\begin{equation*}
	\ex[\Omega]{\log^2|P|}\lesssim D^2 N^2 \log^2(N+1) \log^4 R_0. 
\end{equation*}
\end{lemma}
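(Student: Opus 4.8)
The plan is to bound $\ex[\Omega]{\log^2|P|}$ by controlling the measure of the sublevel sets $\{x\in\Omega:\ |P(x)|\le t\}$ and the superlevel sets $\{x\in\Omega:\ |P(x)|\ge t\}$ separately, and then integrating. For the large values, note that on $\Omega$ every monomial $x^\alpha$ with $|\alpha|\le D$ satisfies $|x^\alpha|\le R_0^D$, and the number of such monomials is at most $(N+1)^D$ (or $\binom{N+D}{D}$); since all coefficients have modulus $\le 1$, we get the deterministic bound $|P(x)|\le (N+1)^D R_0^D$, so $\log^2|P|$ contributes at most $\bigl(D\log((N+1)R_0)\bigr)^2\lesssim D^2\log^2(N+1)\log^2 R_0$ on the part where $|P|\ge 1$. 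This already accounts for a chunk of the claimed bound, so the real work is near the zero set of $P$.

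For the small values I would use a Remez- or Łojasiewicz-type sublevel set estimate. The key input is a quantitative statement of the form: if $P$ is a polynomial of $N$ variables of degree $D$ with $\max_\alpha|a_\alpha|=1$, then for the uniform (Lebesgue) measure restricted to a ball of radius $R_0$, $$\mes\bigl(\{x:\ |P(x)|\le t\}\cap\Omega\bigr)\lesssim \mes(\Omega)\,\bigl(C(N,D,R_0)\,t\bigr)^{c/D}$$ for a suitable normalization constant $C(N,D,R_0)$ — essentially because a nonzero polynomial of degree $D$ restricted to any line varies enough that $\{|P|\le t\}$ on that line has length $\lesssim (t/\|P\|_\infty)^{1/D}$ times the length of the line, by the classical one-variable Remez inequality, and then one integrates over the family of lines (say, lines parallel to a coordinate axis chosen so that the leading behaviour in that variable is nondegenerate, or more robustly by averaging over directions). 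The normalization constant $C(N,D,R_0)$ comes from relating $\|P\|_{\infty,\Omega}$ (which by the first paragraph is at most $(N+1)^DR_0^D$ but also at least a fixed positive quantity since $\mes(\Omega)>1$ forces $P$ not to be tiny everywhere — this is where we need $\max|a_\alpha|=1$ together with a lower bound on $\sup_\Omega|P|$ in terms of $N,D,R_0$).

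Once the sublevel estimate is in hand the conclusion is a routine layer-cake integration: write $$\ex[\Omega]{\log^2|P|}=\frac{1}{\mes(\Omega)}\int_\Omega\log^2|P|\,dx\le (\text{large-value part})+\frac{1}{\mes(\Omega)}\int_0^1 \log^2(1/t)\,d\bigl(-\mes\{|P|\le t\}\bigr),$$ and integrating by parts the second term is $\lesssim \int_0^1 \frac{\log(1/t)}{t}\,\mes\{|P|\le t\}\,dt/\mes(\Omega)$, which converges because $\mes\{|P|\le t\}$ has a positive power of $t$ and produces a factor of order $D^2$ from the $1/D$ exponent (the integral $\int_0^1 t^{c/D-1}\log(1/t)\,dt = (D/c)^2$), plus the logarithm of the normalization constant squared, which is $\lesssim D^2\log^2((N+1)R_0)\lesssim D^2\log^2(N+1)\log^2 R_0$ using $R_0\ge e$. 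Collecting the two parts gives $\ex[\Omega]{\log^2|P|}\lesssim D^2 N^2\log^2(N+1)\log^4 R_0$; the extra $N^2\log^2 R_0$ over what the crude count suggests is exactly the slack one loses in the multivariate Remez step, where the normalization constant picks up powers of $N$ and $R_0$ when passing from coefficient size to sup-norm on $\Omega$ and then slicing.

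The main obstacle is the multivariate sublevel set / Remez inequality with explicit dependence on $N$, $D$, and $R_0$, and in particular establishing a clean lower bound $\sup_{\Omega}|P|\gtrsim (N,D,R_0)^{-O(1)}$ from the normalization $\max_\alpha|a_\alpha|=1$ and $\mes(\Omega)>1$; the one-variable case is classical (Remez/Cartan), but controlling the constants when integrating slices over a region $\Omega$ that is only assumed to sit inside a ball of radius $R_0$ (and need not be convex) requires choosing the slicing direction carefully or averaging, and this is where all the $N$- and $R_0$-powers in the final bound originate.
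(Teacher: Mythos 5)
Your upper-bound half (the deterministic count $|P(x)|\le (N+1)^D R_0^D$ on $\Omega$) is exactly the paper's, but the sublevel-set half has a genuine gap that you yourself flag: the multivariate Remez estimate $\mes(\{|P|\le t\}\cap\Omega)\lesssim\mes(\Omega)\,(Ct)^{c/D}$ is not established, and as stated it is not obtainable. Because $\Omega$ is only assumed to be a subset of $\{\norm{x}\le R_0\}$ of measure $>1$ — not a cube, a ball, or anything convex — a line slice of $\Omega$ can be an arbitrarily short (or empty) set, and the one-variable Remez inequality then gives nothing when you try to integrate over lines. Likewise, the auxiliary lower bound $\sup_\Omega|P|\gtrsim (N,D,R_0)^{-O(1)}$ that you want to extract from $\max_\alpha|a_\alpha|=1$ together with $\mes(\Omega)>1$ is itself a sublevel-set statement over the arbitrary $\Omega$, so the plan is circular at that point.

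The paper sidesteps both issues by proving the sublevel bound on the \emph{full} ball $\{\norm{x}\le R_0\}$, not on $\Omega$, via the Cartan-type Lemma~\ref{lem:apx-logP-Cartan}: Cauchy estimates give a complex point $z_0$ with $|P(z_0)|\ge 1$, a short trick (evaluating $P(z_0-10\zeta\Im z_0)$ at $\Im\zeta=1/10$) produces a \emph{real} point $x_0$ with $\log|P(x_0)|\gtrsim -M_{R_0}$, and then one-variable Cartan slicing through $x_0$ yields $\mes\{\norm{x}\le R_0:\log|P(x)|\le -C_0HM_{R_0}\}\le C_1^NR_0^Ne^{-H}$, with no convexity needed. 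The hypothesis $\mes(\Omega)>1$ is then used only to ensure the uniform density on $\Omega$ is $\le 1$, so Lemma~\ref{lem:apx-Cartan->integrability} converts the Lebesgue sublevel bound on the ball directly into the moment bound over $\Omega$; the extra $N^2\log^2 R_0$ you could not account for is exactly $H_0^2$ with $H_0\sim N\log R_0$, the threshold at which the $C_1^NR_0^N$ volume prefactor is beaten by $e^{-H}$. Your layer-cake integration would track this correctly, but without the Cartan input (or a proved Remez substitute on the full ball rather than on $\Omega$) the argument is incomplete.
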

\begin{proof}
The polynomial $ P $ has at most $ (N+1)^D $ monomials, so for $ R\ge e $ we have 
\begin{equation*}
	\sup_{\norm{z}\le R}\log|P(z)|\le \log(R^D(N+1)^D)\lesssim D\log(N+1)\log R. 
\end{equation*}
Lemma~\ref{lem:apx-logP-Cartan} implies that 
\begin{equation*}
	\mes\{x\in\R^N:~\norm{x}\le R,~ \log|P(x)|\le-CHD\log(N+1)\log(20R)\} \le C^N R^N\exp(-H), 
\end{equation*}
for $ H\gg 1 $. The conclusion follows from Lemma~\ref{lem:apx-Cartan->integrability}. 
\end{proof}
\begin{lemma}
\label{lem:var-sectorarea} Let $ \sigma $ be the spherical measure on the $ (n-1) $-sphere $ 
S^{n-1} $. 
\begin{equation*}
	\sigma(\{\xi\in S^{n-1}:~\min_i |\xi_i|\ge \epsilon\}) \ge n 2^n (1-\sqrt{n}\epsilon)^n. 
\end{equation*}
\end{lemma}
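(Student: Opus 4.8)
The plan is to translate the spherical-measure bound into a volume bound in $\R^n$ by passing to the solid cone over the set. For a Borel set $A\subseteq S^{n-1}$ write $C_A=\{t\xi:\ 0\le t\le1,\ \xi\in A\}$ for the truncated cone over $A$; integrating in polar coordinates gives the exact identity $\mes(C_A)=\sigma(A)\int_0^1 r^{n-1}\,dr=\sigma(A)/n$. So it is enough to prove, for $A=\{\xi\in S^{n-1}:\ \min_i|\xi_i|\ge\epsilon\}$, the bound $\mes(C_A)\ge 2^n\big(n^{-1/2}-\epsilon\big)^n$. We may assume $\sqrt n\,\epsilon\le1$, since otherwise $A$ is empty (any $\xi\in A$ would force $1=\sum_i\xi_i^2\ge n\epsilon^2$) and there is nothing to prove.

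First I would exhibit an explicit subset of $C_A$. If $x\in\R^n$ satisfies $\norm{x}\le1$ and $|x_i|\ge\epsilon$ for every $i$, then $|x_i|\ge\epsilon\ge\epsilon\norm{x}$, so $x/\norm{x}\in A$ and hence $x\in C_A$; thus $\{x\in\R^n:\ \norm{x}\le1,\ |x_i|\ge\epsilon\ \text{for all }i\}\subseteq C_A$. Now for each sign vector $\delta\in\{-1,1\}^n$ set $B_\delta=\{x\in\R^n:\ \delta_i x_i\in[\epsilon,n^{-1/2}]\ \text{for all }i\}$. Every $x\in B_\delta$ has $|x_i|\in[\epsilon,n^{-1/2}]$, hence $|x_i|\ge\epsilon$ and $\norm{x}^2=\sum_i x_i^2\le n\cdot n^{-1}=1$, so $B_\delta$ is contained in the set above; moreover distinct $B_\delta$ lie in distinct closed octants of $\R^n$, so they have pairwise disjoint interiors. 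Summing the volumes over the $2^n$ sign vectors,
\[
\mes(C_A)\ \ge\ \sum_{\delta\in\{-1,1\}^n}\mes(B_\delta)\ =\ 2^n\big(n^{-1/2}-\epsilon\big)^n\ =\ 2^n n^{-n/2}(1-\sqrt n\,\epsilon)^n,
\]
and multiplying by $n$ yields the claimed lower bound for $\sigma(A)$.

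The argument is elementary and I do not foresee a genuine obstacle. The two places that need a little care are the polar-coordinate identity $\mes(C_A)=\sigma(A)/n$ (standard, once the normalization $\sigma(S^{n-1})=n\,\mes(B^n)$ is fixed) and the side length $n^{-1/2}$ of the boxes $B_\delta$: this is exactly the largest value keeping each $B_\delta$ inside the unit ball, and the implication $|x_i|\ge\epsilon\ \Rightarrow\ |x_i|\ge\epsilon\norm{x}$ that lets us replace the cone condition by the simpler one holds only because $\norm{x}\le1$ on $C_A$. Using all $2^n$ octants rather than the positive one alone is what produces the factor $2^n$.
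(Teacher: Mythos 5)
Your computation of $\mes(C_A)$ is fine, but the reduction step contains a slip: since $\sigma(A)=n\,\mes(C_A)$, to obtain $\sigma(A)\ge n\,2^n(1-\sqrt n\,\epsilon)^n$ you would need $\mes(C_A)\ge 2^n(1-\sqrt n\,\epsilon)^n$, whereas you declare the target to be $2^n\bigl(n^{-1/2}-\epsilon\bigr)^n=2^n n^{-n/2}(1-\sqrt n\,\epsilon)^n$, which is smaller by a factor $n^{n/2}$. What your box argument actually establishes is
\[
\sigma(A)\ \ge\ n\Bigl(\tfrac{2}{\sqrt n}\Bigr)^n(1-\sqrt n\,\epsilon)^n,
\]
and not the bound stated in the lemma; the last sentence of your proof glosses over this.

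Interestingly, the bound you actually proved is the correct one, and the lemma as printed is false for $n\ge2$: at $\epsilon=0$ the right-hand side is $n\,2^n$, which already exceeds $\sigma(S^{1})=2\pi$, and $\sigma(S^{n-1})\to0$ as $n\to\infty$ while $n\,2^n\to\infty$. The paper's own proof contains a parallel slip. It takes $\Omega=\{x:\ 1\le\min_i|x_i|,\ \max_i|x_i|\le 1/(\sqrt n\,\epsilon)\}$ and asserts $\Omega\subset\{r\xi:\ \xi\in\Theta,\ r\in[1,1/(\sqrt n\,\epsilon)]\}$; but for $x\in\Omega$ one has $n\le\norm{x}^2\le 1/\epsilon^2$, so the radial range should be $[\sqrt n,\,1/\epsilon]$, and redoing the final display with the correct range yields exactly $\sigma(\Theta)\ge n(2/\sqrt n)^n(1-\sqrt n\,\epsilon)^n$, i.e.\ the same bound as yours. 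The weaker correct bound is still sufficient where the lemma is used: in the proof of Theorem~\ref{thm:intro-var-log-rational} it is invoked with $\epsilon=1/(2\sqrt K)$ and $n=K$, the corrected estimate gives $\sigma(\Theta)\ge K\,K^{-K/2}2^{-K}$ instead of $K\,2^{-K}$, and the extra factor $K^{-K/2}=\exp(-\tfrac12 K\log K)$ is absorbed by $\exp(CK^2)$ in the final count once $C\ge C_1$ is large enough.

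Methodologically, your argument is essentially the paper's: a simple region in $\R^n$ is inscribed into a solid cone over $A$, and its Lebesgue measure is compared to the polar-coordinate integral. Your version is a bit cleaner in two respects: coning from the origin makes $\mes(C_A)=\sigma(A)/n$ an identity rather than an inequality, and you realize the inscribed region explicitly as a union of $2^n$ disjoint boxes. Up to the dilation $x\mapsto\epsilon x$, your union of boxes is precisely the paper's hollow cube $\Omega$.
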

\begin{proof}
Let $ \Theta $ be the set whose measure we want to estimate and let 
\begin{equation*}
	\Omega=\{x\in \R^n:~1\le \min_i |x_i|,~\max_i |x_i|\le 1/(\sqrt{n}\epsilon)\}. 
\end{equation*}
Then we have 
\begin{equation*}
	\Omega\subset\{r\xi:~\xi\in\Theta,~r\in[1,1/(\sqrt{n}\epsilon)]\}, 
\end{equation*}
and the conclusion follows from 
\begin{equation*}
	2^n \left(\frac{1}{\sqrt{n}\epsilon}-1\right)^n =\mes(\Omega)\le 
	\int_{\Theta}\int_{1}^{1/(\sqrt{n}\epsilon)}r^{n-1}drd\sigma(\xi) \le 
	\frac{1}{n}\left(\frac{1}{\sqrt{n}\epsilon}\right)^n \sigma(\Theta). 
\end{equation*}
\end{proof}
\begin{proof}
[Proof of Theorem~\ref{thm:intro-var-log-rational}] Set $ h(x):= \log(|P(x)|/|Q(x)|) $. Due to 
\eqref{eq:var-var>Scvar} one has 
\begin{equation}\label{eq:var-Bessel-h}
	\var(h)\ge \sum_k \var(\cex{h}{\mc J_k})=\sum_k\var(h_k), 
\end{equation}
where $ \mc J_k $ is the $ \sigma $-algebra corresponding to fixing the components with indices in 
$ J_k $, and $ h_k(x)=\ex{h(x,\cdot)} $, $ x\in \R^{J_k} $.

To provide a lower bound for $ \var(h_k) $ we will pass to a uniform distribution and we will use 
hyper-spherical coordinates to pass to a one-dimensional problem. Let $ I=[M_0/(2\sqrt{K}),M] 
$, with $ 
M=2^{10^6}K M_0$, $ M_0=2\sqrt{K}T $, $ T=B_0\exp(CK) $, $ C\gg 10^6 $. We define 
\begin{equation*}
	\Theta=\{\xi\in S^{K-1}:~\min_i \xi_i\ge 1/(2\sqrt{K})\}
\end{equation*}
and 
\begin{equation*}
	\Omega=\{x\in \R^K: x=r\xi,~r\in[M_0,M],~\xi\in\Theta\}.
\end{equation*}
The peculiar choice of $ \Theta $ is so that we will be able to use the assumptions on $ Q $. Note 
that for $ x\in \Omega $ we have $ x_i\in I $. Furthermore, by Lemma~\ref{lem:var-sectorarea} we 
have $ \sigma(\Theta)\ge K2^{-K} $ and consequently 
$ \mes(\Omega)\ge 2^{-K}(M^K-M_0^K) $.  By \eqref{eq:var-var>cvar}  we have 
\begin{equation*}
	\var(h_k)\ge (\inf\nolimits_{I} v)^K \mes(\Omega)\var_{\Omega}(h_k) \ge (\inf\nolimits_I 
	v)^K2^{-K}(M^K-M_0^K)\var_{\Omega}(h_k). 
\end{equation*}
Changing variables to hyper-spherical coordinates we have 
\begin{equation*}
	\var_{\Omega}(h_k)=\var_{\eta}(h_k),
\end{equation*}
where 
\begin{equation*}
	d\eta:=(Kr^{K-1}dr/(M^K-M_0^K))\times (d\sigma/\sigma(\Theta)) 
\end{equation*}
is the probability measure on $ \mc R=[M_0,M]\times\Theta $. Using \eqref{eq:var-var>cvar} we can 
pass to the uniform distribution on $ \mc R $: 
\begin{equation*}
	\var_\Omega(h_k)\ge K (M-M_0)M_0^{K-1}/(M^K-M_0^K)\var_{\mc R}(h_k). 
\end{equation*}
Finally, due to \eqref{eq:var-infvar} we have 
\begin{equation*}
	\var_{\mc R}(h_k)\ge \essinf_{\xi\in\Theta}\var_{[M_0,M]}(h_k(\cdot,\xi)), 
\end{equation*}
where $ h_k(r,\xi)=h_k(r\xi) $. In conclusion we have 
\begin{equation}
	\label{eq:var-lb-essinf-r} \var(h_k)\ge K(M-M_0)M_0^{K-1} 2^{-K} (\inf\nolimits_I v)^K 
	\essinf_{\xi\in\Theta}\var_{[M_0,M]}(h_k(\cdot,\xi)). 
\end{equation}

To be able to use the assumption on $ Q $ we want to work with a truncated version of $ h_k $ 
obtained by averaging only on $ \mc G_k:=\R^{J'_k}\setminus\mc B(k,T) $, Passing from the variance 
of $ h_k $ to the variance of the truncated function will depend on having an explicit bound on the 
second moment of $ h_k $. The bound will follow using Lemma~\ref{lem:var-logP-2nd-moment} after an 
appropriate normalization. We know $ P $ and $ Q $ are polynomials in $ r $ and we can write $ 
P(r,\xi,x')=\sum_i a_i(\xi,x') r^i $, $ Q(x)=\sum_i b_i(\xi,x') r^i $. Let $ A(\xi,x')=\max_i 
|a_i(\xi,x')|$, $ B(\xi,x')=\max_i |b_i(\xi,x')|$, and define $ 
\hat{P}(r,\xi,x')=P(r,\xi,x')/A(\xi,x') $, $ \hat{Q}(r,\xi,x')=Q(r,\xi,x')/B(\xi,x') $, and 
\begin{equation*}
	\hat{h}=\log|\hat{P}/\hat{Q}|.
\end{equation*} These functions are well-defined for $ \sigma\times\nu $-almost 
all $ (\xi,x') $. From now on we fix $ \xi $ such that the functions are well-defined for $ \nu 
$-almost all $ x' $. Of course, this means $ \xi $ must be outside a set of measure $ 0 $, but this 
doesn't affect the essential infimum in \eqref{eq:var-lb-essinf-r}. Since $ 
\ex{|\log|A(\xi,\cdot)||}, 
\ex{|\log|B(\xi,\cdot)||}<\infty $ we have 
\begin{equation*}
	\var_{[M_0,M]}(h_k(\cdot,\xi))=\var_{[M_0,M]}(\hat{h}_k(\cdot,\xi)), 
\end{equation*}
where 
\begin{equation*}
	\hat{h}_k(r,\xi)=h_k(r,\xi)-\ex{\log|A(\xi,\cdot)|}+\ex{\log|B(\xi,\cdot)|}. 
\end{equation*}
Using Lemma~\ref{lem:var-logP-2nd-moment} we obtain 
\begin{multline*}
	\ex[{[M_0,M]}]{\hat{h}^2_k(\cdot,\xi)} =\int_{[M_0,M]} 
	\left(\int_{\R^{J'_k}}\hat{h}(r,\xi,x')\,d\nu(x')\right)^2 \,dm_{[M_0,M]}(r)\\
	\le \int_{\R^{J'_k}} \left(\int_{[M_0,M]}\hat{h}^2(r,\xi,x') \,dm_{[M_0,M]}(r) \right)d\nu(x') 
	\lesssim K^2\log^4 M. 
\end{multline*}
We now introduce the truncated version of $ \hat{h}_k $: 
\begin{equation*}
	\tilde{h}_k(r,\xi)=\int_{\mc G_k}\hat{h}(r,\xi,x')\frac{d\nu(x')}{\pr{\mc G_k}}. 
\end{equation*}
By the same argument as for $ \hat{h}_k(\cdot,\xi) $ we have $ 
\ex[{[M_0,M]}]{\tilde{h}^2_k(\cdot,\xi)}\lesssim K^2\log^4 M $ and 
\begin{equation*}
	\ex[{[M_0,M]}]{(\hat{h}_k(\cdot,\xi)-\pr{\mc G_k}\tilde{h}_k(\cdot,\xi))^2} \lesssim \pr{\mc 
	B(k,T)}K^2\log^4 M. 
\end{equation*}
We now get 
\begin{multline*}
	|\var_{[M_0,M]}(\hat{h}_k(\cdot,\xi))-\var_{[M_0,M]}(\pr{\mc G_k}\tilde{h}_k(\cdot,\xi))|\\
	\le \ex[{[M_0,M]}]{(\hat{h}_k(\cdot,\xi)-\pr{\mc G_k}\tilde{h}_k(\cdot,\xi))^2}^{1/2}
		\left(\ex[{[M_0,M]}]{\hat{h}^2_k(\cdot,\xi)}^{1/2}
			+\ex[{[M_0,M]}]{\tilde{h}^2_k(\cdot,\xi)}^{1/2}\right)\\
	\lesssim \pr{\mc B(k,T)}^{1/2}K^2\log^4 M. 
\end{multline*}
We claim that $ \var_{[M_0,M]}(\tilde{h}_k(\cdot,\xi))\ge 2^{-10^6K} $. Since we chose
\begin{equation*}
	T=B_0\exp(CK),\,C\gg 10^6
\end{equation*}
it follows that
\begin{multline*}
	\label{eq:var-truncatedlb} \var_{[M_0,M]}(h_k(\cdot,\xi))\ge \pr{\mc 
	G_k}^2\var_{[M_0,M]}(\tilde{h}_k(\cdot,\xi)) -C\pr{\mc B(k,T)}^{1/2}K^2\log^4 M\\
	\ge\var_{[M_0,M]}(\tilde{h}_k(\cdot,\xi))/2\ge 2^{-10^6K}/2. 
\end{multline*}
From this, \eqref{eq:var-lb-essinf-r}, and \eqref{eq:var-Bessel-h} it follows that
\begin{equation*}
	\var(h)\ge N'K(M-M_0)M_0^{K-1} 2^{-(K+1)} 2^{-10^6K} (\inf\nolimits_I v)^K.
\end{equation*}
Note that by our choice of $ M_0,M,T $ we have
\begin{equation*}
	K(M-M_0)M_0^{K-1} 2^{-(K+1)} 2^{-10^6K}=\exp(CK^2)\ge 1,
\end{equation*}
so the desired lower bound on variance follows. The case 
\begin{equation*}
	I=[-M_1,-M_0/(2\sqrt{K})]
\end{equation*}
 follows 
analogously. Note that in fact we obtained a better estimate than the one 
stated in the theorem. However, it can be seen that $ (\inf\nolimits_I v)^K\le \exp(-C'K^2) $ with 
$ C'\gg C $, so the estimate won't be substantially better than the stated one. 

Now we just have to show that $ \var_{[M_0,M]}(\tilde{h}_k(\cdot,\xi))\ge 2^{-10^6K} $. Using 
\eqref{eq:var-var>cvar} we get 
\begin{equation}
	\label{eq:var-Mxi} \var_{[M_0,M]}(\tilde{h}_k(\cdot,\xi)) \ge (M_\xi-M_0)/(M-M_0) 
	\var_{[M_0,M_\xi]}(\tilde{h}_k(\cdot,\xi)), 
\end{equation}
with $ M_\xi\in(M_0,M) $ to be chosen later. We provide a lower bound for $ 
\var_{[M_0,M_\xi]}(\tilde{h}_k(\cdot,\xi)) $ by applying 
Proposition~\ref{prop:var-estimates-for-log-potential}. We first need to set-up $ \tilde{h}_k $ as 
the difference of two logarithmic potentials. Without loss of generality we may assume that $ \hat 
P $ and $ \hat Q $ are monic polynomials in $ r $ (we can force them to be so, without changing the 
variance). Let $ D_k $ be the degree in $ r $ of $ \hat{P}(r,\xi_0,x') $. If $ D_k=0 $ then the 
term corresponding to $ \hat P $ won't contribute to the variance. So, we only deal with the case $ 
D_k\ge 1 $. It is well-known that there exist measurable functions $ \zeta_j $ such that 
\begin{equation*}
	\hat{P}(r,\xi,x')=\prod\limits_{j=1}^{D_k}(r-\zeta_j(x')). 
\end{equation*}
Let $ \mu_j $ be the push-forward of the measure $ (\nu\vert_{\mc G_k})/\pr{\mc G_k} $ under the 
map $ x' \to \zeta_j(x') $. Let $ u_k(r)=\int_{\C}\log|r-\zeta|\,d\mu_P(\zeta) $, where $ \mu_P $ 
is the probability measure defined by $ \mu_P=D_k^{-1} \sum_j \mu_j$. Analogously, we define $ 
v_k(r)=\int_{\C}\log|r-\zeta|\,d\mu_Q(\zeta) $ to be the logarithmic potential corresponding to $ 
\hat{Q}(r,\xi_0,x') $. Note that both $ u_k $ and $ v_k $ are square summable, and furthermore by 
the choice of $ \mc G_k $ and $ \Theta $ we have $ \mu_Q(|\zeta|\ge 2\sqrt{K}T)=0 $ (this is 
equivalent to saying that $ \hat{Q}(r,\xi,x')\neq0 $, for $ |r|\ge2\sqrt{K}T $, $ \xi\in \Theta $, 
$ x'\in \mc G_k $, which is true by assumption (ii) of the theorem). We have 
\begin{equation*}
	\tilde{h}_k(r,\xi)=D_ku_k(r)-Kv_k(r). 
\end{equation*}
By part (iii) of Proposition~\ref{prop:var-estimates-for-log-potential} we get 
\begin{equation*}
	\var_{[M_0,M_{\xi}]}(v_k)\ge 1-(4K)^{-1}, 
\end{equation*}
for any $ M_{\xi}\ge 4^5 10^{20}K^5 M_0 $. Using part (v) of 
Proposition~\ref{prop:var-estimates-for-log-potential} we choose
\begin{equation*}
	M_{\xi}\in[2 \cdot4^5 10^{20} K^5 M_0, 2^{4\cdot 10^5K} 4^5 10^{20} K^5 M_0]\subset (M_0,M_1), 
\end{equation*}  such that 
\begin{equation*}
	\var_{[M_0,M_{\xi}]}(u_k)\le 1+(4K)^{-1}. 
\end{equation*}
Using \eqref{eq:var-triangle} we have 
\begin{multline*}
	\var_{[M_0,M_{\xi}]}(\tilde{h}_k(\cdot,\xi)) \ge\left(\var^{1/2}_{[M_0,M_{\xi}]}(D_ku_k) 
	-\var^{1/2}_{[M_0,M_{\xi}]}(Kv_k)\right)^2\\
	\ge\left(K(1-(4K)^{-1})^{1/2}-(K-1)(1+(4K)^{-1})^{1/2}\right)^2 \ge 1/4. 
\end{multline*}
Plugging the above estimate in \eqref{eq:var-Mxi} yields that 
\begin{equation*}
	\var_{[M_0,M]} (\tilde h_k(\cdot,\xi))
	\ge \frac{M_0(2\cdot4^5 10^{20} K^5-1)}{4M_0(2^{10^6K}-1)}\ge 2^{-10^6K}. 
\end{equation*}
This concludes the proof.
\end{proof}

\section{Analysis of the determinant and of the minors as polynomials in terms of the 
potentials}\label{sec:analysis}

Let $ f_\Lambda^E=\det(H_\Lambda-E) $ and let $ g_\Lambda^E(i,j) $ be the $ (i,j) $ minor of $ 
H_\Lambda-E $. In this section we are interested in $ f_\Lambda^E $ and $ g_\Lambda^E(i,j) $ as 
polynomials in $ (V_i)_{i\in\Lambda} $. We will prove Theorem~\ref{thm:intro-var-log-Green}, as a 
consequence of Theorem~\ref{thm:intro-var-log-rational}, and we will provide bounds on the moments 
of $ \SigmaLE $, which will be needed in section \ref{sec:decay}. The properties of $ f_\Lambda^E $ 
and $ g_\Lambda^E(i,j) $ that are needed for these results are established in the next two 
propositions.

In the following it is useful to keep in mind that if we order the points of $ \Z_W $ 
lexicographically, i.e. $ i < j $ if $ i_1 < j_1 $, or $ i_1 = j_1 $ and $ i_2 < j_2 $, then the 
matrix of $H_{\Lambda}$, $ \Lambda=[a,b]_W $, is 
\begin{equation*}
\begin{bmatrix}
	S_a & - I & 0 & 0 &\hdotsfor{2}\\
	-I & S_{a+1}& -I & 0 & \hdotsfor{2}\\
	\ldots & \ddots & \ddots & \ddots &\hdotsfor{2}\\
	\hdotsfor{2} & \ddots & \ddots & \ddots & \ldots\\
	\hdotsfor{2} & 0 & -I & S_{b-1} & -I\\
	\hdotsfor{2} & 0 & 0 & -I & S_{b}\\
\end{bmatrix}
. 
\end{equation*}

	For the application of Theorem~\ref{thm:intro-var-log-rational} we will only need the first 
	part of the following result. The second part will be needed for establishing the Cartan type 
	estimate for $ \log \SigmaLE $ in Lemma~\ref{lem:analysis-Cartan-R}.
\begin{proposition}
\label{prop:analysis-minors} Let $ i,j\in \Lambda=[a,b]_W $ be such that $ i_1<j_1 $ and let $ 
n\in(i_1,j_1) $.
\begin{enumerate}
	[(i)] 
	\item The degree of $ g_\Lambda^E(i,j) $ as a polynomial of $ (V_k)_{k\in \{n\}_W} $ is at most $ 
	W-1 $.
	
	\item If $ i_2=j_2 $ then the polynomial $ [g_\Lambda^E(i,j)](V) $ has a monomial whose 
	coefficient is $ \pm 1 $. Furthermore, the degree of $ [g_\Lambda^E(i,j)](V) $ as a polynomial of 
	$ (V_k)_{k\in \{n\}_W} $ is $ W-1 $. 
\end{enumerate}
\end{proposition}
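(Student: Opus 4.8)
The plan is to read both parts off the permutation (Leibniz) expansion of the determinant, exploiting the block--tridiagonal shape of $H_\Lambda$. Write $M$ for the matrix obtained from $H_\Lambda-E$ by deleting the row indexed by $i$ and the column indexed by $j$, so that $g_\Lambda^E(i,j)=\det M$. Three structural remarks are used throughout. (1) Each variable $V_k$ occurs in $H_\Lambda-E$ only in the single diagonal entry at position $(k,k)$, and linearly there; hence in $M$ the variable $V_k$ occurs (linearly, in $M_{k,k}$) precisely when $k\notin\{i,j\}$, and the $W$ variables $V_{(n,1)},\dots,V_{(n,W)}$ occupy $W$ distinct positions, all on the diagonal of the $n$-th diagonal block of $M$. (2) In block notation $M_{(p,t),(q,t')}=0$ whenever $|p-q|\ge2$, and since the off-diagonal blocks of $H_\Lambda$ equal $-I$, for $|p-q|=1$ the only nonzero entries are $M_{(p,t),(q,t)}=-1$. (3) Because $i_1<n<j_1$, both $i$ and $j$ lie in blocks other than the $n$-th, so the $n$-th block of $M$ retains all $W$ of its rows and all $W$ of its columns.

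\emph{Part (i).} Since $\det M$ is multilinear in the $W$ variables $V_{(n,t)}$, it suffices to rule out a term of total degree $W$ in these variables, i.e.\ a nonzero term $\sgn(\pi)\prod_r M_{r,\pi(r)}$ of the expansion for which $\pi$ fixes all $W$ diagonal positions of the $n$-th block. Suppose such $\pi$ existed. For a row $(p,t)$ with $p<n$, remark (2) forces $\pi((p,t))=(q,t')$ with $q\le p+1\le n$, and $q=n$ is impossible since the $n$-th block columns are already used up by the $n$-th block rows; thus $\pi$ maps the rows lying in blocks $<n$ into the columns lying in blocks $<n$, and symmetrically rows in blocks $>n$ into columns in blocks $>n$. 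But the deleted row $i$ lies in a block $<n$ and the deleted column $j$ in a block $>n$, so after the deletions there are strictly more rows in blocks $>n$ than there are columns in blocks $>n$, and no such injection exists. Hence the degree of $g_\Lambda^E(i,j)$ in $(V_{(n,t)})_{t=1}^W$ is at most $W-1$.

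\emph{Part (ii).} Put $s:=i_2=j_2$ and let $Z:=(\Lambda\setminus\{i,j\})\setminus\{(m,s):i_1<m<j_1\}$, the set of all index positions of $\Lambda$ except the interior of the ``$s$-thread'' between columns $i_1$ and $j_1$ (and except $i$ and $j$). I would test the monomial $\mathbf V^Z:=\prod_{(m,t)\in Z}V_{(m,t)}$. It is produced by the permutation $\pi_0$ which is the identity off the thread and shifts the thread rows down one block, $\pi_0((m,s))=(m-1,s)$ for $i_1<m\le j_1$; a direct check shows $\pi_0$ is a bijection from the rows $\Lambda\setminus\{i\}$ to the columns $\Lambda\setminus\{j\}$, all of whose entries $M_{r,\pi_0(r)}$ are nonzero --- on $Z$ they are the diagonal entries $(S_{tt}-E)+V_{(m,t)}$, along the thread they equal $-1$ --- so $\pi_0$ contributes $\sgn(\pi_0)(-1)^{j_1-i_1}\prod_{(m,t)\in Z}\bigl((S_{tt}-E)+V_{(m,t)}\bigr)$ to $\det M$, whose top-degree part is $\pm\mathbf V^Z$. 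The crux --- and the step I expect to be the main obstacle --- is to show that \emph{no other} permutation contributes to $\mathbf V^Z$. Matching $\mathbf V^Z$ forces any contributing $\pi$ to use the diagonal entry at every position of $Z$ and to avoid the diagonal at every interior thread position $(m,s)$, $i_1<m<j_1$; by remark (2) such a $\pi$ cannot move a thread position sideways within its own block (those columns are already taken), so it must shift thread positions up or down, and inspecting which surviving row can be matched to the column $(i_1,s)$ forces $\pi((i_1+1,s))=(i_1,s)$; propagating this one block at a time along the thread forces $\pi=\pi_0$. Therefore the coefficient of $\mathbf V^Z$ in $g_\Lambda^E(i,j)$ is $\sgn(\pi_0)(-1)^{j_1-i_1}=\pm1$, which is the first claim of (ii). For the second claim, observe that $i_1<n<j_1$ gives $(n,t)\in Z$ for all $t\ne s$ while $(n,s)\notin Z$, so $\mathbf V^Z$ has degree exactly $W-1$ in $(V_{(n,t)})_{t=1}^W$; together with part (i) this shows the degree of $g_\Lambda^E(i,j)$ in $(V_{(n,t)})_t$ equals $W-1$. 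The routine pieces left to fill in are the verification that $\pi_0$ is a well-defined bijection with nonzero entries and the multilinearity remark used in (i).
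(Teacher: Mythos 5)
Your proof is correct and takes essentially the same approach as the paper's. For part (ii) both arguments identify the same monomial (the thread monomial $\mathbf V^Z$, which is the paper's $V^\alpha$), reduce to the Leibniz expansion, and pin down the unique contributing permutation by propagating the constraint along the thread; the only cosmetic difference is that the paper works with the modified matrix $H_\Lambda^{ij}$ while you work directly with the minor $M$. For part (i) the paper phrases the vanishing of the coefficient of $\prod_{k\in\{n\}_W}V_k$ as a degenerate block structure of the reduced submatrix, whereas you express the same obstruction as a row/column counting mismatch in the permutation expansion — these are the same observation in slightly different language.
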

\begin{proof}
It is enough to prove the result for $ E=0 $.

(i) $ g_\Lambda^E(i,j) $ is the determinant of a matrix of the form 
\begin{equation*}
	\begin{bmatrix}
		* & * & 0 \\
		* & S_n & * \\
		0 & * & * 
	\end{bmatrix}
	, 
\end{equation*}
where the top-right corner entry is a $ (p-1)\times(q-1) $ matrix and the lower-left corner entry 
is a 

$ q\times p $
 matrix, with $ p=(n-a)W $ and $ q=(b-n)W 
$. The coefficient of the monomial $ \prod_{k\in\{n\}_W}V_k $ is (up to sign) the determinant of 
the matrix obtained by removing the rows and and the columns corresponding to $ S_n $. This matrix 
is of the form 
\begin{equation*}
	\begin{bmatrix}
		* & * & 0 \\
		0 & 0 & * \\
		0 & 0 & * 
	\end{bmatrix}
	, 
\end{equation*}
where the entries on the diagonal are blocks of size $ (p-1)\times(p-1) $, $ 1\times1 $, and $ 
(q-1)\times(q-1) $ respectively. Hence the determinant is zero and the conclusion follows.

(ii) For fixed $ i,j\in\Lambda $ let $ H_\Lambda^{ij} $ be the operator corresponding the matrix 
obtained from $ H_\Lambda $ by making all entries on the $ i $-th row and on the $ j $-th column 
zero, except for the $ (i,j) $-th entry which is set to $ 1 $. Up to sign, $ g_\Lambda^E(i,j) $ is 
the determinant of $ H_\Lambda^{ij} $. We will use $ h $ to denote the entries of the matrix 
representation of $ H_\Lambda^{ij} $. By the Leibniz formula for determinants 
\begin{equation*}
	g_\Lambda^E(i,j)=\sum_\sigma \sgn(\sigma)\prod_{l\in\Lambda}h_{l,\sigma(l)}, 
\end{equation*}
where $ \sigma $ runs over all permutations of $ \Lambda $. We are interested in the non-zero terms 
from the above sum that are divisible by $ V^\alpha $ where $ \alpha\in \{0,1\}^\Lambda $ and 
\begin{equation*}
	\alpha_l=
	\begin{cases}
		1 & \text{if } l_1\notin[i_1,j_1]\text{, or } l_1\in[i_1,j_1] \text{ and } l_2\neq j_2\\
		0 & \text{otherwise} 
	\end{cases}
	. 
\end{equation*}
For each $ l $ there are at most $ W+2 $ values for $ \sigma(l) $ such that $ h_{l,\sigma(l)} $ is 
not zero. The permutations $ \sigma $ corresponding to non-zero terms divisible by $ V^\alpha $ 
must satisfy $ \sigma(l)=l $ when $ \alpha_l=1 $. 
	It follows that for such permutations we have $ \sigma([i_1,j_1]\times \{ j_2 
	\})=[i_1,j_1]\times \{ j_2 \} $. Note that by our definition of $ H_\Lambda^{ij} $ we must
	have $ \sigma(i_1,j_2)=(j_1,j_2) $. Hence we must have $ \sigma((i,j_2)) = 
	(i-1,j_2) $, for any $ i\in(i_1,j_1] $. So $ h_{l,\sigma(l)}=\pm 1 $, whenever $ \alpha_l=0 $.
	
This shows that the monomial $ V^\alpha $ has coefficient $ \pm 1 $. From this it also follows that 
the degree of $ [g_\Lambda^E(i,j)](V) $ as a polynomial of $ (V_k)_{k\in\{n\}_W} $ is at least $ 
W-1 $. Now the conclusion follows from part (i). 
\end{proof}
\begin{remark}
The second part of the previous proposition doesn't necessarily hold when $ i_2\neq j_2 $. In 
particular, it can be seen that $ g_\Lambda^E(i,j) $ is identically zero for any $ i,j\in \Lambda 
$, with $ i_2\neq j_2 $, provided that $ S=0 $. 
\end{remark}

For the next result we will need some bounds on the probability distribution of the resolvent. From 
\cite[Theorem II.1]{AM-93-Localization} we have 
\begin{equation}
\label{eq:Wegner-for-entries} \pr{|G_\Lambda^E(i,j)|\ge T}\lesssim A_0/T, 
\end{equation}
for any $ i,j\in \Lambda $. For future use we also note that in our setting the Wegner estimate 
\begin{equation}
\label{eq:Wegner} \pr{\norm{G_\Lambda^E}\ge T}\lesssim A_0 |\Lambda|/T, 
\end{equation}
follows, for example, from \cite[(2.4)]{CGK-09-Generalized}.
\begin{proposition}
\label{prop:analysis-determinant} Let $ \Lambda_0=\{n\}_W\subset\Lambda=[a,b]_W $. For any
\begin{equation*}
	T\ge 
	\max(|E|,\norm{S})
\end{equation*}
 there exists a set $ \mc{B}=\mc{B}(n,T)\subset \R^{\Lambda_0'} $, with $ 
\pr{\mc{B}}\lesssim WA_0/T $, such that
\begin{equation*}
	f_\Lambda^E(V,V')\neq 0
\end{equation*}
 for any $ V\in \C^{\Lambda_0} 
$, $ \min_{i\in \Lambda_0}|V_i|\ge 10WT $, $ V'\in \R^{\Lambda_0'}\setminus \mc{B} $. 
\end{proposition}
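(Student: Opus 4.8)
The plan is to exploit the block-tridiagonal structure of $H_\Lambda$ together with the Schur complement / resolvent expansion, so that $f_\Lambda^E(V,V')$ can be written, as a polynomial in $V = (V_i)_{i\in\Lambda_0}$ with $\Lambda_0 = \{n\}_W$, with leading term $\pm\big(\prod_{i\in\Lambda_0} V_i\big)\, f_{\Lambda^-}^E(V')\, f_{\Lambda^+}^E(V')$, where $\Lambda^- = [a,n-1]_W$ and $\Lambda^+ = [n+1,b]_W$. Indeed, expanding the determinant of the block-tridiagonal matrix along the block row and column corresponding to the site-column $n$, the top-degree-in-$V$ part (degree $W$, one factor of each $V_i$, $i\in\Lambda_0$) is precisely the product of the determinants of the two decoupled pieces $H_{\Lambda^-}-E$ and $H_{\Lambda^+}-E$; all other monomials in $V$ have strictly smaller cumulative degree. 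So for $|V_i|$ large the polynomial $f_\Lambda^E(V,V')$ is dominated by this leading term, and it suffices to control the zero sets of $f_{\Lambda^-}^E(\cdot)$ and $f_{\Lambda^+}^E(\cdot)$ in $V'$, together with the lower-order corrections.

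The key observation making this tractable is that $f_{\Lambda^\pm}^E = \det(H_{\Lambda^\pm}-E)$ is the characteristic-type polynomial of a \emph{self-adjoint} operator, so its vanishing is equivalent to $E$ being an eigenvalue of $H_{\Lambda^\pm}$, which we can detect via the resolvent. Concretely, I would argue: if $\norm{G_{\Lambda^\pm}^E(V')} \le T'$ for a suitable $T'$ (polynomial in the relevant parameters), then not only is $f_{\Lambda^\pm}^E(V')\neq 0$, but the lower-order terms in the expansion of $f_\Lambda^E(V,V')$ — which are built out of minors of $H_{\Lambda^\pm}-E$ and entries $-I$ coupling to column $n$ — are bounded in terms of $T'$, $\norm{S}$, $|E|$, and the $|V_i|$, with each omitted factor of $V_i$ compensated by a bounded quantity of size $O(WT')$. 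Thus choosing the threshold $\min_i |V_i| \ge 10WT$ (with $T \ge \max(|E|,\norm S)$ and absorbing $T'$ into $T$ up to constants) forces the leading term to strictly dominate the sum of all the others, hence $f_\Lambda^E(V,V') \neq 0$. The exceptional set $\mc B(n,T)$ is then taken to be $\{V' : \norm{G_{\Lambda^-}^E(V')} \ge cT\} \cup \{V' : \norm{G_{\Lambda^+}^E(V')} \ge cT\}$ for an appropriate constant $c$, and its measure is bounded by $\lesssim W A_0/T$ via the Wegner estimate \eqref{eq:Wegner} applied to $\Lambda^-$ and $\Lambda^+$ (noting $|\Lambda^\pm| \le |\Lambda|$, but more carefully one only needs the decoupled strips, and the factor $W$ in the statement suggests tracking the bound so that the $|\Lambda|$ in \eqref{eq:Wegner} is handled by the $1/T$ decay absorbing the length — in any case $\pr{\mc B} \lesssim W A_0/T$ after adjusting $T$).

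The main obstacle I anticipate is making the domination estimate for the lower-order terms clean and uniform: one must verify that every monomial of $f_\Lambda^E(V,V')$ of $V$-degree $< W$ carries a coefficient (a polynomial in $V'$, $E$, $\norm S$) that is controlled by the resolvent norm of the decoupled pieces — this requires identifying those coefficients as minors of $H_{\Lambda^-}-E$ and $H_{\Lambda^+}-E$ (times $\pm 1$ entries from the $-I$ couplings), and then bounding such minors by Cramer's rule in terms of $\norm{(H_{\Lambda^\pm}-E)^{-1}}$ and $\det(H_{\Lambda^\pm}-E)$. A convenient way to organize this is to write $f_\Lambda^E(V,V') = \det(H_{\Lambda^-}-E)\det(H_{\Lambda^+}-E) \cdot \det\!\big(\diag(V) - E' - \text{(rank-}2W\text{ coupling via Schur complements)}\big)$ where the Schur complement contributions are $\norm S$- and $E$-dependent $W\times W$ blocks of norm $\lesssim WT'$; then the inner determinant is a polynomial in $V$ of exact degree $W$ with unit leading coefficient, and it is nonzero once $\min_i|V_i|$ exceeds the norm of the perturbation, i.e. $\gtrsim WT'$. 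This reduces everything to the single clean fact that a diagonal matrix $\diag(V)$ minus a matrix of operator norm $< \min_i|V_i|$ is invertible, and the bookkeeping is then just tracking that $T' \lesssim T$.
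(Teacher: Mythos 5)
Your structural strategy is exactly the paper's: write
\begin{equation*}
	f_\Lambda^E = \det(H_{\Lambda_0'}-E)\,\det\!\big(H_\Lambda/H_{\Lambda_0'}-E\big),
	\qquad
	H_\Lambda/H_{\Lambda_0'} = \diag(V) + S - \Gamma_0 G_{\Lambda_0'}^E\Gamma_0^*,
\end{equation*}
note $\det(H_{\Lambda_0'}-E)=\det(H_{\Lambda^-}-E)\det(H_{\Lambda^+}-E)$, and conclude invertibility of the $W\times W$ Schur complement by diagonal dominance once $\min_i|V_i|$ exceeds $\norm{-E+S-\Gamma_0 G_{\Lambda_0'}^E\Gamma_0^*}$. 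That part is sound, and in the end you don't need the Cramer's-rule bookkeeping for the lower-order coefficients that you worry about --- the Schur-complement determinant formula packages it cleanly.

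The genuine gap is in your measure bound for $\mc B(n,T)$. You define $\mc B$ by $\norm{G_{\Lambda^\pm}^E}\ge cT$ and invoke the full Wegner estimate \eqref{eq:Wegner}, which yields $\pr{\mc B}\lesssim A_0|\Lambda^\pm|/T\sim A_0(b-a)W/T$, carrying an unwanted factor of the horizontal length $b-a$. This cannot be repaired by ``adjusting $T$'': the proposition needs the bound $\lesssim WA_0/T$ with the \emph{same} $T$ that governs the threshold $\min_i|V_i|\ge 10WT$, and downstream --- assumption (b) of Theorem~\ref{thm:intro-var-log-rational}, applied in the proof of Theorem~\ref{thm:intro-var-log-Green} --- this becomes $\pr{\mc B(k,T)}\le B_0K^2T^{-1}$ with a single constant $B_0$ independent of the box length. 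Enlarging the resolvent threshold by a factor of $(b-a)$ would force $T$, and hence the interval $I$ and the threshold $10WT$, to grow with the box, which defeats the theorem. The fix the paper uses is to observe that $\Gamma_0 G_{\Lambda_0'}^E\Gamma_0^*$ only sees the entries $G_{\Lambda_0'}^E(k,l)$ with $k,l$ in the two boundary columns $\{n-1\}_W$ and $\{n+1\}_W$ --- only $O(W^2)$ entries, not the whole matrix. Requiring each such entry to be $\le T$ gives $\norm{\Gamma_0 G_{\Lambda_0'}^E\Gamma_0^*}\le 4WT$, which is enough for the diagonal-dominance step, while the entrywise bound \eqref{eq:Wegner-for-entries}, $\pr{|G_{\Lambda_0'}^E(k,l)|\ge T}\lesssim A_0/T$, together with a union bound over those $O(W^2)$ entries (plus the measure-zero event $\det(H_{\Lambda_0'}-E)=0$ ensuring the Schur complement is defined), produces a bad set of measure $\lesssim W^2A_0/T$ uniformly in $b-a$, which is what the application needs.
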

\begin{proof}
Using \eqref{eq:apx-H-Lambda-direct-sum} and Lemma~\ref{lem:apx-Schur-complement} we have 
\begin{equation*}
	f_\Lambda^E=\det(H_\Lambda/H_{\Lambda_0'}-E) \det(H_{\Lambda_0'}-E),
\end{equation*}
 where 
\begin{equation}
	H_\Lambda/H_{\Lambda_0'}=H_{\Lambda_0}-\Gamma_0 G_{\Lambda_0'}^E \Gamma_0^* 
	=\diag(V_{(n,1)},\ldots,V_{(n,W)})+S-\Gamma_0 G_{\Lambda_0'}^E \Gamma_0^*. 
\end{equation}
If $ |G_{\Lambda_0'}^E(k,l)|\le T $ for any $ k,l\in 
\partial_\Lambda\Lambda_0$ then $ |(\Gamma_0 G_{\Lambda_0'}^E \Gamma_0^*)(i,j)|\le 4T $ for any $ 
i,j\in\Lambda_0 $, and consequently $ \norm{\Gamma_0 G_{\Lambda_0'}^E \Gamma_0^*}\le 4WT $. 
Furthermore, if we also have that $ \min_{i\in \Lambda_0}|V_i|\ge 10WT $ and $ 
T\ge\max(|E|,\norm{S}) $, then $ H_\Lambda/H_{\Lambda_0'}-E $ is invertible since 
\begin{equation*}
	\norm{\diag(V_{(n,1)},\ldots,V_{(n,W)})^{-1}}\cdot \norm{-E+S-\Gamma_0 G_{\Lambda_0'}^E 
	\Gamma_0^*} \le \frac{6WT}{10WT}<1. 
\end{equation*}
The conclusion follows by setting 
\begin{equation*}
	\mc{B} =\{V'\in \R^{\Lambda_0'}:\,|G_{\Lambda_0'}^E(k,l)|> T,\, k,l\in
	\partial_\Lambda\Lambda_0\} \cup \{V'\in \R^{\Lambda_0'}:\, \det(H_{\Lambda_0'}-E)=0\}. 
\end{equation*}
The bound on $ \pr{\mc{B}} $ follows from \eqref{eq:Wegner-for-entries}. 
\end{proof}

We can now prove Theorem~\ref{thm:intro-var-log-Green} 
\begin{proof}
[Proof of Theorem~\ref{thm:intro-var-log-Green}] The result follows by applying 
Theorem~\ref{thm:intro-var-log-rational} with 
\begin{equation*}
	P(V)=\sum|[g_\lambda^E(i,j)](V)|^2,\,Q(V)=|f_\Lambda^E(V)|^2,\,J_k=\{k\}_W,\,k\in(a,b).  
\end{equation*}
Note that $ P $ and $ Q $ are 
polynomials of real variables, but with possibly complex coefficients. The assumptions on $ P $ and 
$ Q $ are satisfied due to Proposition~\ref{prop:analysis-minors} and 
Proposition~\ref{prop:analysis-determinant}. 
\end{proof}

To establish the bounds on the moments we need the following Cartan's estimate for Green's function.
\begin{lemma}
\label{lem:analysis-Cartan-R} There exist absolute constants $ C_0 $ and $ C_1 $ such that for any 
$ R\ge e $ and $ H\gg 1 $ we have 
\begin{equation*}
	\mes\left\{V\in\R^\Lambda:~\norm{V}\le R,~ \log \SigmaLE \le -C_0H M_R \right\}\le 
	C_1^{|\Lambda|}R^{|\Lambda|}\exp(-H), 
\end{equation*}
where $ M_R=|\Lambda|\max(1,\log|E|,\log\norm{S})\log R $. 
\end{lemma}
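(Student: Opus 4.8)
The plan is to discard all but one term of $\SigmaLE$ and reduce to a Cartan-type bound for a single off-diagonal minor. Fix $i_0=(a,1)$, $j_0=(b,1)$ (assume $b>a$). As in the proof of Theorem~\ref{thm:intro-var-log-Green} one has $|G_\Lambda^E(i_0,j_0)|=|g_\Lambda^E(i_0,j_0)|/|f_\Lambda^E|$, so on the full-measure set $\{f_\Lambda^E\neq0\}$,
\begin{equation*}
\log\SigmaLE\ \ge\ 2\log|G_\Lambda^E(i_0,j_0)|\ =\ 2\log|g_\Lambda^E(i_0,j_0)|-2\log|f_\Lambda^E|.
\end{equation*}
Hence it suffices to bound $\log|f_\Lambda^E|$ from above, deterministically on $\{\norm V\le R\}$, and $\log|g_\Lambda^E(i_0,j_0)|$ from below off an exceptional set of the asserted measure.

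For the deterministic part, for $z\in\C^\Lambda$ with $\norm z\le R$ the matrix $H_\Lambda(z)-E$ has operator norm at most $2+\norm S+|E|+R$; hence $f_\Lambda^E$ (a determinant) and $g_\Lambda^E(i_0,j_0)$ (a minor) both satisfy
\begin{equation*}
\sup_{\norm z\le R}\log|f_\Lambda^E(z)|,\ \ \sup_{\norm z\le R}\log|g_\Lambda^E(i_0,j_0)(z)|\ \le\ |\Lambda|\log\bigl(2+\norm S+|E|+R\bigr)\ \le\ C\,M_R,
\end{equation*}
where the last step uses $\log R\ge1$ to absorb each of $1,\log|E|,\log\norm S,\log R$ into $\max(1,\log|E|,\log\norm S)\log R$. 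This \emph{sharp} sup-norm bound on the minor — better than the crude monomial-count bound, which would cost an extra $\log(|\Lambda|+1)$ — is exactly what keeps the final threshold proportional to $M_R$ rather than $M_R\log|\Lambda|$.

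Next apply the polynomial Cartan estimate of Lemma~\ref{lem:apx-logP-Cartan} to $P:=g_\Lambda^E(i_0,j_0)$, a polynomial in the $N=|\Lambda|$ real variables $(V_k)_{k\in\Lambda}$ of degree at most $|\Lambda|$. By Proposition~\ref{prop:analysis-minors}(ii), which applies because $(i_0)_2=(j_0)_2$, the polynomial $P$ has a coefficient of modulus $1$; dividing by $\max_\alpha|a_\alpha|\ge1$ normalizes $P$ while only lowering $\log|P|$ and its supremum. Inserting the sup-bound $\le C M_R$ from the previous step, Lemma~\ref{lem:apx-logP-Cartan} yields $\mc B_0\subset\{\norm V\le R\}$ with $\mes(\mc B_0)\le C_1^{|\Lambda|}R^{|\Lambda|}\exp(-H)$ such that $\log|g_\Lambda^E(i_0,j_0)(V)|\ge-C_0'\,H\,M_R$ whenever $V\notin\mc B_0$ and $H\gg1$. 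Combining with the bound on $f$: for $V\in\{\norm V\le R\}\setminus\mc B_0$ we get $\log\SigmaLE\ge-(2C_0'+2C)H\,M_R\ge-C_0\,H\,M_R$ once $H$ exceeds an absolute constant, which is the claim (the null set $\{f_\Lambda^E=0\}$ contributing nothing).

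I expect the only genuine points to be (i) arranging the sharp sup-norm estimate on the minor from the operator-norm bound on $H_\Lambda(z)-E$, so that $M_R$ and not $M_R\log|\Lambda|$ appears in the exponent, and (ii) checking that the non-degeneracy hypothesis of Lemma~\ref{lem:apx-logP-Cartan} is precisely the $\pm1$ coefficient furnished by Proposition~\ref{prop:analysis-minors}(ii); the rest is bookkeeping of logarithms and of the absolute constants $C_0,C_1$.
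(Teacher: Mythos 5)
Your proposal is correct and follows essentially the same route as the paper: drop all but one boundary term $|G_\Lambda^E(i',j')|^2$ with $i'_2=j'_2$, bound $\log|f_\Lambda^E|$ deterministically via the operator-norm estimate on $H_\Lambda(z)-E$, and apply the multivariate Cartan estimate (Lemma~\ref{lem:apx-logP-Cartan}) to $g_\Lambda^E(i',j')$, whose nondegeneracy is supplied by the $\pm1$ coefficient from Proposition~\ref{prop:analysis-minors}(ii). The only cosmetic difference is the order of manipulations (you apply Cartan first and then combine, the paper first rewrites $\log\SigmaLE\le -3C_0HM$ as a bound on $\log|g|$ and then invokes Cartan); the normalization you mention is unnecessary since Lemma~\ref{lem:apx-logP-Cartan} only requires $\max_\alpha|a_\alpha|\ge 1$, which Proposition~\ref{prop:analysis-minors}(ii) already gives.
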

\begin{proof}
We have 
\begin{equation*}
	\norm{H_\Lambda^{ij}(V)-E}\le 1+\norm{H_\Lambda(V)-E}\le 1+|E|+R+\norm{S}, 
\end{equation*}
for any $ V\in\C^\Lambda $, $ \norm{V}\le R $, and any $ i,j\in\Lambda $ (recall that $ 
H_\Lambda^{ij} $ was defined in the proof of Proposition~\ref{prop:analysis-minors}). Consequently, 
there exists an absolute constant $ B $ such that 
\begin{equation}
	\sup_{\norm{V}\le R}\log|f_\Lambda^E(V)|
	\le |\Lambda| \log(|E|+R+\norm{S})
	\le B |\Lambda| \max(1,\log|E|,\log\norm{S}) \log R 
\end{equation}
and 
\begin{equation}
	\sup_{\norm{V}\le R}\log|[g_\Lambda^E(i,j)](V)|\le |\Lambda| \log(1+|E|+R+\norm{S}) 
	\le B |\Lambda| \max(1,\log|E|,\log\norm{S}) \log R, 
\end{equation}
for $ R\ge e $. Let 
\begin{equation*}
	M=B |\Lambda| \max(1,\log|E|,\log\norm{S}) \log R
\end{equation*}
 and $ C_0 $ as in 
Lemma~\ref{lem:apx-logP-Cartan}. If 
\begin{equation*}
	\log \SigmaLE \le -3C_0HM 
\end{equation*}
then 
\begin{equation*} 
 	\log|[g_\Lambda^E(i',j')]| \le \frac{1}{2} (\log \SigmaLE+\log|f_\Lambda^E|) \le 
 	-\frac{3}{2}C_0 H M + 
 	\frac{1}{2}\log|f_\Lambda^E|\le -C_0 H M,
\end{equation*}
where we chose $ i'\in\{a\}_W,j'\in\{b\}_W $ (assuming $ \Lambda=[a,b]_W $) such that $ i'_2=j'_2 
$. The conclusion follows by applying Lemma~\ref{lem:apx-logP-Cartan} to $ 
\log|[g_\Lambda^E(i',j')]| $. This is possible due to Proposition~\ref{prop:analysis-minors} (ii). 
Note that the constant $ C_0 $ from the result is not the same as in 
Lemma~\ref{lem:apx-logP-Cartan}. 
\end{proof}
\begin{proposition}
\label{prop:analysis-moments} Given $ s\ge 1 $ there exists a constant 
\begin{equation*}
	C_0=C_0(A_0,A_1,|E|,s,\norm{S})
\end{equation*}
 such that 
\begin{equation*}
	\ex{\log^s\SigmaLE}\le C_0 (|\Lambda|\log|\Lambda|)^{2s},\, |\Lambda|>1. 
\end{equation*}
\end{proposition}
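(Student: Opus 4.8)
The plan is to bound $\ex{\lvert\log\SigmaLE\rvert^s}$ (the natural reading of the statement for non‑integer $s$) by writing $\lvert\log\SigmaLE\rvert^s\le 2^{s-1}\big((\log^+\SigmaLE)^s+(\log^-\SigmaLE)^s\big)$, where $\log^+t=\max(\log t,0)$ and $\log^-t=\max(-\log t,0)$, and handling the two parts separately. The positive part is easy: $\SigmaLE$ is a sum of at most $|\Lambda|^2$ terms, each bounded by $\norm{G_\Lambda^E}^2$, so $\log^+\SigmaLE\le 2\log|\Lambda|+2\log^+\norm{G_\Lambda^E}$ and it is enough to control $\ex{(\log^+\norm{G_\Lambda^E})^s}$. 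The Wegner estimate \eqref{eq:Wegner} gives $\pr{\log^+\norm{G_\Lambda^E}\ge t}\lesssim\min(1,A_0|\Lambda|e^{-t})$, and evaluating $s\int_0^\infty t^{s-1}\pr{\log^+\norm{G_\Lambda^E}\ge t}\,dt$ (splitting at $t=\log(A_0|\Lambda|)$) gives a bound of order $(\log(A_0|\Lambda|))^s$ up to a constant depending only on $s$. Hence $\ex{(\log^+\SigmaLE)^s}$ is of order $(\log|\Lambda|)^s$ (up to a constant depending on $s$ and $A_0$), far below the claimed bound.

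The negative part is the substance, and here I would use the Cartan‑type estimate of Lemma~\ref{lem:analysis-Cartan-R}. I fix a large threshold of the form $\tau_0=C(A_0,A_1,|E|,\norm{S},s)\,(|\Lambda|\log|\Lambda|)^2$, and the key claim is that $\pr{\log\SigmaLE\le-\tau}\le 2\tau^{-(s+1)}$ for every $\tau\ge\tau_0$. Given such a $\tau$, I would set $R:=A_1|\Lambda|^{3/2}\tau^{s+1}$ and split the event according to whether $\norm{V}\le R$. On $\{\norm{V}>R\}$, the inequality $\norm{V}\le|\Lambda|^{1/2}\max_i|V_i|$, a union bound over $i\in\Lambda$, and the tail assumption \eqref{eq:intro-densitydecay} give $\pr{\norm{V}>R}\le A_1|\Lambda|^{3/2}/R=\tau^{-(s+1)}$. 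On $\{\norm{V}\le R\}$ I would apply Lemma~\ref{lem:analysis-Cartan-R} with this $R$ and with the parameter $H:=\tau/(C_0M_R)$, chosen so that the Cartan threshold $C_0HM_R$ equals $\tau$; since $M_R$ is of order $|\Lambda|\log R$ and $\log R$ is of order $\log|\Lambda|$ on the range $\tau\ge\tau_0$, the value of $\tau_0$ forces $H\gg1$, so the lemma applies and yields $\mes\{\norm{V}\le R:\ \log\SigmaLE\le-\tau\}\le C_1^{|\Lambda|}R^{|\Lambda|}e^{-H}$. Passing from Lebesgue measure to probability costs only a factor $A_0^{|\Lambda|}$, since by \eqref{eq:intro-densityub} the joint density of $(V_i)_{i\in\Lambda}$ is bounded by $A_0^{|\Lambda|}$; thus $\pr{\norm{V}\le R,\ \log\SigmaLE\le-\tau}\le(A_0C_1R)^{|\Lambda|}e^{-H}$. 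A direct computation — again using that $\log R$, $\log|\Lambda|$ and $\log\tau$ are all comparable on $\tau\ge\tau_0$ — shows that taking the constant in $\tau_0$ large enough forces $(A_0C_1R)^{|\Lambda|}e^{-H}\le\tau^{-(s+1)}$ as well; concretely this reduces to the elementary fact that $\tau\ge c\,|\Lambda|^2(\log|\Lambda|+\log\tau)^2$ holds for all $\tau$ larger than a constant multiple of $|\Lambda|^2(\log|\Lambda|)^2$. Adding the two contributions proves the claim.

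With the tail estimate in hand, I would conclude by integrating:
\begin{equation*}
\ex{(\log^-\SigmaLE)^s}=s\int_0^\infty\tau^{s-1}\pr{\log\SigmaLE\le-\tau}\,d\tau\le\int_0^{\tau_0}s\tau^{s-1}\,d\tau+\int_{\tau_0}^\infty 2s\,\tau^{-2}\,d\tau=\tau_0^{\,s}+\frac{2s}{\tau_0},
\end{equation*}
which has the form $C_0(A_0,A_1,|E|,s,\norm{S})\,(|\Lambda|\log|\Lambda|)^{2s}$. Combining this with the bound on the positive part finishes the proof.

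The main obstacle is calibrating the truncation radius $R$ as a function of the deviation level $\tau$. Enlarging $R$ makes the large‑field event $\{\norm{V}>R\}$ rare, but through the factor $C_1^{|\Lambda|}R^{|\Lambda|}$ in Lemma~\ref{lem:analysis-Cartan-R}, and through $\log R$ entering $M_R$ and hence shrinking $H=\tau/(C_0M_R)$, it degrades the Cartan bound. These two demands can be met simultaneously — with a polynomial tail $\tau^{-(s+1)}$, which is all that is needed and which prevents the large‑field term (carrying the factor $|\Lambda|^{3/2}$) from spoiling the exponent — only because the estimate is required solely for $\tau\gtrsim(|\Lambda|\log|\Lambda|)^2$, precisely the scale that dictates the final answer, so that $\log R$ stays comparable to $\log|\Lambda|$ throughout.
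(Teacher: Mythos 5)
Your proof is correct, but it is organized differently from the paper's. Both arguments rest on Lemma~\ref{lem:analysis-Cartan-R} together with the tail bound $\pr{\|V\|\ge R}\le A_1|\Lambda|^{3/2}/R$, but they sequence the two truncations in opposite orders. The paper first feeds Lemma~\ref{lem:analysis-Cartan-R} into the general integrability lemma (Lemma~\ref{lem:apx-Cartan->integrability}) to get, for fixed $R$, $\int_{\|V\|\le R}|\log\SigmaLE|^s\,d\nu\lesssim(|\Lambda|^2\log^2R)^s$, and then decomposes dyadically over the scale of $\|V\|$, using Cauchy--Schwarz and the $\log^{2s}$ moment to absorb the outer rings; the geometric gain comes from $\pr{\|V\|\ge R_k}^{1/2}\sim R_0^{-k/2}$. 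You instead integrate directly over the level $\tau$ of $\log\SigmaLE$, and for each $\tau\ge\tau_0$ you choose a $\tau$-dependent truncation radius $R(\tau)=A_1|\Lambda|^{3/2}\tau^{s+1}$ calibrated so that both the large-field probability and the Cartan bound (after converting Lebesgue measure to probability by the factor $A_0^{|\Lambda|}$) are at most $\tau^{-(s+1)}$. This avoids Lemma~\ref{lem:apx-Cartan->integrability} and the Cauchy--Schwarz trick entirely, at the small cost of having to carry out the calibration $\tau\gtrsim|\Lambda|^2(\log|\Lambda|+\log\tau)^2$ by hand — your reduction of this to the threshold $\tau_0\sim(|\Lambda|\log|\Lambda|)^2$ is correct, since $\tau/\log^2\tau$ is increasing past $e^2$. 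Your treatment of the positive part via the Wegner estimate \eqref{eq:Wegner} is fine and gives a much smaller $(\log|\Lambda|)^s$ contribution; the paper does not isolate the positive part explicitly because Lemma~\ref{lem:apx-Cartan->integrability} already controls $|\log|f||$ and not just $\log^-$. One minor quibble: your closing remark that ``$\log R$ stays comparable to $\log|\Lambda|$ throughout'' is literally false for $\tau\gg\tau_0$ (there $\log R\asymp\log\tau$), but your displayed calibration inequality with $\log|\Lambda|+\log\tau$ is the statement actually used and is correct.
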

\begin{proof}
From Lemma~\ref{lem:analysis-Cartan-R} and Lemma~\ref{lem:apx-Cartan->integrability} it follows 
that for any $ R\ge e $ we have 
\begin{equation*}
	\int_{\norm{V}\le R}\log^s\SigmaLE \,d\nu \le \left( C |\Lambda|^{2}\log^2 R \right)^s, 
\end{equation*}
with $ C=C(A_0,|E|,\norm{S}) $.

Note that due to \eqref{eq:intro-densitydecay} we have 
\begin{equation*}
	\pr{\norm{V}\ge R}\le \sum_{i\in \Lambda} \pr{|V_i|\ge R/{|\Lambda|}^{1/2}} \le A_1 
	|\Lambda|^{3/2}/R. 
\end{equation*}

Let $ R_k=R_0^k|\Lambda|^{3/2} $, with $ R_0\gg e $. Using the two previous estimates we have 
\begin{multline*}
	\ex{\log^s \SigmaLE} = \int_{ \norm{V} \le R_1 } \log^s \SigmaLE \,d\nu + \sum_{k=1}^{\infty} 
	\int_{R_k < \norm{V} \le R_{k+1}} \log^s \SigmaLE \,d \nu\\
	\le (C|\Lambda|^{2}\log^2R_1)^s + \sum_{k=1}^{\infty} \left(\int_{\norm{x}\le 
	R_{k+1}}\log^{2s}\SigmaLE \,d\nu\right)^{1/2} \left(\pr{\norm{V}\ge R_k}\right)^{1/2}\\
	\le (C|\Lambda|\log|\Lambda|)^{2s}
		+(C|\Lambda|\log|\Lambda|)^{2s}\sum_{k=1}^{\infty}(\log^2 R_0^{k+1})^s (A_1/R_0^k)^{1/2}\\
	\le C(s) (|\Lambda|\log|\Lambda|)^{2s}. 
\end{multline*}
\end{proof}

\section{Large Fluctuations Imply Exponential Decay}\label{sec:decay}

In this section we show how to pass from fluctuations of the resolvent to exponential decay. The 
main result is Theorem~\ref{thm:decay}. The basic idea, developed in 
Proposition~\ref{prop:decay-var->weak-decay}, is that having some fluctuations of Green's function 
implies some exponential decay with non-zero probability. The desired result will follow by 
standard multi-scale analysis. The initial estimate is provided in 
Proposition~\ref{prop:decay-initial-length} and the inductive step is implemented in 
Proposition~\ref{prop:decay-induction-step} (cf. \cite[Lemma 4.1]{vDK-89-new}). Throughout this 
section we assume 
\begin{equation*}
\var\left(\log \SigmaLE \right) \ge L \delta_0, 
\end{equation*}
with $ \delta_0 \le 1/W $, for any $ \Lambda=[a,b]_W $, $ b-a+1=L $.
\begin{proposition}
\label{prop:decay-var->weak-decay} Given $ \epsilon\in(0,1) $ there exists $ 
C_0=C_0(A_0,A_1,\epsilon,|E|,\norm{S}) $ such that 
\begin{equation*}
	\pr{\log\SigmaLE \le -\sqrt{L\delta_0}/2}\ge \left(\frac{L \delta_0}{C_0 
	|\Lambda|^4\log^4|\Lambda|}\right)^{1+\epsilon}, 
\end{equation*}
for any $ \Lambda=[a,b]_W $, $ b-a+1=L \ge C_0 \delta_0^{-1}\log^2\delta_0 $. 
\end{proposition}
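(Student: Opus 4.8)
The plan is to convert the variance lower bound $\var(\log\SigmaLE)\ge L\delta_0$ into a one-sided probabilistic statement about $\log\SigmaLE$ being very negative. The natural tool is a Paley--Zygmund / Chebyshev-type argument: a random variable with large variance but controlled higher moments must take values far from its mean with non-negligible probability, and we want to show a chunk of this deviation lives on the negative side.

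First I would record the moment bound from Proposition~\ref{prop:analysis-moments}: for every $s\ge1$, $\ex{\log^s\SigmaLE}\le C_0(|\Lambda|\log|\Lambda|)^{2s}$ (and in particular the second moment is $\lesssim (|\Lambda|\log|\Lambda|)^4$, which dominates the variance since $\delta_0\le1/W$). Write $X=\log\SigmaLE$, $m=\ex{X}$, and let $v=\var(X)\ge L\delta_0$. By the Wegner-type bound \eqref{eq:Wegner-for-entries} the event $\{X\ge t\}$ has probability $\lesssim A_0 W e^{-t/2}$ (since $\SigmaLE$ is a sum of at most $W^2$ terms each of the form $|G_\Lambda^E(i,j)|^2$), so $X$ has a very thin upper tail; hence most of the second moment $\ex{(X-m)^2}$ that is not concentrated near $m$ must come from $X$ being \emph{negative}. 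Concretely, I would split $\ex{(X-m)^2}=\ex{(X-m)^2\mathbbm 1_{X\ge -a}}+\ex{(X-m)^2\mathbbm 1_{X<-a}}$ for a threshold $a=a(L,\delta_0)$ to be chosen. On the first event, bound $(X-m)^2\mathbbm 1_{-a\le X}$: when $m$ itself is not too large (which follows from $m\le\ex{|X|}\lesssim |\Lambda|\log|\Lambda|$) the contribution of $\{-a\le X\le K\}$ for moderate $K$ is $O((a+m)^2)$ times its probability, and the contribution of the thin upper tail $\{X>K\}$ is negligible by Cauchy--Schwarz against the fourth moment and the exponential tail. Choosing $a\asymp\sqrt{L\delta_0}$ and comparing with $v\ge L\delta_0$ forces $\ex{(X-m)^2\mathbbm 1_{X<-a}}\gtrsim L\delta_0$.

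Then I would apply Cauchy--Schwarz (Hölder, really) to the negative-side contribution: $L\delta_0\lesssim\ex{(X-m)^2\mathbbm 1_{X<-a}}\le \ex{(X-m)^{2(1+\epsilon)/\epsilon}}^{\epsilon/(1+\epsilon)}\pr{X<-a}^{1/(1+\epsilon)}$ — this is where the exponent $1+\epsilon$ in the statement comes from. The high moment factor is bounded by Proposition~\ref{prop:analysis-moments} (applied with $s=(1+\epsilon)/\epsilon$, absorbing constants into $C_0=C_0(A_0,A_1,\epsilon,|E|,\norm S)$) by $(C_0|\Lambda|\log|\Lambda|)^{2(1+\epsilon)/\epsilon}$. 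Rearranging gives
\begin{equation*}
\pr{X<-a}\ge\left(\frac{L\delta_0}{C_0|\Lambda|^4\log^4|\Lambda|}\right)^{1+\epsilon},
\end{equation*}
after bookkeeping the exponents so that the denominator appears with power exactly $4(1+\epsilon)$. Finally, since $a\asymp\sqrt{L\delta_0}$ I would check that for $L\ge C_0\delta_0^{-1}\log^2\delta_0$ one has $\sqrt{L\delta_0}/2\le a$ (indeed the threshold on $L$ is exactly what is needed to make $\sqrt{L\delta_0}$ large enough relative to the slack lost in the splitting, e.g. to ensure $a$ can be taken to be $\sqrt{L\delta_0}/2$ while still soaking up the $O((a+m)^2\pr{\cdot})$ error terms), so $\{X<-a\}\subset\{X\le-\sqrt{L\delta_0}/2\}$ and the claimed bound follows.

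The main obstacle is the bookkeeping in the splitting step: one must choose the threshold $a$ and the cutoff $K$ for the upper tail so that (i) the "middle" contribution $O((a+m)^2\pr{-a\le X\le K})$ and the upper-tail contribution are both genuinely smaller than $L\delta_0$ — this needs $m$ and $K$ to be only polylogarithmic in $|\Lambda|$ while $\sqrt{L\delta_0}$ is forced to be large by the hypothesis $L\gtrsim\delta_0^{-1}\log^2\delta_0$ — and (ii) the leftover negative-side second moment is still of order $L\delta_0$. Getting the constant $C_0$ to depend only on $A_0,A_1,\epsilon,|E|,\norm S$ and not on $W$ or $L$ requires care, since $|\Lambda|=LW$ enters through the moment bounds; but because the estimate we are proving already carries the factor $|\Lambda|^4\log^4|\Lambda|$ explicitly, the $W$-dependence is all exposed and the remaining constant is harmless. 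The exponential upper tail from \eqref{eq:Wegner-for-entries} is essential here: without it, the large variance could in principle be explained entirely by a heavy \emph{upper} tail, which would be useless for proving decay.
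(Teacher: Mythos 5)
Your overall strategy is the same as the paper's: split the space into a ``very negative'' event, a ``middle'' event, and a ``very positive'' event; show the middle and positive events cannot soak up all of the second moment (using the Wegner estimate \eqref{eq:Wegner-for-entries} for the positive tail); and then apply H\"older with exponent $(1+\epsilon)/\epsilon$ together with Proposition~\ref{prop:analysis-moments} to convert the negative-tail second-moment mass into a lower bound on the probability. All of that is exactly right.

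However, there is a genuine technical gap in the way you set up the splitting. You center at the mean $m=\ex{X}$ and split $\var(X)=\ex{(X-m)^2}$ into $\{X\ge-a\}$ and $\{X<-a\}$, and then try to bound the contribution of the middle region $\{-a\le X\le K\}$ by $(a+m)^2\pr{\cdot}$. The issue is that $|m|$ is \emph{not} small compared to $\sqrt{L\delta_0}$: the moment bound only gives $|m|\le\ex{|X|}\lesssim|\Lambda|^2\log^2|\Lambda|$, and the mean could indeed be very negative (e.g.\ of this order), while $\sqrt{L\delta_0}\le\sqrt{L}\ll|\Lambda|$. When $|m|\gg\sqrt{L\delta_0}$, on the region $\{-a\le X\}$ one has $(X-m)^2\gtrsim m^2\gg L\delta_0$, so the ``middle'' contribution to $\ex{(X-m)^2}$ is not small compared to $L\delta_0$ and your subtraction step ``forces $\ex{(X-m)^2\mathbbm 1_{X<-a}}\gtrsim L\delta_0$'' does not follow (you only know $\var(X)\ge L\delta_0$, not that the middle contribution is small). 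The paper avoids this entirely by observing that $\ex{X^2}\ge\var(X)\ge L\delta_0$ and then splitting the \emph{uncentered} second moment: on $\Omega_0=\{|X|<\sqrt{L\delta_0}/2\}$ one has $X^2<L\delta_0/4$ pointwise, so the middle contributes at most $L\delta_0/4$ regardless of any probability bounds or of the size of $m$. If you replace $(X-m)^2$ by $X^2$ throughout (and define the three regions by $|X|<\sqrt{L\delta_0}/2$, $X\le-\sqrt{L\delta_0}/2$, $X\ge\sqrt{L\delta_0}/2$), your argument goes through and is precisely the paper's proof.
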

\begin{proof}
We partition $ \R^\Lambda $ by the sets 
\begin{align*}
	\Omega_{-1} &=\{V:~\log\SigmaLE \le -\sqrt{L\delta_0}/2\},\\
	\Omega_{0} &=\{V:~|\log\SigmaLE| < \sqrt{L\delta_0}/2 \},\\
	\Omega_{1} &=\{V:~\log\SigmaLE \ge \sqrt{L\delta_0}/2\}.\\
\end{align*}
By our assumption on the variance we have that $ \ex{\log^2 \SigmaLE} \ge L \delta_0 $. At the same 
time we have $ \int_{\Omega_0} \log^2 \SigmaLE \,d\nu \le L\delta_0/4 $ and 
\begin{multline*}
	\int_{\Omega_{-1}} \log^2 \SigmaLE \,d\nu \le \left(\int_{\R^\Lambda} 
	\log^{2(1+\epsilon)/\epsilon} \SigmaLE \,d\nu \right)^{\epsilon/(1+\epsilon)} \left(\pr{V\in 
	\Omega_{-1}}\right)^{1/(1+\epsilon)}\\
	\le C |\Lambda|^4\log^4|\Lambda| \left(\pr{V\in \Omega_{-1}}\right)^{1/(1+\epsilon)}, 
\end{multline*}
\begin{equation*}
	\int_{\Omega_1} \log^2 \SigmaLE \,d\nu 
	\le \left(\int_{\R^\Lambda} \log^4 \SigmaLE \,d\nu 
		\right)^{1/2} \left(\pr{V\in \Omega_1}\right)^{1/2} 
	\le C |\Lambda|^4\log^4|\Lambda| 
		\left(\pr{V\in \Omega_1}\right)^{1/2}, 
\end{equation*}
with $ C=C(A_0,A_1,\epsilon,|E|,\norm{S}) $, due to Proposition~\ref{prop:analysis-moments}. We 
conclude 
\begin{equation*}
	\pr{\log \SigmaLE \le -\sqrt{L\delta_0}/2} 
	\ge \left(\frac{3 L\delta_0/4 - C|\Lambda|^4\log^4|\Lambda| \left( 
		\pr{\log \SigmaLE \ge \sqrt{L\delta_0}/2 
		\right)^{1/2}}}{C|\Lambda|^4\log^4|\Lambda|}\right)^{1+\epsilon}. 
\end{equation*}
Now we just need to estimate the probability on the right-hand side. If $ \log \SigmaLE \ge 
\sqrt{L\delta_0}/2 $ then $ |G_\Lambda^E(i,j)| \ge \exp(\sqrt{L\delta_0}/2)/W^2 $ for some $ 
(i,j)\in 
\partial \Lambda $, $ i_1<j_1 $. Using the estimate \eqref{eq:Wegner-for-entries} we have 
\begin{equation*}
	\pr{\log \SigmaLE \ge \sqrt{L\delta_0}/2}\lesssim A_0 W^4 
	\exp(-\sqrt{L\delta_0}/2). 
\end{equation*}
The conclusion follows because 
\begin{equation*}
	3L\delta_0/4 - C|\Lambda|^4\log^4|\Lambda| \left(A_0 W^4 
	\exp(-\sqrt{L\delta_0}/2) 
	\right)^{1/2} \ge L \delta_0/4, 
\end{equation*}
for $ L\ge C'\delta_0^{-1}\log^2\delta_0 $ (recall that we are assuming $ \delta_0 \le W^{-1} $). 
\end{proof}
\begin{proposition}
\label{prop:decay-initial-length} Fix $ \beta\ge 1 $. There exists $ 
C_0=C_0(A_0,A_1,\beta,|E|,\norm{S}) $ such that 
\begin{equation*}
	\pr{\log|G_{\Lambda_L(a)}^E(i,j)|\le-\delta_0^{1/2}L^{1/10}/4,\,i\in\{a\}_W, j\in
	\partial \Lambda_L(a)} \ge 1-L^{-\beta}, 
\end{equation*}
for any $ L\ge C_0 \delta_0^{-6} W^{20} $. 
\end{proposition}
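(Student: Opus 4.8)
The plan is to combine the weak-decay estimate of Proposition~\ref{prop:decay-var->weak-decay} with a standard "paving" argument: if a single box of a suitable intermediate size $\ell$ has decaying Green's function with probability close to $1$ in the relevant sense, then by tiling $\Lambda_L(a)$ with disjoint translates of that box and using the resolvent (geometric) expansion, one propagates decay across the full box of size $L$. Concretely, first I would fix an intermediate scale $\ell=\ell(L,\delta_0,W)$ — something like $\ell\simeq C_0\delta_0^{-1}\log^2\delta_0$ so that Proposition~\ref{prop:decay-var->weak-decay} applies on boxes of side $\ell$ — and apply that proposition to get $\pr{\log\Sigma_{\Lambda'}^E\le -\sqrt{\ell\delta_0}/2}\ge p_0$ for each such sub-box $\Lambda'$, where $p_0\simeq (\ell\delta_0/(C|\Lambda'|^4\log^4|\Lambda'|))^{1+\epsilon}$ is polynomially small in $W$ but bounded below independently of $L$.

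The next step is to upgrade "$\Sigma^E$ small on one box with probability $p_0$" to "along a chain of $\sim L/\ell$ consecutive boxes, a positive fraction are good, with probability $\ge 1-L^{-\beta}$." Here I would use the independence of the potentials in disjoint boxes: the events that successive sub-boxes $\Lambda'_m$ (say, separated by a fixed buffer so that the relevant resolvents depend on disjoint potentials) are "good" are independent, each of probability $\ge p_0$, so the number of good boxes among the $\sim L/\ell$ of them is stochastically bounded below by a binomial; a Chernoff/large-deviation bound gives that at least $\sim p_0 L/(2\ell)$ of them are good except with probability $\exp(-c p_0 L/\ell)$, which is $\le L^{-\beta}$ once $L\ge C_0\delta_0^{-6}W^{20}$ (this is where the large power of $W$ and of $\delta_0^{-1}$ in the hypothesis on $L$ is consumed, since $p_0$ and $\ell^{-1}$ are both polynomially small in $W$ and in $\delta_0$). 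On this good event, a path from $i\in\{a\}_W$ to $j\in\partial\Lambda_L(a)$ must cross every good box, and each crossing contributes a factor of size $\exp(-\sqrt{\ell\delta_0}/2)$ (times a harmless polynomial-in-$W$ factor coming from the number of boundary sites and from the norm bounds $\norm{G}\le$ something controlled off a small-probability set via the Wegner estimate \eqref{eq:Wegner}), so the resolvent expansion yields $\log|G_{\Lambda_L(a)}^E(i,j)|\le -c\,(L/\ell)\sqrt{\ell\delta_0}\le -\delta_0^{1/2}L^{1/10}/4$ after checking the bookkeeping of exponents.

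The main obstacle I expect is the bookkeeping in the resolvent expansion: turning the pointwise bound "$\Sigma^E$ small on each good box" into a genuine decay estimate for the two-point function $G_{\Lambda_L(a)}^E(i,j)$ requires controlling the norms of the intervening resolvents (on the non-good boxes and on the buffer regions) and the combinatorial factors from summing over crossing paths, and ensuring none of these overwhelm the exponentially small per-box gain. This is standard in multi-scale analysis — essentially the telescoping identity $G_{\Lambda}(i,j)=\sum G_{\Lambda}(i,k)\,G_{\Lambda'}(k',j)$ iterated — but it is the step where one must be careful about the precise power of $W$, which is why the hypothesis asks for $L\ge C_0\delta_0^{-6}W^{20}$ rather than something sharper; the exponent $L^{1/10}$ in the conclusion (rather than $L$) is exactly the slack one builds in to absorb these losses. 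Everything else — the choice of $\ell$, the independence/Chernoff step, and the final arithmetic comparing $(L/\ell)\sqrt{\ell\delta_0}$ with $\delta_0^{1/2}L^{1/10}$ — is routine once the expansion is set up.
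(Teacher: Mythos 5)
Your proposal diverges from the paper's argument in a way that creates a genuine gap. You want to tile $\Lambda_L(a)$ with boxes of fixed size $\ell\simeq\delta_0^{-1}\log^2\delta_0$, use Proposition~\ref{prop:decay-var->weak-decay} to say each one is ``good'' with probability $p_0$, invoke Chernoff to get a positive fraction $\sim p_0 L/\ell$ of good boxes, and then iterate the resolvent identity to accumulate a decay $\sim (L/\ell)\sqrt{\ell\delta_0}\sim L\delta_0/\sqrt{\ell\delta_0}$. The trouble is that $p_0$ coming out of Proposition~\ref{prop:decay-var->weak-decay} is \emph{polynomially small} in $W$ and $\delta_0^{-1}$, not close to $1$. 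So in a typical realization the overwhelming majority of boxes are \emph{bad}, and the resolvent expansion through a run of bad boxes costs a factor $T$ (the Wegner norm bound) per box. The cumulative cost is then roughly $T^{(1-p_0)L/\ell}$, while the gain is only $\exp(-c\,p_0(L/\ell)\sqrt{\ell\delta_0})$; since $T$ must be taken polynomial in $L$ for the Wegner estimate to beat $L^{-\beta}$, the losses overwhelm the gains. The ``each-box-contributes-a-good-factor'' heuristic is exactly the mechanism of Proposition~\ref{prop:decay-induction-step}, but there the single-box success probability is $1-l^{-\beta}$, close to $1$, so bad boxes are rare and isolated and the $T$ factors can be absorbed by the next good box; with $p_0$ small this cancellation fails. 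A further symptom of the problem: your plan would produce a decay rate $\sim c L\sqrt{\delta_0/\ell}$, linear in $L$, which is far better than the claimed $\delta_0^{1/2}L^{1/10}/4$ — the proposition only asks for a weak sublinear rate precisely because the mechanism is weaker.

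The paper's actual proof avoids this entirely by asking for only \emph{one} good sub-box, rather than a positive fraction. It takes $l=[L^{1/5}]$ (growing with $L$, not a fixed scale), partitions $[a-L,a]_W$ into $\gtrsim l^4/2$ disjoint blocks of size $l$, and uses independence plus Proposition~\ref{prop:decay-var->weak-decay} to show that \emph{at least one} of them has $\log\Sigma^E_{\Lambda_0}\le-\sqrt{l\delta_0}/2$, except on an event of probability $\le L^{-\beta}/4$; this is where the hypothesis $L\ge C_0\delta_0^{-6}W^{20}$ is consumed. Then, on the complement of a Wegner event of probability $\lesssim A_0L^2WT^{-1}$, it applies the second resolvent identity \eqref{eq:apx-2nd-resolvent-identity} only \emph{twice} — once to pass from $\Lambda_L(a)$ to an interval $\Lambda_1$ containing the good block, and once to insert the good block — yielding $|G^E_{\Lambda_L(a)}(i,j)|\le T^2W\exp(-\sqrt{l\delta_0}/4)$. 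With $T=\exp(\delta_0^{1/2}L^{1/10}/16)$ this gives the stated bound, and the single good factor $\exp(-\sqrt{l\delta_0}/4)=\exp(-c\,\delta_0^{1/2}L^{1/10})$ is exactly why the exponent is $L^{1/10}$. The key structural difference from your proposal is that there is no chain through bad boxes at all: the norm bound is used at two fixed cuts, not iterated, so the smallness of the success probability $p_0$ is irrelevant once there are enough candidate blocks to guarantee one good one.
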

\begin{proof}
We only prove that 
\begin{equation*}
	\pr{\log|G_{\Lambda_L(a)}^E(i,j)|\le-\delta_0^{1/2}L^{1/10}/4,\,i\in\{a-L\}_W, j\in\{a\}_W} \ge 
	1-L^{-\beta}/2. 
\end{equation*}
The same estimate with $ i\in\{a\}_W $ and $ j\in \{a+L\}_W $ will hold by an analogous proof.

Let $ l=[L^{1/5}] $. We have $ l^5\le L < 2 l^5 $ (provided $ L $ is larger than some absolute 
constant). Let $ \mc G_1 $ be the event that $ \log \SigmaLE[0] \le -\sqrt{l\delta_0}/2 $ holds for 
at least one block 
\begin{equation*}
	\Lambda_0= [nl+1,(n+1)l]_W \subset \Lambda=[a-L,a]_W. 
\end{equation*}
Clearly $ \Lambda $ contains more than $ l^4/2 $ such blocks. By the independence of the potentials 
and by Proposition~\ref{prop:decay-var->weak-decay} we have that for $ \epsilon $ small enough 
\begin{multline*}
	\pr{\R^\Lambda\setminus \mc G_1}\le (1-c(\delta_0 
	l)^{1+\epsilon}/(lW)^{4(1+2\epsilon)})^{l^4/2} \\ \le 
	\exp\left(-c(\delta_0 
		l)^{1+\epsilon}/(lW)^{4(1+2\epsilon)} l^4\right)
	\le \exp\left(-c\delta_0^{1+\epsilon} W^{-4(1+2\epsilon)} L^{(1-7\epsilon)/5}\right)\le 
	L^{-\beta}/4, 
\end{multline*}
provided that $ L\ge C \delta_0^{-6}W^{20} $. Let $ \mc G_2 $ be the event that $ 
\norm{G_{\Lambda_L(a)}^E}\le T $ and $ \norm{G_{\Lambda_1}^E} \le T$ for any 
\begin{equation*}
	\Lambda_1=[a-L,(n+1)l]_W\subset \Lambda, 
\end{equation*}
with $ T\ge 1 $ to be chosen later. From \eqref{eq:Wegner} it follows that 
\begin{equation*}
	\pr{\R^\Lambda\setminus \mc G_2}\lesssim A_0 L^2 W T^{-1}. 
\end{equation*}

For the event $ \mc G_1 \cap \mc G_2 $ it follows, by using the second resolvent identity 
\eqref{eq:apx-2nd-resolvent-identity}, that 
\begin{multline*}
	|G_{\Lambda_L(a)}^E(i,j)| =\left|\sum_{(k,k') \in 
	\partial_{\Lambda_L(a)} \Lambda_1} G_{\Lambda_1}^E(i,k) G_{\Lambda_L(a)}^E(k',j)\right| \le T W 
	|G_{\Lambda_1}^E(i,\tilde k)|\\
	=TW\left|\sum_{(l,l') \in 
	\partial_{\Lambda_1} \Lambda_0} G_{\Lambda_0}^E(\tilde k,l) G_{\Lambda_1}^E(l',i) \right| \le T W 
	\exp(-\sqrt{l\delta_0}/4)|G_{\Lambda_1}^E(\tilde l,i)|\\
	\le T^2 W \exp(-\sqrt{l\delta_0}/4)\le \exp(-\delta_0^{1/2}L^{1/10}/8), 
\end{multline*}
provided $ T=\exp(\delta_0^{1/2}L^{1/10}/16) $ and $ L\ge C\delta_0^{-5}\log^{10}W $. The 
conclusion follows by noticing that with this choice of $ T $ we have 
\begin{equation*}
	A_0L^2W T^{-1}\le L^{-\beta}/4, 
\end{equation*}
for $ L\ge C \delta_0^{-5}\log^{10}W $. 
\end{proof}
\begin{proposition}
\label{prop:decay-induction-step} Fix $ \beta\ge 1 $ and $ \epsilon\in(0,1) $. There exists a 
constant $ C_0=C_0(\beta,\epsilon,A_0) $ such that if for some $ l\ge C_0 $ we have 
\begin{equation*}
	\pr{\log|G_{\Lambda_l(a)}^E(i,j)|\le-m_l l,\,i\in\{a\}_W, j\in
	\partial \Lambda_l(a)} \ge 1-l^{-\beta}, 
\end{equation*}
with $ m_l\ge l^{\epsilon-1} \log W $, for any $ \Lambda_l(a)\subset \Z_W $, then for $ L=l^\alpha 
$, $ \alpha\in[2,4] $, and any $ \Lambda_L(a)\subset \Z_W $ we have 
\begin{equation*}
	\pr{\log|G_{\Lambda_L(a)}^E(i,j)|\le-m_L L,\,i\in\{a\}_W, j\in
	\partial \Lambda_L(a)} \ge 1-L^{-\beta}, 
\end{equation*}
with 
\begin{equation*}
	m_l\ge m_L\ge (1-6 l^{-1/4})m_l-\log(2W)/l\ge L^{\epsilon-1}\log W. 
\end{equation*}
\end{proposition}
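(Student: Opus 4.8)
The plan is to run the standard multi-scale induction step of von Dreifus--Klein type on the strip, essentially as in the proof of Proposition~\ref{prop:decay-initial-length}, the new feature being that the scale-$l$ input must now be used along an entire chain of boxes, so the crux is to control how many boxes of that chain may fail the decay estimate. I would fix $a$ and (say) the right boundary of $\Lambda_L(a)$, the left one being symmetric or reducible to it via $|G^E_{\Lambda_L(a)}(i,j)|=|G^E_{\Lambda_L(a)}(j,i)|$ and a reflection; the two good events are then intersected at the usual cost of a factor $2$. Put $D\asymp L/l$ and take the deterministic chain of scale-$l$ boxes $\Lambda_l(a),\Lambda_l(a+l+1),\Lambda_l(a+2(l+1)),\dots$ marching from $\{a\}_W$ towards $\{a+L\}_W$; all of these lie in $\Lambda_L(a)$ once $l$ is large, and consecutive boxes overlap. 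Call a chain box \emph{bad} if the hypothesised bound $|G^E_{\Lambda_l(c)}(u,u')|\le e^{-m_l l}$ (with $u$ in the centre column and $u'$ in a boundary column) fails for it, set $T=C(A_0,\beta)\,WL^{\beta+1}$, and let $\mc E$ be the event that (i) at most $K_0=K_0(\alpha,\beta)$ chain boxes are bad, and (ii) $\norm{G^E_{\Lambda_L(a)}}\le T$ and $\norm{G^E_{\Lambda_l(c)}}\le T$ for every chain box.

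The first step is to show $\pr{\mc E}\ge 1-L^{-\beta}$. Part (ii) is just \eqref{eq:Wegner}, a union bound over the $O(L/l)$ boxes, and the choice of $T$. For (i), each fixed chain box is bad with probability $\le l^{-\beta}$ by hypothesis, and chain boxes more than $2l$ apart are independent; splitting the chain into two well-separated sub-chains and applying a binomial tail estimate to each gives $\pr{\text{more than }K_0\text{ bad}}\le 2\binom{D}{K_0}l^{-\beta K_0}\le 2\bigl(eD\,l^{-\beta}/K_0\bigr)^{K_0}$, and one chooses $K_0$ so that this is $\le L^{-\beta}/2$. One checks such a $K_0$ exists with $K_0\le l^{\alpha-5/4}$: in the worst case $\alpha-1=\beta$ one needs $K_0\asymp\log l$, otherwise $K_0\asymp l^{(\alpha-1-\beta)_+}$, and both are $\le l^{\alpha-5/4}$ because $\alpha\ge 2$ and $\beta>1/4$.

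On $\mc E$ I would iterate the second resolvent identity \eqref{eq:apx-2nd-resolvent-identity} along the chain, just as in Proposition~\ref{prop:decay-initial-length}, extracting a factor $e^{-m_l l}$ at each good box and only the Wegner bound $\le T$ at each bad box. Since a scale-$l$ box is controlled from its centre rather than edge-to-edge, at each step the boundary of the box inside $\Lambda_L(a)$ has two pieces, so the identity branches and the output is a sum over walks on the (one-dimensional) adjacency graph of the chain boxes; because $i\in\{a\}_W$ and $j\in\{a+L\}_W$ lie at opposite ends, every contributing walk has length $\ge D$, and the sum is dominated by the direct walk once $2W e^{-m_l l}<1/2$, which holds since $m_l l\ge l^{\epsilon}\log W\gg\log(4W)$. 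Collecting factors yields, on $\mc E$,
\[
  |G^E_{\Lambda_L(a)}(i,j)|\le C\,(2W)^{D}\,T^{2K_0+1}\,e^{-m_l l\,(D-2K_0)},
\]
the $2K_0$ accounting for walks that enter the at most $2K_0$ bad positions, whose total weight stays bounded because between bad positions the chain has long good stretches.

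Finally I would take logarithms, divide by $L=l^{\alpha}$, and use $D\le L/l+1$, $lK_0/L\le l^{-1/4}$, $\log T\lesssim \beta\log L+\log W$ (so $K_0\log T/L$ is negligible against $m_l l^{-1/4}\ge l^{\epsilon-5/4}\log W$), and $m_l\ge l^{\epsilon-1}\log W$, to read off that
\[
  m_L:=(1-6l^{-1/4})m_l-\frac{\log(2W)}{l}
\]
is a legitimate decay rate for $\Lambda_L(a)$ on $\mc E$, the $\log(2W)/l$ coming from the $(2W)^{D}$ prefactor and the remaining losses absorbed into the $6l^{-1/4}$; trivially $m_L\le m_l$, and since $\alpha\ge 2$ and $\epsilon<1$ one has $L^{\epsilon-1}\log W=l^{\alpha(\epsilon-1)}\log W\le\tfrac12 l^{\epsilon-1}\log W\le m_L$ once $l\ge C_0(\beta,\epsilon,A_0)$. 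I expect the genuine obstacle to be exactly the estimate $\pr{\mc E}\ge 1-L^{-\beta}$: a plain union bound over the $\sim L/l$ chain boxes gives only $1-Cl^{\alpha-1-\beta}$, which is useless against the target $1-l^{-\alpha\beta}$ when $\alpha\ge 2$ and $\beta$ is kept fixed, so one is forced to tolerate a few bad boxes and bound their number by a binomial tail via the independence of well-separated boxes; the deterministic walk-sum bookkeeping, by contrast, is the routine part once one notices that $i$ and $j$ sit at opposite ends of the chain.
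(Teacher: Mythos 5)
Your proposal follows the paper's argument almost to the letter: bound the number of bad scale-$l$ boxes via a tail estimate for the binomial over mutually independent sub-chains (the paper applies Hoeffding's inequality to $2l+1$ disjoint sub-chains, you use a direct binomial union bound over two, which is equivalent), then iterate the second resolvent identity, paying the Wegner factor $T$ at bad boxes and $2We^{-m_l l}$ at good ones, and absorb the losses from the bad boxes and the $W$-prefactors into the $(1-6l^{-1/4})$ and $\log(2W)/l$ terms. Two small slips, neither of which affects the argument: $K_0$ cannot depend only on $(\alpha,\beta)$ --- your own subsequent estimate $K_0\asymp\log l$ (or $l^{\alpha-1-\beta}$, or simply $K_0=l^{\alpha-5/4}$ uniformly) is what is actually needed --- and, since only the boxes within a single sub-chain are independent, the tail bound should read $2\binom{D}{K_0/2}l^{-\beta K_0/2}$ rather than $2\binom{D}{K_0}l^{-\beta K_0}$, which only doubles $K_0$ and leaves the rest unchanged.
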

\begin{proof}
Let $ I=[a-L+l,a+L-l] $. We say that $ b\in I $ is good if 
\begin{equation*}
	\log|G_{\Lambda_l(b)}^E(i,j)|\le-m_l l,\,i\in\{b\}_W, j\in
	\partial \Lambda_l(b). 
\end{equation*}
We partition $ I $ into $ 2l+1 $ subsets $ I_s=\{b\in I:~b=s~(\mod 2l+1)\} $. For each $ s $ the 
set $ I_s $ has at least $ n= (2L-4l+1)/(2l+1)-1 $ elements and the blocks $ \Lambda_l(b) $, $ b\in 
I_s $ are disjoint. By Hoeffding's inequality (see \cite[Theorem 1]{Hoe-63-Probability}) applied to 
the binomial distribution with parameters $ n $ and $ p=1-l^{-\beta} $ we have that there exist at 
least $ (1-\delta) p n $ good $ b $'s in $ I_s $, with probability greater than $ 
1-\exp(-2(pn-(1-\delta)pn)^2/n) $. Let $ B $ be the number of bad $ u\in I $. By choosing $ 
\delta=l^{-1/4} $ it follows that 
\begin{equation*}
	B\le 2L-2l+1-(2l+1)(1-\delta)pn
	=(2L-2l+1)[1-(1-\delta)p]+(4l+1)(1-\delta)p\le 4L l^{-1/4}, 
\end{equation*}
with probability greater than 
\begin{equation*}
	1-(2l+1)\exp(-2np^2\delta^2)\ge 1-(2l+1)\exp(-c L \delta^2/l) \ge 1-(2l+1)\exp(-cl^{1/2}) \ge 
	1-L^{-\beta}/2, 
\end{equation*}
provided that $ l\ge C=C(\beta) $.

Let $ \Lambda_t $ be the blocks corresponding to the connected components of the set of bad 
elements in $ I $. Clearly $ t\le B $ and if $ l_t $ is the length of $ \Lambda_t $ then $ \sum 
l_t=B $. Using \eqref{eq:Wegner} we know that with probability greater than $ 1-CA_0WL^3 T^{-1} $ 
we have $ \norm{G_\Lambda^E}\le T $, where $ \Lambda $ is any of the blocks $ \Lambda_t $ or $ 
\Lambda_L(a) $. We will choose $ T $ later.

Let $ i\in\{a\}_W $ and $ j\in 
\partial \Lambda_L(a) $. We will use the resolvent identity \eqref{eq:apx-2nd-resolvent-identity}. 
If $ a $ is good then 
\begin{equation*}
	|G_{\Lambda_L(a)}^E(i,j)| =\left| \sum_{(k,k')\in 
	\partial_{\Lambda_L(a)}\Lambda_l(a)} G_{\Lambda_l(a)}^E(i,k)G_{\Lambda_L(a)}^E(k',j) \right| 
	\le 2W \exp(-m_l l) |G_{\Lambda_L(a)}^E(\tilde k,j)|, 
\end{equation*}
for some $ \tilde k \in 
\partial_{\Lambda_L(a)} \Lambda_l(a) $. If $ a $ is bad then $ \{a\}_W \subset \Lambda_t $ and by 
our choice of $ \Lambda_t $ we know that $\tilde k_1 $ is good for any $ \tilde k\in 
\partial_{\Lambda_L(a)} \Lambda_t $ (provided $ k_1 \in I $). So if $ a $ is bad we have 
\begin{multline*}
	|G_{\Lambda_L(a)}^E(i,j)| =\left| \sum_{(k,k')\in 
	\partial_{\Lambda_L(a)}\Lambda_t} G_{\Lambda_t}^E(i,k)G_{\Lambda_L(a)}^E(k',j) \right| \le 2W T 
	|G_{\Lambda_L(a)}^E(\tilde k,j)|\\
	= 2 W T \left| \sum_{(l,l')\in 
	\partial_{\Lambda_L(a)}\Lambda_l(\tilde k_1)} G_{\Lambda_l(\tilde k_1)}^E(\tilde 
	k,l)G_{\Lambda_L(a)}^E(l',j) \right|\\
	\le 4W^2 T \exp(-m_l l) |G_{\Lambda_L(a)}^E(\tilde l,j)| = |G_{\Lambda_L(a)}^E(\tilde l,j)|, 
\end{multline*}
where we chose $ T=\exp(m_l l)/(4W^2) $. We can iterate these estimates as long as $ \tilde k_1, 
\tilde j_1\in I $. We conclude that 
\begin{equation*}
	|G_{\Lambda_L(a)}^E(i,j)|\le T (2W \exp(-m_l l))^{n_1}\le (2W \exp(-m_l l))^{n_1-2}, 
\end{equation*}
with $ n_1\ge (L-l+1-B)/(l+1)-1 $. So we have 
\begin{equation*}
	m_L=\frac{n_1-2}{L}(m_l l - \log(2W)) \ge \frac{1-5l^{-1/4}}{l+1}(m_l l -\log(2W)) \ge 
	(1-6l^{-1/4})m_l-\log(2W)/l, 
\end{equation*}
for $ l\ge C $. The conclusion follows by noting that 
\begin{equation*}
	1-CA_0WL^3 T^{-1}= 1-CA_0W^3L^3\exp(-m_l l) \ge 1-CA_0W^3L^3\exp(-l^{\epsilon} \log W ) \ge 
	1-L^{-\beta}/2, 
\end{equation*}
provided $ l\ge C=C(\beta,\epsilon,A_0) $. 
\end{proof}
\begin{theorem}
\label{thm:decay} Fix $ \beta\ge 1 $. If $ \var(\SigmaLE)\ge L\delta_0 $, with $ \delta_0\le W^{-1} 
$, for any $ \Lambda=[a,b]_W $, with $ b-a+1=L $, then there exists $ 
C_0=C_0(A_0,A_1,\beta,|E|,\norm{S}) $ such that 
\begin{equation*}
	\pr{\log|G_{\Lambda_L(a)}^E(i,j)|\le-C_0^{-1} \delta_0^{6} W^{-20}L,\,i\in\{a\}_W, j\in
	\partial \Lambda_L(a)} \ge 1-L^{-\beta}, 
\end{equation*}
for any $ L\ge C_0 \delta_0^{-12} W^{40} $ and $ a\in\Z $. 
\end{theorem}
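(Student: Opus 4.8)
The plan is to combine Theorem~\ref{thm:decay}'s hypothesis with the multi-scale machinery already developed in this section, bootstrapping from the initial-length estimate in Proposition~\ref{prop:decay-initial-length} via the inductive step in Proposition~\ref{prop:decay-induction-step}. First I would record that the standing assumption of the section, namely $\var(\log\SigmaLE)\ge L\delta_0$ for all boxes of length $L$, is exactly the hypothesis of the theorem, so all the propositions of this section are available. The strategy is then the standard van Dreifus--Klein scheme: establish exponential decay at an initial scale $L_0$, and then iterate the single-step improvement to reach all larger scales.

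The key steps, in order, are as follows. First, apply Proposition~\ref{prop:decay-initial-length} with the given $\beta$: there is $C_1=C_1(A_0,A_1,\beta,|E|,\norm{S})$ such that for any $L_0\ge C_1\delta_0^{-6}W^{20}$ we have
\begin{equation*}
	\pr{\log|G_{\Lambda_{L_0}(a)}^E(i,j)|\le-\delta_0^{1/2}L_0^{1/10}/4,\ i\in\{a\}_W,\ j\in\partial\Lambda_{L_0}(a)}\ge 1-L_0^{-\beta}.
\end{equation*}
Writing this as an estimate with rate $m_{L_0}L_0$, we have $m_{L_0}=\delta_0^{1/2}L_0^{1/10}/4\cdot L_0^{-1}=\delta_0^{1/2}L_0^{-9/10}/4$; I would fix $L_0$ to be the smallest power-of-two-type scale above $C_1\delta_0^{-6}W^{20}$ so that $m_{L_0}$ is comparable to $\delta_0^{1/2}(\delta_0^{-6}W^{20})^{-9/10}/4\simeq \delta_0^{29/5}W^{-18}/4$, which in particular is at least $L_0^{\epsilon-1}\log W$ for a suitably small fixed $\epsilon$ once $W$ is large. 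Second, choose the iteration exponent $\alpha\in[2,4]$ — say $\alpha=2$ — and define scales $L_{k+1}=L_k^\alpha$; apply Proposition~\ref{prop:decay-induction-step} repeatedly (its hypotheses on $m_{L_k}\ge L_k^{\epsilon-1}\log W$ and $L_k\ge C_0$ are preserved by the conclusion of that proposition, which is exactly why the scheme closes). Third, track the rate: the recursion $m_{L_{k+1}}\ge(1-6L_k^{-1/4})m_{L_k}-\log(2W)/L_k$ telescopes, and since $\sum_k L_k^{-1/4}<\infty$ and $\sum_k \log(2W)/L_k<\infty$ converge geometrically-fast once $L_0$ is large, we get $\inf_k m_{L_k}\ge \tfrac12 m_{L_0}-C\log(2W)/L_0\gtrsim \delta_0^{29/5}W^{-18}$, which after absorbing constants and being generous is $\ge C_0^{-1}\delta_0^6W^{-20}$. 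Fourth, handle scales $L$ that are not exactly of the form $L_k$ by a one-step sandwiching argument: given $L$, pick $k$ with $L_k\le L<L_{k+1}=L_k^2$, so $L<L_k^2$, and run the resolvent-expansion argument from the proof of Proposition~\ref{prop:decay-induction-step} once more (with $l=L_k$, $\alpha=\log L/\log L_k\in[1,2]$) — or simply invoke Proposition~\ref{prop:decay-induction-step} directly with that non-integer $\alpha$, which is permitted since the statement allows $\alpha\in[2,4]$ after noting $L=L_k^{\alpha}$ with $\alpha\le 2$; a tiny adjustment of the exponent range or an extra doubling step covers the gap. The requirement $L\ge C_0\delta_0^{-12}W^{40}$ in the theorem is there precisely to guarantee $L\ge L_0^2\ge (C_1\delta_0^{-6}W^{20})^2$, i.e.\ that at least one iteration has occurred, so that the loss from the first step is already absorbed.

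The main obstacle I expect is bookkeeping of the rate $m_L$ through the iteration and verifying that the hypothesis $m_{L_k}\ge L_k^{\epsilon-1}\log W$ of Proposition~\ref{prop:decay-induction-step} is genuinely maintained at every scale — this is the delicate point where the choice of $\epsilon$ (small, depending only on $\beta$) and the initial scale $L_0$ must be coordinated. Concretely one must check that $m_{L_0}L_0^{1-\epsilon}\ge\log W$ and that the multiplicative losses $(1-6L_k^{-1/4})$ together with additive losses $\log(2W)/L_k$ never drive $m_{L_k}L_k^{1-\epsilon}$ below $\log W$; since $m_{L_{k+1}}L_{k+1}^{1-\epsilon}=m_{L_{k+1}}L_k^{\alpha(1-\epsilon)}\ge(1-6L_k^{-1/4})m_{L_k}L_k^{\alpha(1-\epsilon)}-\log(2W)L_k^{\alpha(1-\epsilon)-1}$ and $\alpha\ge 2$ makes the factor $L_k^{\alpha(1-\epsilon)}$ much larger than $L_k^{1-\epsilon}$, this constraint actually becomes \emph{easier} to satisfy as $k$ grows, so the only real check is at $k=0$, which the lower bound $L\ge C_0\delta_0^{-12}W^{40}$ and the form of $m_{L_0}$ handle. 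The probability bookkeeping is comparatively routine: each application of Proposition~\ref{prop:decay-induction-step} costs $L_k^{-\beta}$ in failure probability, and $\sum_k L_k^{-\beta}\le 2L_0^{-\beta}\le 2L^{-\beta/2}$ is easily absorbed into the stated $1-L^{-\beta}$ by starting from $L_0^{-\beta}$ small, i.e.\ by enlarging $C_0$.

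Once the rate and probability are in hand, the conclusion is immediate: for the given $L\ge C_0\delta_0^{-12}W^{40}$ we have produced, with probability at least $1-L^{-\beta}$, the bound $\log|G_{\Lambda_L(a)}^E(i,j)|\le-m_L L$ for $i\in\{a\}_W$, $j\in\partial\Lambda_L(a)$, with $m_L\ge C_0^{-1}\delta_0^6W^{-20}$, which is exactly the statement of Theorem~\ref{thm:decay} after renaming the constant. I would close by remarking that the exponents ($\delta_0^6$, $W^{-20}$, $\delta_0^{-12}W^{40}$) are not optimized — they come from the crude powers in Proposition~\ref{prop:decay-initial-length} — and that combining with Theorem~\ref{thm:intro-var-log-Green}, where $\delta_0=(\inf_I v)^W$, yields the $\exp(CW^2)$ localization-length bound advertised in the introduction.
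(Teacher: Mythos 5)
Your proposal follows the paper's proof essentially verbatim: apply Proposition~\ref{prop:decay-initial-length} at scale $L_0\asymp\delta_0^{-6}W^{20}$ to get $m_{L_0}\asymp\delta_0^{59/10}W^{-18}$, verify $m_{L_0}\ge L_0^{\epsilon-1}\log W$, iterate Proposition~\ref{prop:decay-induction-step}, and telescope the recursion $m_{L_{k+1}}\ge(1-6L_k^{-1/4})m_{L_k}-\log(2W)/L_k$ to conclude $m_L\ge m_{L_0}/2$. Two small remarks. First, on reaching arbitrary $L\ge L_0^2$: you fix $\alpha=2$ and then patch with a ``sandwich'' step using $\alpha\in[1,2]$, which falls outside the range $[2,4]$ permitted by Proposition~\ref{prop:decay-induction-step}; the paper sidesteps this entirely by choosing the exponents $\alpha_k\in[2,4]$ adaptively so that the sequence $L_{k+1}=L_k^{\alpha_k}$ hits $L$ exactly (since $\log L/\log L_0\ge 2$, repeatedly take $\alpha_k=2$ until the remaining exponent lands in $[2,4]$, then use that as the final step), so no extra patching or re-proving of the inductive lemma is needed. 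Second, your probability bookkeeping is unnecessarily pessimistic: Proposition~\ref{prop:decay-induction-step} delivers the bound $\ge 1-L^{-\beta}$ at the new scale outright, so there is no union over scales and no sum $\sum_k L_k^{-\beta}$ to control; each application simply hands off the $1-L^{-\beta}$ guarantee to the next scale.
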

\begin{proof}
Let $ L_0=B \delta_0^{-6} W^{20} $. If $ B $ is large enough, as in 
Proposition~\ref{prop:decay-initial-length}, then 
\begin{equation*}
	\pr{\log|G_{\Lambda_{L_0}(a)}^E(i,j)|\le- m_{L_0}L_0,\,i\in\{a\}_W, j\in
	\partial \Lambda_{L_0}(a)} \ge 1-L_0^{-\beta}, 
\end{equation*}
with $ m_{L_0}=\delta_0^{1/2}L_0^{1/10}/(4L_0)=B^{-9/10}\delta_0^{59/10} W^{18}/4 $. Note that 
\begin{equation*}
	m_{L_0}\ge L_0^{1/100-1} \log W
\end{equation*}
 provided $ B $ is large enough.

Given $ L\ge L_0^2 $ we can find a sequence $ {L_k} $ such that $ L_{k+1}=L_k^{\alpha_k} $, $ 
\alpha_k\in[2,4] $ and $ L=L_{k_0} $ for some $ k_0\ge 1 $. Applying 
Proposition~\ref{prop:decay-induction-step} inductively we have 
\begin{equation*}
	m_{L_{k+1}}\ge (1-L_k^{-1/4})m_{L_k}-\log(2W)/L_k. 
\end{equation*}
Consequently we get 
\begin{equation*}
	m_L-m_{L_0}\ge -\sum_{k=0}^{\infty}\left(m_{L_k}L_k^{-1/4}+\log (2W)L_k^{-1}\right) \ge 
	-m_{L_0}/2, 
\end{equation*}
provided that $ B $ is large enough (we used the fact that $ m_{L_0}\ge m_{L_k} $ and $ 
m_{L_0}\ge L_0^{1/100-1} \log W $). The conclusion follows immediately. 
\end{proof}
\appendix

\section{Cartan's Estimate}

For convenience we include a statement of the Cartan estimate for analytic functions (see 
\cite[Theorem 11.4]{Lev-96-Lectures}).

\begin{lemma}
	Let $ \phi:\mathbb{D}\to \C $ be an analytic function such that
	\begin{equation*}
		m\le \log|\phi(0)|,\,M\ge \sup_{\zeta\in 
			\mathbb{D}}\log|\phi(\zeta)|.
	\end{equation*} 
	Then there exists an absolute constant
	$ C_0 $ such that for any $ H\gg 1 $ we have
	\begin{equation*}
		\log|\phi(\zeta)|>M-C_0H(M-m), 
	\end{equation*}
	for all $ \zeta\in \mathbb{D}_{1/6} $ except for a set of disks with the sum of the radii less 
	than $ \exp(-H) $.
\end{lemma}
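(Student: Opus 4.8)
The plan is to run the classical argument: peel off the zeros of $\phi$ inside a fixed disk with a Blaschke product, control the remaining zero-free factor by Harnack's inequality, and control the Blaschke factor from below off a small exceptional set by the Boutroux--Cartan lemma on products $\prod|\zeta-z_j|$. We may assume $\phi\not\equiv0$ and $\phi(0)\neq0$, so $m>-\infty$ and $M\ge\log|\phi(0)|\ge m$.

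First I would count the zeros. Let $z_1,\dots,z_n$ be the zeros of $\phi$ in $\mathbb{D}_{1/2}$, repeated with multiplicity. Jensen's formula on the circle of radius $3/4$ gives
\begin{multline*}
	n\log(3/2)\le\sum_{|z_j|<1/2}\log\frac{3/4}{|z_j|}\\
	=\frac{1}{2\pi}\int_0^{2\pi}\log\bigl|\phi(\tfrac34e^{i\theta})\bigr|\,d\theta-\log|\phi(0)|\le M-m,
\end{multline*}
so $n\le3(M-m)$.

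Next I would factor $\phi=Bg$, where $B=\prod_{j=1}^n\frac{(1/2)(\zeta-z_j)}{(1/2)^2-\overline{z_j}\zeta}$ is the product of the Blaschke factors of the disk $\mathbb{D}_{1/2}$ and $g=\phi/B$. Then $B$ is analytic with $|B|\le1$ on $\mathbb{D}_{1/2}$ and $|B|=1$ on $|\zeta|=1/2$, while $g$ is analytic and zero-free on $\mathbb{D}_{1/2}$. From $|g|=|\phi|$ on $|\zeta|=1/2$ and the maximum principle, $\log|g|\le M$ on $\mathbb{D}_{1/2}$; and $|B(0)|=\prod_j 2|z_j|\le1$ gives $\log|g(0)|\ge\log|\phi(0)|\ge m$. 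Since $\log|g|$ is harmonic on $\mathbb{D}_{1/2}$, Harnack's inequality applied to the non-negative harmonic function $M-\log|g|$ yields, for $|\zeta|\le1/6$,
\begin{equation*}
	M-\log|g(\zeta)|\le\frac{1/2+1/6}{1/2-1/6}\bigl(M-\log|g(0)|\bigr)=2\bigl(M-\log|g(0)|\bigr)\le2(M-m).
\end{equation*}

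Finally I would bound $B$ from below. For $|\zeta|\le1/6$ one has $\bigl|(1/2)^2-\overline{z_j}\zeta\bigr|\le1/3$, hence $|B(\zeta)|\ge\prod_j|\zeta-z_j|$; and the Boutroux--Cartan lemma (see \cite[Ch.~11]{Lev-96-Lectures}) applied to the monic polynomial $\prod_j(\zeta-z_j)$ of degree $n$, with the free parameter chosen so that the excluded disks have radii summing to $\exp(-H)$, gives $\log\prod_j|\zeta-z_j|\ge-n(H+2)$ off those disks. Combining this with the bound on $\log|g|$ and $n\le3(M-m)$,
\begin{multline*}
	\log|\phi(\zeta)|=\log|B(\zeta)|+\log|g(\zeta)|\\
	\ge M-2(M-m)-3(M-m)(H+2)\ge M-C_0H(M-m)
\end{multline*}
for all $\zeta\in\mathbb{D}_{1/6}$ outside the exceptional disks, provided $H\gg1$, with $C_0$ absolute. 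The only substantial input is the Boutroux--Cartan lemma itself, proved by a clustering/induction argument on the points $z_j$; the rest is bookkeeping, the main point to watch being the choice of the three radii $3/4$, $1/2$, $1/6$ so that Jensen's formula, Harnack's inequality, and the elementary estimate on the Blaschke factors all close with absolute constants.
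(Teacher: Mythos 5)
The paper does not prove this lemma; it cites it as Theorem~11.4 of Levin's book \cite{Lev-96-Lectures}. What you have written is a correct, self-contained proof of exactly that cited estimate, via the standard route: count the zeros of $\phi$ in $\mathbb{D}_{1/2}$ by Jensen's formula at radius $3/4$ (giving $n\le(M-m)/\log(3/2)\le 3(M-m)$), factor $\phi=Bg$ with $B$ the Blaschke product of $\mathbb{D}_{1/2}$ and $g$ zero-free so that $|B(0)|\le1$, $\log|g|\le M$, and $\log|g(0)|\ge m$, control $M-\log|g|$ on $\mathbb{D}_{1/6}$ by Harnack with the factor $\tfrac{1/2+1/6}{1/2-1/6}=2$, and bound $|B|$ below on $\mathbb{D}_{1/6}$ by $\prod_j|\zeta-z_j|$ (using $\lvert(1/2)^2-\overline{z_j}\zeta\rvert\le1/3$) together with the Boutroux--Cartan lemma with parameter $h=\exp(-H)/2$, giving radii summing to $\exp(-H)$ and the lower bound $-n(H+1+\log 2)\ge-n(H+2)$. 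The final inequality $M-2(M-m)-3(M-m)(H+2)\ge M-C_0H(M-m)$ holds for $H\ge1$ with, say, $C_0=11$, an absolute constant as required. Your bookkeeping of the three radii $3/4$, $1/2$, $1/6$ and the treatment of the degenerate cases ($\phi\equiv0$, $\phi(0)=0$, $n=0$) are all sound; the argument would serve as a drop-in proof where the paper instead defers to the reference.
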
 
The next result is a Cartan type estimate for multivariate polynomials.
\begin{lemma}
\label{lem:apx-logP-Cartan} If $ P(x)=\sum_{|\alpha|\le D} a_\alpha x^\alpha $ is a polynomial of $ 
N $ variables such that $ \max_{|\alpha|\le D} |a_\alpha|\ge 1$ and $ \sup_{\norm{z}\le 20 R_0} 
\log|P(z)|\le M_{R_0} $, for some $ R_0\ge 1 $, then there exist absolute constants $ C_0 $ and $ 
C_1 $ such that for any $ H\gg 1 $ we have 
\begin{equation*}
	\mes \{ x\in\R^N:~\norm{x}\le R_0,\log|P(x)|\le-C_0H M_{R_0} \} \le C_1^N R_0^N\exp(-H). 
\end{equation*}
\end{lemma}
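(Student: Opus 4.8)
The plan is to reduce to the one–variable Cartan estimate quoted above by slicing $P$ along coordinate lines and inducting on the number of variables $N$. After the scaling $x\mapsto R_0x$ — which preserves $\max_{|\alpha|\le D}|a_\alpha|\ge 1$ (as $R_0\ge 1$), replaces the complex bound by $\sup_{\|z\|\le 20}\log|P(z)|\le M$ with $M:=M_{R_0}$, and multiplies the measure to be estimated by $R_0^N$ — we may assume $R_0=1$. Parseval on the unit torus gives $\sup\{|P(z)|^2:|z_1|=\dots=|z_N|=1\}\ge\sum_\alpha|a_\alpha|^2\ge 1$, so $M\ge 0$, there is a point of the torus where $|P|\ge 1$, and we may assume $M\ge 1$.

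For $N=1$ the statement follows from the quoted lemma: Parseval on the unit circle yields $z_1$ with $|z_1|=1$ and $|P(z_1)|\ge 1$, and applying the quoted lemma to $w\mapsto P(z_1+12w)$ (which is bounded by $e^M$ on $\mathbb{D}$) gives $\log|P|>M-C_0HM\ge-C_0HM$ on $\{|\zeta-z_1|\le 2\}\supseteq[-1,1]$ off a union of discs of total radius $\le 12e^{-H}$, hence $\mes_1\{t\in[-1,1]:\log|P(t)|\le-C_0HM\}\le 24e^{-H}$. For the inductive step write $x=(t,x')$ with $t=x_1$ and $P(t,x')=\sum_{k=0}^D b_k(x')t^k$, let $k_0$ be the first coordinate of a multi–index at which $\max_\alpha|a_\alpha|$ is attained — so that the $(N-1)$–variable polynomial $b_{k_0}$ has a coefficient of modulus $\ge 1$ and, by a Cauchy estimate, $\sup_{\|z'\|\le 20}\log|b_{k_0}|\le M$ — and for each real $x'$ choose $t_1(x')$ on $\{|t|=1\}$ with $|P(t_1(x'),x')|\ge|b_{k_0}(x')|$ (possible since $|b_{k_0}(x')|\le\sup_{|t|=1}|P(t,x')|$). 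Applying the quoted Cartan lemma to $\zeta\mapsto P(t_1(x')+\zeta,x')$, which is $\le e^M$ on a disc of radius $18$ about $t_1(x')$, one gets for every $H'\gg 1$
\[ \mes_1\big\{t\in[-1,1]:\log|P(t,x')|\le M-C_0H'\big(M-\log|b_{k_0}(x')|\big)\big\}\le 36e^{-H'}. \]

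Splitting $[-1,1]^{N-1}$ into the set where $|b_{k_0}(x')|\ge e^{-\Gamma}$ and its complement: on the former, the displayed estimate (with $H'$ chosen suitably) bounds the $t$–sections of $\{\log|P|\le-T\}$ for an appropriate threshold $T$, while the complement is $\{x':\log|b_{k_0}(x')|\le-\Gamma\}$, a sublevel set of a polynomial in $N-1$ variables whose measure is controlled by the inductive hypothesis applied to $b_{k_0}$; Fubini then assembles the two pieces, the geometric factor $C_1^{N-1}$ becoming $C_1^N$.

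The main obstacle is to choose the inductive statement, the cutoff $\Gamma$ and the parameter $H'$ so that $C_0,C_1$ remain absolute and the $e^{-H}$ decay is not lost at each step: feeding the lower bound $-\Gamma$ for $\log|b_{k_0}(x')|$ into the term $M-m$ of Cartan's estimate tends to inflate the admissible threshold by a factor $H$ per step, and iterating over the $N$ coordinates compounds this. Controlling this balance is the heart of the argument; in particular one should expect to prove the estimate with $M_{R_0}$ replaced by $M_{R_0}+N\log(D+2)$ (or a similar expression), which is harmless for the applications in Section~\ref{sec:analysis}. The remaining points — the supremum bounds on the relevant discs, the passage from the covering by discs to Lebesgue measure, and the parameter bookkeeping — are routine.
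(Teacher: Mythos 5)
Your $N=1$ argument is correct, and the decomposition for the inductive step is set up sensibly, but the iteration issue you flag at the end is not a bookkeeping matter that can be absorbed into a harmless shift $M_{R_0}\mapsto M_{R_0}+N\log(D+2)$: it is fatal to the stated form of the lemma. To make the inductive bound $\mes\{x':\log|b_{k_0}(x')|\le-\Gamma\}\lesssim C_1^{N-1}e^{-H}$ hold with an \emph{absolute} $C_0$, you are forced to take $\Gamma\gtrsim C_0 H M$. Feeding $m\ge-\Gamma$ into the one-dimensional Cartan bound on a slice then pushes the admissible threshold down to $M-C_0H'(M+\Gamma)\approx-C_0^2HH'M$, and since you also need $H'\gtrsim H$ to retain the $e^{-H}$ decay from the sections, one step of the induction already produces a threshold of order $H^2M$. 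After $N$ coordinates you land at $H^NM$: the conclusion you can actually reach by coordinate slicing is $\mes\{\norm{x}\le R_0,\;\log|P|\le -T\}\le C_1^NR_0^N\exp\bigl(-c(T/M_{R_0})^{1/N}\bigr)$, in which the decay is weaker by an $N$-th root. Unwinding the application in Lemma~\ref{lem:apx-Cartan->integrability} and Proposition~\ref{prop:analysis-moments}, this deterioration is not tolerable: the moment bound would acquire a factor growing like $N^{Ns}$ rather than the $(CN\log R)^{s}$ the paper needs, and the resulting localization estimate would no longer be $\exp(CW^2)$. So the place you identify as ``the heart of the argument'' is precisely where the approach breaks.

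The paper avoids the recursion altogether by finding a \emph{single real} anchor point $x_0$ and slicing radially rather than along coordinates. Concretely: Cauchy's estimates for the coefficients give a complex point $z_0$, $\norm{z_0}\le 1$, with $|P(z_0)|\ge1$; then one applies the one-variable Cartan estimate to $\phi(\zeta)=P(z_0-10\zeta\Im z_0)$, observing that $z_0-10\zeta\Im z_0\in\R^N$ whenever $\Im\zeta=1/10$, and since the exceptional disks cannot cover the segment $\{\Im\zeta=1/10\}\cap\mathbb D_{1/6}$, this produces a real $x_0$ with $\log|P(x_0)|\ge-CM_{R_0}$. Because $x_0$ is shared by every real line through it, the one-variable Cartan bound along each direction $\xi_0\in S^{N-1}$ has the same lower anchor $m\ge-CM_{R_0}$ and the same upper bound $M_{R_0}$, so the threshold $-C_0HM_{R_0}$ and the decay $e^{-H}$ are uniform over directions; integrating in hyperspherical coordinates then gives the $C_1^NR_0^N$ prefactor with $C_0,C_1$ absolute. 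The step you are missing is exactly the passage from a complex anchor (your torus point) to a real one and the observation that a single real anchor serves all radial slices simultaneously.
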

\begin{proof}
The strategy is to apply the one dimensional Cartan's estimate on complex lines that will cover 
the set $ \{ \norm{x}\le R_0 \} $. For this we need to find a point $ x_0\in \R^N $ at which $ 
|P(x_0)| $ is bounded away from zero. Due to the Cauchy estimates for the derivatives of analytic 
functions one has 
\begin{equation*}
	|a_\alpha|\le \max_{\norm{z}\le 1} |P(z)|, 
\end{equation*}
for any $ \alpha $. It follows that there exists $ z_0\in\C^N $, $ \norm{z_0}\le 1 $, such that $ 
|P(z_0)|\ge1 $. We will use Cartan's estimate ``centered'' at $ z_0 $ to show the existence of 
$ x_0 $. Let $ \phi(\zeta)=P(z_0-10\zeta\Im z_0) $. This peculiar definition is motivated by the 
fact that $ z_0-10\zeta\Im z_0\in \R^N $ whenever $ \Im \zeta=1/10 $.  We have that $ 
\log|\phi(0)|\ge 0 $ and 
$ \sup_{\zeta\in\D }\log|\phi(\zeta)|\le M_{R_0} $, so Cartan's estimate guarantees, in particular, 
that there exists  $ |\zeta_0|\le 1/6 $ with $ \Im \zeta_0=1/10 $ such that 
\begin{equation*}
	\log|\phi(\zeta_0)|\ge -CM_{R_0},
\end{equation*}
with $ C\gg 1 $. We can now choose $ x_0=z_0-10\zeta_0\Im z_0 
$.

Let $ f(z)=P(x_0+12 R_0 z) $. We have that 
\begin{equation*}
	\log|f(0)|\ge -CM_{R_0},\quad \sup_{\norm{z}\le 1} \log|f(z)|\le \sup_{\norm{z}\le 20R_0} \log 
	|P(z)|\le M_{R_0}, 
\end{equation*}
and 
\begin{multline*}
	\{ x\in\R^N:~\norm{x}\le R_0,\log|P(x)|\le-CH M_{R_0} \}\\
	\subset x_0+12R_0
	\underbrace{\{x\in \R^N:~\norm{x}\le 1/6,~ \log|f(x)| \le -C H M_{R_0}\}}_{:=\mc B}. 
\end{multline*}
Let $ \xi_0\in\{x\in \R^N: \norm{x}=1\} $. By applying Cartan's estimate to
\begin{equation*}
	\varphi(\zeta)=\log|f(\zeta \xi_0)|
\end{equation*}
we get $ \int_{\R} \mathbbm{1}_{\mc B}(rx_0)dr\le C\exp(-H) $. The conclusion now 
follows by integrating $ \mathbbm{1}_{\mc B} $ in hyper-spherical coordinates. 
\end{proof}

We also illustrate how to obtain explicit integrability estimates for functions satisfying a Cartan
type estimate. 
\begin{lemma}
\label{lem:apx-Cartan->integrability} Let $ f $ be a measurable function on $ \{x\in 
\R^N:~\norm{x}\le R_0\} $, $ R_0>0 $ such that 
\begin{equation*}
	\mes\{x\in \R^N:~\norm{x}\le R_0,~\log|f(x)| \le -C_0HM_0\}\le C_1^NR_0^N \exp(-H), 
\end{equation*}
for some $ M_0\ge \sup_{\norm{x}\le R_0}\log|f(x)| $, and some absolute constants $ C_0, C_1 $. 
Given $ s>0 $ there exists an absolute constant $ C_2 $ such that if $ \mu $ is a probability 
measure with $ d\mu\le B_0^N dm $ for some $ B_0>0 $, then 
\begin{equation}
	\int_{\norm{x}\le R_0}|\log|f(x)||^s \,d\mu(x)\le \left( C_2 M_0N\max(1,\log B_0,\log 
	R_0)\right)^s, \,s\ge 1. 
\end{equation}
\end{lemma}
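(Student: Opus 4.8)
The plan is to split the integral according to the sign and magnitude of $\log|f|$, reduce the negative part to a distribution-function integral via the layer-cake formula, and then estimate that distribution function by the trivial bound $\mu(\cdot)\le1$ on a moderate range and by the hypothesized Cartan estimate on the tail. First I would write
\[
\int_{\norm{x}\le R_0}|\log|f||^s\,d\mu=\int_{\{\log|f|\ge0\}}(\log|f|)^s\,d\mu+\int_{\{\log|f|<0\}}(-\log|f|)^s\,d\mu .
\]
On $\{\log|f|\ge0\}$ we have $0\le\log|f|\le M_0$ by hypothesis, so that contribution is at most $M_0^s$. For the negative part, Fubini's theorem (the layer-cake identity) gives
\[
\int_{\{\log|f|<0\}}(-\log|f|)^s\,d\mu=\int_0^\infty s\,t^{s-1}\,\phi(t)\,dt,\qquad \phi(t):=\mu(\{\log|f|<-t\}),
\]
so everything reduces to estimating $\phi$.

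For $\phi$ I have two bounds: trivially $\phi(t)\le 1$; and, applying the Cartan hypothesis with $H=t/(C_0M_0)$ together with $d\mu\le B_0^N\,dm$, the bound $\phi(t)\le(B_0C_1R_0)^N\exp(-t/(C_0M_0))$ (valid once $t$ is above $C_0M_0$ times whatever absolute threshold $H_0$ is implicit in the statement). I would set $A:=\max(e,B_0C_1R_0)$ and $T_*:=2C_0M_0\max(H_0,N\log A)$, and split the $t$-integral at $T_*$. On $[0,T_*]$ I use $\phi(t)\le1$, contributing at most $T_*^s$; and since $N\log A\le 3\max(1,\log C_1)\,N\max(1,\log B_0,\log R_0)$ (absorbing the absolute constant $\log C_1$) and $N\ge1$, one gets $T_*\le C'M_0N\max(1,\log B_0,\log R_0)$ with $C'$ absolute. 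On $[T_*,\infty)$ the choice of $T_*$ forces $N\log A\le t/(2C_0M_0)$, hence $\phi(t)\le\exp(-t/(2C_0M_0))$, so that tail is at most $\int_0^\infty s\,t^{s-1}\exp(-t/(2C_0M_0))\,dt=(2C_0M_0)^s\,\Gamma(s+1)$.

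Adding the three pieces $M_0^s$, $T_*^s$, and $(2C_0M_0)^s\Gamma(s+1)$, and using $N\ge1$ and $\max(1,\log B_0,\log R_0)\ge1$ to absorb each into a common $s$-th power, yields the claimed bound $\bigl(C_2 M_0N\max(1,\log B_0,\log R_0)\bigr)^s$ with $C_2$ depending only on $s$ (through $\Gamma(s+1)^{1/s}\le s$) and on the absolute constants $C_0,C_1,H_0$. The one point that genuinely requires care is that the naive estimate — using the exponential bound for $\phi$ at all scales — produces the factor $(B_0C_1R_0)^N$, which is exponential in the dimension $N$ and would destroy the estimate when $s$ is small; the remedy of keeping the trivial bound $\phi(t)\le1$ on the initial interval $t\lesssim C_0M_0N\log(B_0C_1R_0)$ before switching to the exponential tail is precisely what converts that loss into the permitted polynomial factor $N^s$. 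Everything else (the substitution $H\leftrightarrow t$, the bound $\Gamma(s+1)\le s^s$, and the bookkeeping of absolute constants) is routine.
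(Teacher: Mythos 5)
Your proof is correct and follows essentially the same strategy as the paper's: a layer-cake decomposition of $\int|\log|f||^s\,d\mu$, the trivial bound $\mu(\cdot)\le 1$ on an initial range of size $\sim M_0 N\max(1,\log B_0,\log R_0)$, and the Cartan estimate combined with $d\mu\le B_0^N\,dm$ on the tail, with the split point chosen exactly so that the factor $(B_0 C_1 R_0)^N$ is absorbed into the exponential. The only cosmetic difference is that you separate the set $\{\log|f|\ge 0\}$ up front (bounding it by $M_0^s$), whereas the paper keeps $|\log|f||$ throughout and observes that once $H$ exceeds an absolute threshold the event $|\log|f||\ge CHM_0$ coincides with $\log|f|\le -CHM_0$ because $\log|f|\le M_0$; both routes lead to the identical computation.
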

\begin{proof}
\begin{multline*}
	\int_{\norm{x}\le R_0} |\log|f(x)||^s\,d\mu (x) 
	=\int_{0}^{\infty}\mu(|\log|f(x)||^s\ge\lambda,\,\norm{x}\le R_0) \,d\lambda\\
	=\int_{0}^{H_0}\mu(|\log|f(x)||^s\ge (CHM_0)^s,\norm{x}\le R_0)sC^sM_0^sH^{s-1}\,dH\\
	+\int_{H_0}^{\infty}\mu(\log|f(x)|\le -CHM_0,\norm{x}\le R_0)sC^sM_0^sH^{s-1}\,dH\\
	\le (CM_0H_0)^s+C^sM_0^sB_0^N\int_{H_0}^{\infty}\mes\{\log|f(x)|\le -CHM_0,\norm{x}\le 
	R_0\}sH^{s-1}\,dH\\
	\le (CM_0H_0)^s+C^{N+s} M_0^s B_0^NR_0^N\exp(-H_0/2) \le C^s M_0^sN^s(\max(1,\log B_0,\log 
	R_0))^s, 
\end{multline*}
Note that we chose $ H_0=CN\max(1,\log B_0,\log R_0) $. 
\end{proof}

\section{Resolvent Identities}

Recall the following fundamental facts regarding Schur's complement (see, for example, 
\cite[Theorem 1.1-2]{Zha-05-Schur}).
\begin{lemma}
\label{lem:apx-Schur-complement} Let 
\begin{equation*}
	H=
	\begin{bmatrix}
		H_0 & \Gamma_0\\
		\Gamma_1 & H_1\\
	\end{bmatrix}
	, 
\end{equation*}
where $H_0$ is a $n_0 \times n_0$ matrix and $H_1$ is an invertible $n_1\times n_1$ matrix. Let 
$H/H_1 = H_0 - \Gamma_0 H_1^{-1} \Gamma_1$. Then 
\begin{equation*}
	\det H= (\det H/H_1)(\det H_1) 
\end{equation*}
and if $ H/H_1 $ is invertible then 
\begin{equation*}
	H^{-1}=
	\begin{bmatrix}
		(H/H_1)^{-1} & -(H/H_1)^{-1}\Gamma_0 H_1^{-1}\\
		-H_1^{-1}\Gamma_1(H/H_1)^{-1} & H_1^{-1}+H_1^{-1}\Gamma_1(H/H_1)^{-1}\Gamma_0 H_1^{-1} 
	\end{bmatrix}
	. 
\end{equation*}
\end{lemma}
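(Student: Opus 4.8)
The plan is to establish all three assertions simultaneously from a single explicit block-triangular factorization of $H$. With $H/H_1 = H_0 - \Gamma_0 H_1^{-1}\Gamma_1$ as in the statement, the first step is to check by direct multiplication that
\begin{equation*}
	H = \begin{bmatrix} I & \Gamma_0 H_1^{-1} \\ 0 & I \end{bmatrix} \begin{bmatrix} H/H_1 & 0 \\ 0 & H_1 \end{bmatrix} \begin{bmatrix} I & 0 \\ H_1^{-1}\Gamma_1 & I \end{bmatrix} .
\end{equation*}
Only the top-left block is nontrivial to verify: it comes out to $(H/H_1) + \Gamma_0 H_1^{-1}\Gamma_1 = H_0$ by the definition of $H/H_1$, while the other three blocks are read off immediately. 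This identity uses nothing beyond the existence of $H_1^{-1}$.

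Granting the factorization, the determinant formula drops out at once: the two unipotent triangular factors each have determinant $1$, and the middle block-diagonal factor has determinant $(\det H/H_1)(\det H_1)$, so $\det H = (\det H/H_1)(\det H_1)$.

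For the inverse formula, note that the two unipotent factors are invertible regardless, their inverses being obtained by negating the off-diagonal block, whereas the block-diagonal factor is invertible precisely when $H/H_1$ is (since $H_1$ is invertible by hypothesis). Hence, under the extra assumption that $H/H_1$ is invertible, $H$ is invertible and
\begin{equation*}
	H^{-1} = \begin{bmatrix} I & 0 \\ -H_1^{-1}\Gamma_1 & I \end{bmatrix} \begin{bmatrix} (H/H_1)^{-1} & 0 \\ 0 & H_1^{-1} \end{bmatrix} \begin{bmatrix} I & -\Gamma_0 H_1^{-1} \\ 0 & I \end{bmatrix} .
\end{equation*}
Multiplying out --- first the rightmost two factors, then left-multiplying by the leftmost one --- reproduces exactly the $2\times 2$ block expression for $H^{-1}$ displayed in the statement.

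Since each step is a routine matrix manipulation, there is no real obstacle to overcome; the only things requiring attention are keeping the order of the three factors straight so that the off-diagonal blocks end up on the correct side, and keeping the two hypotheses distinct --- ``$H_1$ invertible'' already suffices for the factorization and the determinant identity, while ``$H/H_1$ invertible'' is needed only for the inversion formula. Alternatively one may simply invoke the corresponding statement in Zhang's book, as cited; the computation is recorded above only for completeness.
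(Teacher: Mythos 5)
Your proof is correct; the block LDU factorization of $H$ is the standard way to establish both the determinant identity and the inverse formula for the Schur complement, and your verification of the factorization and the ensuing multiplications is accurate. The paper itself does not spell out a proof but simply cites Zhang's book, which presents exactly this factorization argument, so your proposal supplies the standard argument behind the citation.
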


Next we set things up so that we can apply the previous lemma to our finite volume matrices. Let $ 
\Lambda=[a,b]\times[1,W] $ and $ \Lambda_0=[a_0,b_0]\times[1,W] $ be so that $ 
\Lambda_0\subset\Lambda $, and let $ \Lambda_0'=\Lambda\setminus\Lambda_0 $. By viewing 
$\C^\Lambda$ as $\C^{\Lambda_0} \oplus \C^{\Lambda_0'}$ one has the following matrix representation 
\begin{equation}
\label{eq:apx-H-Lambda-direct-sum} H_\Lambda = 
\begin{bmatrix}
	H_{\Lambda_0} & \Gamma_0\\
	\Gamma_0^* & H_{\Lambda_0'}
\end{bmatrix}
, 
\end{equation}
where 
\begin{equation}
\Gamma_0(i,j)=
\begin{cases}
	-1 & \text{if } |i_1-j_1|=1 \text{ and } i_2=j_2\\
	0 & \text{otherwise} 
\end{cases}
\end{equation}
(note that, implicitly, $ i\in\Lambda_0 $ and $ j\in\Lambda_0' $).

We recall the second resolvent identity (see, for example, \cite[Lemma 6.5]{Tes-09-Mathematical}) 
as used in \cite[(2.12)]{FS-83-Absence}. We have that $ H_\Lambda=H_{\Lambda_0}\oplus 
H_{\Lambda_0'} + \Gamma $, with 
\begin{equation*}
\Gamma=
\begin{bmatrix}
	0 & \Gamma_0 \\
	\Gamma_0^* & 0 
\end{bmatrix}
. 
\end{equation*}
The second resolvent identity gives us that 
\begin{equation*}
G_\Lambda^E=G_\oplus^E-G_\oplus^E\Gamma G_\Lambda^E, 
\end{equation*}
where $ G_\oplus^E=G_{\Lambda_0}^E\oplus G_{\Lambda_0'}^E $. We have that 
\begin{equation*}
\Gamma(i,j)=
\begin{cases}
	-1 & \text{if } (i,j)\in
	\partial_\Lambda \Lambda_0 \text{ or } (j,i)\in
	\partial_\Lambda \Lambda_0\\
	0 & \text{otherwise} 
\end{cases}
. 
\end{equation*}
It follows that for any $ i\in \Lambda_0 $ and $ j\in\Lambda_0' $ we have 
\begin{equation}
\label{eq:apx-2nd-resolvent-identity} G_\Lambda^E(i,j) =\sum_{(k,k')\in 
\partial_\Lambda \Lambda_0} G_{\Lambda_0}^E(i,k)G_\Lambda^E(k',j). 
\end{equation}

% % % % % % % % % % % % % % % % % % % % % % % % % % % % % % % % % % % % % % % % % % %
\bibliographystyle{amsalpha} 
\bibliography{Schroedinger}

\def\cprime{$'$}
\providecommand{\bysame}{\leavevmode\hbox to3em{\hrulefill}\thinspace}
\providecommand{\MR}{\relax\ifhmode\unskip\space\fi MR }
% \MRhref is called by the amsart/book/proc definition of \MR.
\providecommand{\MRhref}[2]{%
  \href{http://www.ams.org/mathscinet-getitem?mr=#1}{#2}
}
\providecommand{\href}[2]{#2}
\begin{thebibliography}{{Bou}13}

\bibitem[AM93]{AM-93-Localization}
Michael Aizenman and Stanislav Molchanov, \emph{Localization at large disorder
  and at extreme energies: an elementary derivation}, Comm. Math. Phys.
  \textbf{157} (1993), no.~2, 245--278. \MR{1244867 (95a:82052)}

\bibitem[{Bou}13]{Bou-13-lower}
J.~{Bourgain}, \emph{{A lower bound for the Lyapounov exponents of the random
  Schr\"odinger operator on a strip}}, J. Statist. Phys. \textbf{153} (2013),
  no.~1, 1--9.

\bibitem[CGK09]{CGK-09-Generalized}
Jean-Michel Combes, Fran{\c{c}}ois Germinet, and Abel Klein, \emph{Generalized
  eigenvalue-counting estimates for the {A}nderson model}, J. Stat. Phys.
  \textbf{135} (2009), no.~2, 201--216. \MR{2505733 (2010j:82053)}

\bibitem[FS83]{FS-83-Absence}
J{\"u}rg Fr{\"o}hlich and Thomas Spencer, \emph{Absence of diffusion in the
  {A}nderson tight binding model for large disorder or low energy}, Comm. Math.
  Phys. \textbf{88} (1983), no.~2, 151--184. \MR{696803 (85c:82004)}

\bibitem[Hoe63]{Hoe-63-Probability}
Wassily Hoeffding, \emph{Probability inequalities for sums of bounded random
  variables}, J. Amer. Statist. Assoc. \textbf{58} (1963), 13--30. \MR{0144363
  (26 \#1908)}

\bibitem[Lev96]{Lev-96-Lectures}
B.~Ya. Levin, \emph{Lectures on entire functions}, Translations of Mathematical
  Monographs, vol. 150, American Mathematical Society, Providence, RI, 1996, In
  collaboration with and with a preface by Yu. Lyubarskii, M. Sodin and V.
  Tkachenko, Translated from the Russian manuscript by Tkachenko. \MR{1400006
  (97j:30001)}

\bibitem[Sch09]{Sch-09-Eigenvector}
Jeffrey Schenker, \emph{Eigenvector localization for random band matrices with
  power law band width}, Comm. Math. Phys. \textbf{290} (2009), no.~3,
  1065--1097. \MR{2525652 (2010i:60024)}

\bibitem[Tes09]{Tes-09-Mathematical}
Gerald Teschl, \emph{Mathematical methods in quantum mechanics}, Graduate
  Studies in Mathematics, vol.~99, American Mathematical Society, Providence,
  RI, 2009, With applications to Schr{\"o}dinger operators. \MR{2499016
  (2010h:81002)}

\bibitem[vDK89]{vDK-89-new}
Henrique von Dreifus and Abel Klein, \emph{A new proof of localization in the
  {A}nderson tight binding model}, Comm. Math. Phys. \textbf{124} (1989),
  no.~2, 285--299. \MR{1012868 (90k:82056)}

\bibitem[Zha05]{Zha-05-Schur}
Fuzhen Zhang (ed.), \emph{The {S}chur complement and its applications},
  Numerical Methods and Algorithms, vol.~4, Springer-Verlag, New York, 2005.
  \MR{2160825 (2006e:15001)}

\end{thebibliography}

\end{document}